\providecommand{\U}[1]{\protect\rule{.1in}{.1in}}
\newtheorem{theorem}{Theorem}
\newtheorem{corollary}{Corollary}
\newtheorem{definition}{Definition}
\newtheorem{lemma}{Lemma}
\newtheorem{proposition}{Proposition}
\newtheorem{remark}{Remark}
\newenvironment{proof}[1][Proof]{\noindent\textbf{#1.} }{\ \rule{0.5em}{0.5em}}
\def\Tr{\operatorname{Tr}}
\def\sq{\operatorname{sq}}
\def\SEP{\operatorname{SEP}}
\def\Ent{E}
\def\PPT{\operatorname{PPT}}
\def\supp{\operatorname{supp}}
\def\LOCC{\operatorname{LOCC}}
\def\T{\operatorname{T}}
\def\>{\rangle}
\def\<{\langle}
\def\({\left(}
\def\){\right)}
\def\[{\left[}
\def\]{\right]}
\def\V{\Vert}
\def\id{\operatorname{id}}
\let\oldemptyset\emptyset
\let\emptyset\varnothing
\newcommand{\ket}[1]{\left|{#1}\right\rangle}
\newcommand{\bra}[1]{\left\langle{#1}\right|}
\newcommand{\norm}[1]{\left\Vert{#1}\right\Vert}
\newcommand{\mc}[1]{\mathcal{#1}}
\newcommand{\wt}[1]{\widetilde{#1}}
\date{\today}
\begin{document}
\widetext
\title{ Entanglement and secret-key-agreement capacities of bipartite~quantum~interactions and read-only~memory~devices}

\author{Siddhartha Das}

\email{sidddas@ulb.ac.be} 

\affiliation{Hearne Institute for Theoretical Physics, Department of Physics and Astronomy, Louisiana State University, Baton Rouge, Louisiana, 70803, USA}
\affiliation{Centre for Quantum Information \& Communication (QuIC), \'{E}cole polytechnique de Bruxelles,   Universit\'{e} libre de Bruxelles, Brussels, B-1050, Belgium}

\author{Stefan B{\"a}uml}\email{stefan.baeuml@icfo.eu}

\affiliation{ICFO-Institut de Ciencies Fotoniques, The Barcelona Institute of Science and Technology, Av. Carl Friedrich Gauss 3, 08860 Castelldefels (Barcelona), Spain.}

\affiliation{QuTech, Delft University of Technology, Lorentzweg 1, 2628 CJ Delft, Netherlands}

\affiliation{NTT Basic Research Laboratories and NTT Research Center for Theoretical Quantum Physics, NTT Corporation, 3-1 Morinosato-Wakamiya, Atsugi, Kanagawa 243-0198, Japan}

\author{Mark M. Wilde}

\email{mwilde@lsu.edu}

\affiliation{Hearne Institute for Theoretical Physics, Department of Physics and Astronomy, Louisiana State University, Baton Rouge, Louisiana, 70803, USA}

\affiliation{Center for Computation and Technology, Louisiana State University, Baton Rouge, Louisiana 70803, USA}

\date{\today}

\begin{abstract}
A bipartite quantum interaction corresponds to the most general quantum interaction that can occur between two quantum systems in the presence of a bath. In this work, we determine bounds on the capacities of bipartite interactions for entanglement generation and secret key agreement between two quantum systems. Our upper bound on the entanglement generation capacity of a bipartite quantum interaction is given by a quantity called the bidirectional max-Rains information. Our upper bound on the secret-key-agreement capacity of a bipartite quantum interaction is given by a related quantity  called the bidirectional max-relative entropy of entanglement. We also derive tighter upper bounds on the capacities of bipartite interactions obeying certain symmetries. Observing that reading of a memory device is a particular kind of bipartite quantum interaction, we leverage our bounds from the bidirectional setting to deliver bounds on the capacity of a task that we introduce, called private reading of a wiretap memory cell. Given a set of point-to-point quantum wiretap channels, the goal of private reading is for an encoder to form codewords from these channels, in order to establish secret key with a party who controls one input and one output of the channels, while a passive eavesdropper has access to one output of the channels. We derive both lower and upper bounds on the private reading capacities of a wiretap memory cell. We then extend these results to determine achievable rates for the generation of entanglement between two distant parties who have coherent access to a controlled point-to-point channel, which is a particular kind of bipartite interaction.
\end{abstract}

\maketitle
\tableofcontents

\section{Introduction}
In general, any two-body quantum system of interest can be in contact with a bath, and part of the composite system may be inaccessible to observers possessing these systems. The effective interaction between given two constituent systems in the presence of the bath is known as a bipartite quantum interaction. It is well known that a closed quantum system evolves according to a unitary transformation \cite{Dbook81,SCbook95}.

Let $U^{\hat{H}}_{A'B'E'\to ABE}$ denote a unitary transformation associated to a Hamiltonian $\hat{H}$, which governs the underlying interaction between a two-body quantum system and a bath. Here $A'B'$ and $E'$ denote system labels for a two-body quantum system of interest and the inaccessible bath, respectively, at an initial time, and $AB$ and $E$ denote system labels for a two-body quantum system of interest and the inaccessible bath, respectively, at a final time when the evolution is complete.
The individual input systems $A'$, $B'$, and $E'$ and the respective output systems $A$, $B$, and $E$ can have different dimensions. Initially, in the absence of an interaction Hamiltonian $\hat{H}$, the bath is taken to be in a pure state and the systems of interest have no correlation with the bath; i.e., the state of the composite system $A'B'E'$ is of the form $\omega_{A'B'}\otimes \ket{0}\!\bra{0}_{E'}$, where $\omega_{A'B'}$ and $\ket{0}\!\bra{0}_{E'}$ are density operators of the systems $A'B'$ and $E'$, respectively. Under the action of the Hamiltonian $\hat{H}$, the state of the composite system transforms as
\begin{equation}\label{eq:bi-u}
\rho_{ABE}=U^{\hat{H}}(\omega_{A'B'}\otimes \ket{0}\!\bra{0}_{E'})(U^{\hat{H}})^\dag.
\end{equation}
Since the system $E$ in \eqref{eq:bi-u} is inaccessible, the evolution of the systems of interest is noisy in general. The noisy evolution of the bipartite system $A'B'$ under the action of Hamiltonian $\hat{H}$ is represented by a completely positive, trace-preserving (CPTP) map \cite{Sti55}, called a bipartite quantum channel:
\begingroup
\allowdisplaybreaks[0]
\begin{multline}\label{eq:bipartite-int}
\mc{N}^{\hat{H}}_{A'B'\to AB}(\omega_{A'B'})=\\
\Tr_{E}\{U^{\hat{H}}(\omega_{A'B'}\otimes \ket{0}\!\bra{0}_{E'})(U^{\hat{H}})^\dag\},
\end{multline}
\endgroup
where system $E$ represents inaccessible degrees of freedom. In particular, when the Hamiltonian $\hat{H}$ is such that there is no interaction between the composite system $A'B'$ and the bath $E'$, and $A'B'\simeq AB$, then $\mc{N}^{\hat{H}}$ corresponds to a bipartite unitary, i.e., $\mc{N}^{\hat{H}}(\cdot)=U^{\hat{H}}_{A'B'\to AB}(\cdot)(U^{\hat{H}}_{A'B'\to AB})^\dag$.  

In an information-theoretic setting, a bipartite quantum channel $\mc{N}_{A'B'\to AB}$ is also called \textit{bidirectional quantum channel} when system pairs $A',A$ and $B',B$ belong to two separate parties (cf.~\cite{BHLS03}).  

Depending on the kind of bipartite quantum interaction, there may be an increase, decrease, or no change in the amount of entanglement \cite{PV07,HHHH09} of a bipartite state after undergoing a bipartite interaction. As entanglement is one of the fundamental and intriguing quantum phenomena~\cite{EPR35,S35}, determining the entangling abilities of bipartite quantum interactions is pertinent. 

In this work, we focus on two different information-processing tasks relevant for bipartite quantum interactions, the first being entanglement distillation~\cite{BBPS96,BBP+97,Rai99} and the second secret key agreement~\cite{D05,DW05,HHHO05,HHHO09}. Entanglement distillation is the task of generating a maximally entangled state, such as the singlet state, when two separated quantum systems undergo a bipartite interaction. Whereas, secret key agreement is the task of extracting maximal classical correlation between two separated systems, such that it is independent of the state of the bath system, which an eavesdropper could possess. Both of these tasks are of practical interest: distilling pure maximally entangled states is useful for fundamental tasks such as teleportation \cite{BBC+93}, super-dense coding \cite{PhysRevLett.69.2881}, and distributed quantum computation, while distilled secret key is useful for private communication when combined with the one-time pad. Thus, it is of interest to know fundamental limitations for these tasks for the design of actual protocols, and this is what our bounds provide.

In an information-theoretic setting, a bipartite interaction between classical systems was first considered in \cite{Sha61} in the context of communication; therein, a bipartite interaction was called a two-way communication channel. In the quantum domain, bipartite unitaries have been widely considered in the context of their entangling ability, applications for interactive communication tasks, and the simulation of bipartite Hamiltonians in distributed quantum computation   \cite{BDEJ95,ZZF00,EJPP00,BRV00,NC00,CLP01,CDKL01,BHLS03,CLV04,JMZL17,DSW17}. These unitaries form the simplest model of non-trivial interactions in many-body quantum systems and have been used as a model of scrambling in the context of quantum chaotic systems~\cite{SS08b,HQRY16,DHW16}, as well as for the internal dynamics of a black hole~\cite{HP07} in the context of the information-loss paradox~\cite{Haw76}. More generally, \cite{CLL06} developed the model of a bipartite interaction or two-way quantum communication channel.  Bounds on the rate of entanglement generation in open quantum systems undergoing time evolution have also been discussed for particular classes of quantum dynamics~\cite{Bra07,DKSW17}.  

The maximum rate at which a particular task can be accomplished by allowing the use of a bipartite interaction a large number of times, is equal to the capacity of the interaction for the task. The entanglement generating capacity quantifies the maximum rate of entanglement that can be generated  from a bipartite interaction. Various capacities of a general bipartite unitary evolution were formalized in \cite{BHLS03}. Later, various capacities of a general two-way channel were discussed in \cite{CLL06}. The entanglement generating capacities  of bipartite unitaries for different communication protocols have been widely discussed in the  literature~\cite{ZZF00,LHL03,BHLS03,HL05,LSW09,WSM17,CY16}. Also, prior to our work here, it was an open question to find a non-trivial, computationally efficient upper bound on the entanglement generating capacity of a bipartite quantum interaction. Another natural direction left open in prior work is to determine other information-processing tasks for bipartite quantum interactions, beyond those discussed previously \cite{BHLS03,CLL06}.

In this paper, we determine bounds on the capacities of bipartite interactions for entanglement generation and secret key agreement. Observing that the read-out task of memory devices is a particular kind of bipartite quantum interaction (cf.~\cite{BRV00,Pir11}), we leverage our bounds from the bidirectional setting to deliver bounds on the capacity of a task that we introduce here, called private reading of a memory cell.  We derive both lower and upper bounds on the capacities of private reading protocols. We then extend these results to determine achievable rates for the generation of entanglement between two distant parties who have coherent access to a controlled point-to-point channel, which is a particular kind of bipartite interaction. 

Private reading is a quantum information-processing task in which 
a classical message from an encoder to a reader is delivered in a \emph{read-only} memory device. The message is encoded in such a way that a reader can reliably decode it, while a passive eavesdropper recovers no information about it. This protocol can be used for secret key agreement between two trusted parties. A physical model of a read-only memory device involves encoding the classical message using a \emph{memory cell}, which is a set of point-to-point quantum wiretap channels. Note that a point-to-point quantum wiretap channel is a channel that takes one input and produces two outputs. The reading task is restricted to information-storage devices that are read-only, such as a CD-ROM. One feature of a read-only memory device is that a message is stored for a fairly long duration if it is kept safe from tampering. One can read information from these devices many times without the eavesdropper learning about the encoded message. 

The strong converse bounds on the bidirectional quantum and private capacities of bidirectional channels presented in this work have also been stated, in abbreviated form and without proofs, in our companion paper \cite{bauml2018fundamental}. There we also compute the bounds on the bidirectional quantum capacity for several examples. In the current paper, we present a more comprehensive discussion of the results, including proofs and derivations, as well as a detailed overview of the underlying concepts. The present article also includes additional results on private reading, namely the computation of the non-adaptive private reading capacity of a wiretap memory cell presented in Theorem~\ref{thm:n-a-priv-read}, an alternative converse bound on the non-adaptive private reading capacity of an isometric memory cell presented in Proposition~\ref{prop:EsqBound}, and the study of entanglement generation from a coherent memory cell or controlled isometry, presented in Section~\ref{sec:coh-read}.

The organization of our paper is as follows. We set  notation and review basic definitions in Section~\ref{sec:review}. In Section~\ref{sec:ent-dist}, we derive a strong converse upper bound on the rate at which entanglement can be distilled from a bipartite quantum interaction. This bound is given by an information quantity that we call the bidirectional max-Rains information $R^{2\to 2}_{\max}({\mc{N}})$ of a bidirectional channel $\mc{N}$. The bidirectional max-Rains information is the solution to a semi-definite program and is thus efficiently computable. In Section~\ref{sec:priv-key}, we derive a strong converse upper bound on the rate at which a secret key can be distilled from a bipartite quantum interaction. This bound is given by a related information quantity that we call the bidirectional max-relative entropy of entanglement $E^{2\to 2}_{\max}(\mc{N})$ of a bidirectional channel $\mc{N}$. In Section~\ref{sec:ent-mes-sim}, we derive upper bounds on the entanglement generation and secret key agreement capacities of bidirectional PPT- and teleportation-simulable channels, respectively. Our upper bounds on the capacities of such channels depend only on the entanglement of the resource states with which these bidirectional channels can be simulated. In Section~\ref{sec:priv-read}, we introduce a protocol called private reading, whose goal is to generate a secret key between an encoder and a reader. We derive both lower and upper bounds on the private reading capacities. In Section~\ref{sec:coh-read}, we introduce a protocol whose goal is to generate entanglement between two parties who have coherent access to a memory cell, and we give a lower bound on the entanglement generation capacity in this setting. Finally, we  conclude in Section~\ref{sec:dis} with a summary and some open directions.

\section{Preliminaries}
\label{sec:review}

We begin by establishing some notation and reviewing  definitions needed in the rest of the paper. 

\subsection{States, channels, isometries, separable states, and positive partial transpose}

Let $\mc{B}(\mc{H})$ denote the 
algebra of bounded linear operators acting on a Hilbert space $\mc{H}$. Throughout this paper, we restrict our development to finite-dimensional Hilbert spaces. The
subset of $\mc{B}(\mc{H})$ 
containing all positive semi-definite operators is denoted by $\mc{B}_+(\mc{H})$. We denote the identity operator as $I$ and the identity superoperator as $\id$. The Hilbert space 
of a quantum system $A$ is denoted by $\mc{H}_A$.
The state of a quantum system $A$ is represented by a density operator $\rho_A$, which is a positive semi-definite operator with unit trace.
Let $\mc{D}(\mc{H}_A)$ denote the set of density operators, i.e., all elements $\rho_A\in \mc{B}_+(\mc{H}_A)$ such that $\Tr\{\rho_A\}=1$. The Hilbert space for a composite system $LA$ is denoted as $\mc{H}_{LA}$ where $\mc{H}_{LA}=\mc{H}_L\otimes\mc{H}_A$. The density operator of a composite system $LA$ is defined as $\rho_{LA}\in \mc{D}(\mc{H}_{LA})$, and the partial trace over $A$ gives the reduced density operator for system $L$, i.e., $\Tr_A\{\rho_{LA}\}=\rho_L$ such that $\rho_L\in \mc{D}(\mc{H}_L)$. The notation $A^n:= A_1A_2\cdots A_n$ indicates a composite system consisting of $n$ subsystems, each of which is isomorphic to the Hilbert space $\mc{H}_A$. A pure state $\psi_A$ of a system $A$ is a rank-one density operator, and we write it as $\psi_A=|\psi\>\<\psi|_A$
for $|\psi\>_A$ a unit vector in $ \mc{H}_A$. A purification of a density operator $\rho_A$ is a pure state $\psi^\rho_{EA}$
such that $\Tr_E\{\psi^\rho_{EA}\}=\rho_A$, where $E$ is called the purifying system.
The maximally mixed state is denoted by
$\pi_A := I_A / \dim(\mathcal{H}_A) \in\mc{D}\(\mc{H}_A\)$. The fidelity of $\tau,\sigma\in\mc{B}_+(\mc{H})$ is defined as $F(\tau,\sigma)=\norm{\sqrt{\tau}\sqrt{\sigma}}_1^2$ \cite{U76},  with the trace norm $\norm{X}_1=\Tr\sqrt{X^\dagger X}$ for $X\in\mc{B}(\mc{H})$.

The adjoint $\mc{M}^\dagger:\mc{B}(\mc{H}_B)\to\mc{B}(\mc{H}_A)$ of a linear map $\mc{M}:\mc{B}(\mc{H}_A)\to\mc{B}(\mc{H}_B)$ is the unique linear map such that
	\begin{equation}
	\label{eq-adjoint}
 \<Y_B,\mc{M}(X_A)\> =\<\mc{M}^\dag(Y_B),X_A\>,
	\end{equation}
	for all $X_A\in\mc{B}(\mc{H}_A)$ and $Y_B\in\mc{B}(\mc{H}_B)$,
	where $\<C,D\>=\Tr\{C^\dag D\}$ is the Hilbert-Schmidt inner product. An isometry $U:\mc{H}\to\mc{H}'$ is a
linear map such that $U^{\dag}U=I_{\mathcal{H}}$. 

The evolution of a quantum state is described by a quantum channel. A quantum channel $\mc{M}_{A\to B}$ is a completely positive, trace-preserving (CPTP) map $\mc{M}:\mc{B}_+(\mc{H}_A)\to \mc{B}_+(\mc{H}_B)$. A memory cell $\{\mc{M}^x\}_{x\in\mc{X}}$ is defined as a set of quantum channels $\mc{M}^x$, for all $x\in\mc{X}$, where $\mc{X}$ is a finite alphabet, and $\mc{M}^{x}:\mc{B}_+(\mc{H}_{A})\to\mc{B}_+(\mc{H}_{B})$.

Let $U^\mc{M}_{A\to BE}$ denote an isometric extension of a quantum channel $\mc{M}_{A\to B}$, which by definition means that for all $\rho_A\in \mc{D}\(\mc{H}_A\)$,
\begin{equation}
\Tr_E\left\{U^\mc{M}_{A\to BE}\rho_A\left(U^\mc{M}_{A\to BE}\right)^\dagger\right\}=\mathcal{M}_{A\to B}(\rho_A) ,
\end{equation}
along with the following conditions
for $U^\mc{M}$ to be 
an isometry: 
\begin{equation}
(U^\mc{M})^\dagger U^\mc{M}=I_{A}.
\label{eq:isometry-condition}
\end{equation}
As a consequence of \eqref{eq:isometry-condition}, we conclude that $U^\mc{M}(U^\mc{M})^\dagger=\Pi_{BE}$, where $\Pi_{BE}$ is a projection onto a subspace of the Hilbert space $\mc{H}_{BE}$. A complementary channel $\widehat{\mc{M}}_{A\to E}$ of $\mc{M}_{A\to B}$ is defined as
\begin{equation}
\widehat{\mc{M}}_{A\to E}(\rho_A):=\Tr_{B}\left\{U^\mc{M}_{A\to BE}\rho_A(U^\mc{M}_{A\to BE})^\dag\right\},
\end{equation}
for all $\rho_A\in \mc{D}\(\mc{H}_A\)$.

The Choi isomorphism represents a well known duality between channels and states. Let $\mc{M}_{A\to B}$ be a quantum channel, and let $\left|\Upsilon\right>_{L:A}$ denote the following maximally entangled vector:
\begin{equation}
|\Upsilon\>_{L:A}\coloneqq \sum_{i}|i\>_L|i\>_A ,
\end{equation}
where $\dim(\mc{H}_L)=\dim(\mc{H}_A)$, and $\{|i\>_L\}_i$ and $\{|i\>_A\}_i$ are fixed orthonormal bases. We extend this notation to multiple parties with a given bipartite cut as
\begin{equation}
|\Upsilon\>_{L_AL_B:AB}\coloneqq |\Upsilon\>_{L_A:A}\otimes |\Upsilon\>_{L_B:B}.
\end{equation}
The maximally entangled state $\Phi_{LA}$ is denoted as
\begin{equation}
\Phi_{LA}:=\frac{1}{|A|}\ket{\Upsilon}\!\bra{\Upsilon}_{LA},
\end{equation}
where $|A|=\dim(\mc{H}_A)$.
 The Choi operator for a channel $\mc{M}_{A\to B}$ is defined as
\begin{equation}
J^\mc{M}_{LB}:=(\id_L\otimes\mc{M}_{A\to B})\(|\Upsilon\>\<\Upsilon|_{LA}\),
\end{equation}
where $\id_L$ denotes the identity map on $L$. For $A'\simeq A$, the following identity holds
\begin{equation}\label{eq:choi-sim}
\<\Upsilon|_{A':L}(\rho_{SA'}\otimes J^\mc{M}_{LB})|\Upsilon\>_{A':L}=\mc{M}_{A\to B}(\rho_{SA}),
\end{equation}
where $A'\simeq A$. The above identity can be understood in terms of a post-selected variant \cite{HM04} of the quantum teleportation protocol \cite{BBC+93}. Another identity that holds is
\begin{equation}\label{eq:12}
\<\Upsilon|_{L:A} [Q_{SL}\otimes I_A]
|\Upsilon\>_{L:A}=\Tr_L\{Q_{SL}\},
\end{equation}
for an operator $Q_{SL}\in \mc{B}(\mc{H}_S\otimes\mc{H}_L)$. 

For a fixed basis $\{|i\>_B\}_i$, the partial transpose $\T_B$ on system $B$ is the following map:
\begin{multline}
\(\id_A\otimes \T_B\)(Q_{AB})\\ =\sum_{i,j}\(I_A\otimes |i\>\<j|_B\) Q_{AB}\( I_A\otimes |i\>\<j|_B\),\, \label{eq:PT-1}
\end{multline}
where $Q_{AB}\in\mc{B}(\mc{H}_{A}\otimes\mc{H}_{B})$. 

Furthermore, it holds that 
\begin{equation}\label{eq:Ttrick}
\(Q_{SL}\otimes I_A\)|\Upsilon\>_{L:A}=\(\T_A\(Q_{SA}\)\otimes I_L\)|\Upsilon\>_{L:A}.
\end{equation}

We note that the partial transpose is self-adjoint, i.e., $\T_B=\T^\dag_B$ and is also involutory:
\begin{equation}
\T_B\circ\T_B=I_B.
\end{equation} 
The following identity also holds
\begin{equation}
\T_{L}(\ket{\Upsilon}\!\bra{\Upsilon}_{LA})=\T_{A}(\ket{\Upsilon}\!\bra{\Upsilon}_{LA}).
\label{eq:PT-last}
\end{equation} 

Let $\SEP(A\!:\!B)$ denote the set of all separable states $\sigma_{AB}\in\mc{D}(\mc{H}_A\otimes\mc{H}_B)$, which are states that can be written as
\begin{equation}
\sigma_{AB}=\sum_{x}p(x)\omega^x_A\otimes\tau^x_B,
\end{equation}
where $p(x)$ is a probability distribution, $\omega^x_A \in \mc{D}(\mc{H}_A)$, and $\tau^x_B\in\mc{D}(\mc{H}_B)$ for all $x$. This set
is closed under the action of the partial transpose maps $\T_A$ and $\T_B$ \cite{HHH96,Per96}. Generalizing the set of separable states, we define the set $\PPT (A\!:\!B)$ of all bipartite states $\rho_{AB}$ that remain positive after the action of the partial transpose $\T_B$. A state $\rho_{AB}\in\PPT(A\!:\!B)$ is also called a PPT (positive under partial transpose) state. We can define an even more general set of positive semi-definite operators \cite{AdMVW02} as follows:
\begin{equation}
\PPT'(A\!:\!B)\coloneqq \{\sigma_{AB}:\ \sigma_{AB}\geq 0\land \norm{\T_B(\sigma_{AB})}_1\leq 1\}. 
\end{equation} 
We then have the containments $\SEP\subset \PPT\subset \PPT' $. A bipartite quantum channel $\mc{P}_{A'B'\to AB}$ is a completely PPT-preserving channel if the map $\T_{B}\circ\mc{P}_{A'B'\to AB}\circ\T_{B'}$ is a quantum channel \cite{Rai99,Rai01} (see also \cite{CVGG17}). A bipartite quantum channel $\mc{P}_{A'B'\to AB}$ is completely PPT-preserving if and only if its Choi state is a PPT state \cite{Rai01}, i.e.,
\begin{equation}
\frac{J^{\mc{P}}_{L_AL_B:AB}}{ |L_A L_B|}\in \PPT(L_A A\!:\!BL_B),
\end{equation}
where
\begin{equation}
\frac{J^{\mc{P}}_{L_AL_B:AB}}{ |L_A L_B|} =  \mc{P}_{A'B'\to AB}(\Phi_{L_AA'}\otimes\Phi_{B'L_B}).
\end{equation}
Any local operations and classical communication (LOCC) channel is a completely PPT-preserving channel \cite{Rai99,Rai01}. For a formal definition of LOCC channels, see \cite{chitambar2014everything}.

\subsection{Channels with symmetry}\label{sec:symmetry}

Consider a finite group $G$. For every $g\in G$, let $g\to U_A(g)$ and $g\to V_B(g)$ be projective unitary representations of $g$ acting on the input space $\mc{H}_A$ and the output space $\mc{H}_B$ of a quantum channel $\mc{M}_{A\to B}$, respectively. A quantum channel $\mc{M}_{A\to B}$ is covariant with respect to these representations if the following relation is satisfied \cite{Hol02,H13book}:
\begin{equation}\label{eq:cov-condition}
\mc{M}_{A\to B}\!\(U_A(g)\rho_A U_A^\dagger(g)\)=V_B(g)\mc{M}_{A\to B}(\rho_A)V_B^\dagger(g),
\end{equation}
for all $ \rho_A\in\mc{D}(\mc{H}_A)$ and $ g\in G$.
\begin{definition}[Covariant channel \cite{H13book}]\label{def:covariant}
A quantum channel is covariant if it is covariant with respect to a group $G$ which has a representation $U(g)$, for all $g\in G$, on $\mc{H}_A$ that is a unitary one-design; i.e., the map  $\frac{1}{|G|}\sum_{g\in G}U(g)(\cdot)U^\dagger(g)$ always outputs the maximally mixed state for all input states. 
\end{definition}

For an isometric channel $\mc{U}^\mc{M}_{A\to BE}$ extending the above channel $\mc{M}_{A\to B}$, there exists a unitary representation $W_E(g)$ acting on the environment Hilbert space $\mc{H}_E$ \cite{H13book}, such that
for all $g\in G$,
\begin{multline}\label{eq:iso-covariant}
\mc{U}^\mc{M}_{A\to BE}\!\({U_A(g)\rho_AU^\dagger_A(g)}\)
= \\ \(V_B(g)\otimes W_E(g)\)\(\mc{U}^\mc{M}_{A\to BE}\(\rho_A\)\)\(V^\dagger_B(g)\otimes W^\dagger_E(g)\).
\end{multline}
We restate this as the following lemma:
\begin{lemma}[\cite{H13book}]\label{thm:cov-hol}
Suppose that a channel $\mathcal{M}_{A\rightarrow B}$ is covariant  with respect to a group $G$. 
For an isometric extension $U_{A\rightarrow BE}^{\mathcal{M}}$ of
$\mathcal{M}_{A\rightarrow B}$, there is a set of unitaries $\{W_{E}^{g}\}_{g\in G}$ such
that the following covariance holds for all $g \in G$:
\begin{equation}
U_{A\rightarrow BE}^{\mathcal{M}}U_{A}^{g}=\left(  V_{B}^{g}\otimes W_{E}%
^{g}\right)  U_{A\rightarrow BE}^{\mathcal{M}}.
\end{equation}
\end{lemma}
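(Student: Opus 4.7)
The plan is to deduce the claim from the essential uniqueness of isometric (Stinespring) dilations of a quantum channel. The two sides of \eqref{eq:cov-condition} give two expressions for the same rotated channel $\rho_A \mapsto V_B^g\, \mc{M}_{A\to B}(\rho_A)\, (V_B^g)^\dag$; lifting each expression to an isometric dilation with the common output Hilbert space $\mc{H}_B\otimes\mc{H}_E$ produces two isometries that must be related by a unitary acting only on $\mc{H}_E$, and this is precisely the $W_E^g$ whose existence I wish to assert.

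Concretely, fix $g\in G$ and set
\begin{equation}
W_1 \coloneqq U^{\mc{M}}_{A\to BE}\, U_A^g, \qquad W_2 \coloneqq \bigl(V_B^g \otimes I_E\bigr)\, U^{\mc{M}}_{A\to BE}.
\end{equation}
Both are isometries from $\mc{H}_A$ into $\mc{H}_B\otimes \mc{H}_E$. Using the defining property $\Tr_E\bigl[U^{\mc{M}}(\cdot)(U^{\mc{M}})^\dag\bigr] = \mc{M}_{A\to B}(\cdot)$ and the channel-level covariance \eqref{eq:cov-condition}, a direct computation gives
\begin{equation}
\Tr_E\bigl[W_i\, \rho_A\, W_i^\dag\bigr] \;=\; V_B^g\, \mc{M}_{A\to B}(\rho_A)\, (V_B^g)^\dag \qquad \text{for } i\in\{1,2\}.
\end{equation}
Hence $W_1$ and $W_2$ are two isometric dilations of the same channel sharing the same environment space.

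Now I invoke the standard fact that any two isometric dilations of a channel that map into a common $\mc{H}_B\otimes \mc{H}_E$ are related by an isometry $W$ on $\mc{H}_E$ alone, i.e.\ $W_2 = (I_B\otimes W)\, W_1$. In the finite-dimensional setting adopted throughout the paper, $W$ can be upgraded to a unitary $W_E^g$ on $\mc{H}_E$: if $U^{\mc{M}}_{A\to BE}$ is already a minimal dilation this is automatic, and otherwise one extends the partial isometry arbitrarily from the orthogonal complement of its domain to the orthogonal complement of its range inside $\mc{H}_E$, the two complements having the same dimension by a rank argument. Rewriting $W_2 = (I_B\otimes W_E^g)\, W_1$ then yields
\begin{equation}
U^{\mc{M}}_{A\to BE}\, U_A^g \;=\; \bigl(V_B^g \otimes W_E^g\bigr)\, U^{\mc{M}}_{A\to BE},
\end{equation}
which is the desired covariance. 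The only genuinely non-routine step is the isometry-to-unitary promotion in the last paragraph; everything else amounts to manipulating the defining identities of $U^{\mc{M}}$ and unpacking \eqref{eq:cov-condition}.
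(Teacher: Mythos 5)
Your proposal is correct, but it proves the lemma by a different route than the paper. The paper's proof is an explicit Kraus-level computation: it fixes a Kraus representation $\{L^j\}_j$ of $\mc{M}$, observes that covariance makes $\{V_B^{g\dag}L^jU_A^g\}_j$ a second Kraus representation of the same channel, invokes the unitary matrix $(w^g_{jk})$ relating two equal-size Kraus representations, builds the \emph{canonical} isometric extension $U^{\mc{M}}=\sum_j L^j\otimes\ket{j}_E$, and defines $W_E^g$ directly from $(w^g_{jk})$ — so $W_E^g$ is manifestly unitary and the covariance identity is verified by a short algebraic chain. You instead work one level up, comparing the two Stinespring isometries $U^{\mc{M}}U_A^g$ and $(V_B^g\otimes I_E)U^{\mc{M}}$ of the rotated channel and invoking essential uniqueness of dilations, then promoting the intertwining partial isometry to a unitary on $\mc{H}_E$. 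The two arguments rest on the same underlying fact (the unitary freedom in Kraus representations \emph{is} dilation uniqueness), but they buy different things: the paper's version is fully constructive and sidesteps any minimality discussion at the cost of proving the statement only for a canonical extension built from a chosen Kraus set, whereas yours applies verbatim to an arbitrary isometric extension into $\mc{H}_B\otimes\mc{H}_E$, at the cost of the extra (but correct) step extending the partial isometry to a unitary, which is where finite-dimensionality and the equality of the two ranks (both equal to the Kraus rank of $\mc{M}$, unchanged under unitary pre- and post-composition) are genuinely used. Your checks that both maps are isometries and dilate $\rho\mapsto V_B^g\mc{M}(\rho)V_B^{g\dag}$ are exactly right, and the extension argument goes through because the environment components of $U^{\mc{M}}U_A^g$ lie inside the initial space of the partial isometry, so the relation is unaffected by how you extend on the orthogonal complement.
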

For convenience, we provide a proof of this interesting lemma in Appendix~\ref{app:cov-lemma}.

\begin{definition}[Teleportation-simulable \cite{BDSW96,HHH99}]\label{def:tel-sim}
A channel $\mc{M}_{A\to B}$ is teleportation-simulable with associated resource state $\omega_{L_AB}$ if there exists an LOCC channel $\mc{L}_{L_AAB\to B}$, such that for all input states $\rho_{A}\in\mc{D}\(\mc{H}_{A}\)$, the following equality holds
\begin{equation}
\mc{M}_{A\to B}\(\rho_A\)=\mc{L}_{L_AA B\to B}\(\rho_{A}\otimes\omega_{L_AB}\).
\label{eq:TP-simul}
\end{equation}
(A particular example of an LOCC channel is  a generalized teleportation protocol \cite{Wer01}).
\end{definition}

One can find the defining equation \eqref{eq:TP-simul} explicitly stated as \cite[Eq.~(11)]{HHH99}.
 All covariant channels, as given in  Definition~\ref{def:covariant}, are teleportation-simulable with respect to the resource state $\mathcal{M}_{A\to B}(\Phi_{L_AA})$~\cite{CDP09}.

\begin{definition}[PPT-simulable \cite{KW17}]
A channel $\mc{M}_{A\to B}$ is PPT-simulable with associated resource state $\omega_{L_AB}$ if there exists a completely PPT-preserving channel  $\mc{P}_{L_AAB\to B}$ (acting on systems $L_AA:B$ and where the transposition map is with respect to the system $B$) such that for all input states $\rho_{A}\in\mc{D}\(\mc{H}_{A}\)$, the following equality holds
\begin{equation}
\mc{M}_{A\to B}\(\rho_A\)=\mc{P}_{L_AA B\to B}\(\rho_{A}\otimes\omega_{L_AB}\).
\end{equation}
\end{definition}

\begin{definition}[Jointly covariant memory cell \cite{DW17}]\label{def:cov-cell}
A set $\overline{\mc{M}}_{\mc{X}}=\{\mc{M}^x_{A\to B}\}_{x\in\mc{X}}$ of quantum channels is  jointly covariant if there exists a group $G$ such that for all $x\in\mc{X}$, the channel $\mc{M}^x$ is a covariant channel with respect to the group $G$ (cf., Definition~\ref{def:covariant}).
\end{definition}

\begin{remark}[\cite{DW17}]
Any jointly covariant memory cell $\overline{\mc{M}}_{\mc{X}}=\{\mc{M}^x_{A\to B}\}_{x}$ is jointly teleportation-simulable with respect to the set $\{\mc{M}^x_{A\to B}(\Phi_{L_AA})\}_{x}$ of resource states.
\end{remark}

\subsection{Bipartite interactions and controlled channels}\label{sec:rev-control-channels}

Let us consider a bipartite quantum interaction between systems $X'$ and $B'$, generated by a Hamiltonian $\hat{H}_{X'B'E'}$, where $E'$ is a bath system. Suppose that the Hamiltonian is time independent, having the following form:
\begin{equation}\label{eq:c-ham}
\hat{H}_{X'B'E'}\coloneqq \sum_{x\in\mc{X}}\ket{x}\!\bra{x}_{X'}\otimes \hat{H}^x_{B'E'},
\end{equation}
where $\{\vert x \rangle\}_{x\in\mc{X}}$ is an orthonormal basis for the Hilbert space of system $X'$ and $\hat{H}^x_{B'E'}$ is a Hamiltonian for the composite system $B'E'$. Then, the evolution of the composite system $X'B'E'$  is given by the following controlled unitary:
\begin{equation}
U_{\hat{H}}(t)\coloneqq\sum_{x\in\mc{X}}\ket{x}\!\bra{x}_{X'}\otimes \exp\!\left(-\frac{\iota}{\hslash}\hat{H}^x_{B'E'}t\right),
\end{equation}
where $t$ denotes time. Suppose that the systems $B'$ and $E'$ are not correlated before the action of Hamiltonian $\hat{H}^x_{B'E'}$ for each $x\in\mc{X}$. Then, the evolution of the system $B'$ under the interaction $\hat{H}^x_{B'E'}$ is given by a quantum channel $\mc{M}^x_{B'\to B}$ for all $x$.

For some distributed quantum computing and information processing tasks where the controlling system $X$ and input system $B'$ are jointly accessible, the following bidirectional channel is relevant:
\begin{equation}\label{eq:bi-ch-mc-1}
\mc{N}_{X'B'\to XB}(\cdot)\coloneqq \sum_{x\in\mc{X}}\ket{x}\!\bra{x}_X\otimes\mc{M}^x_{B'\to B}\(\bra{x}(\cdot)\ket{x}_{X'}\).
\end{equation}
In the above, $X'$ is a controlling system that determines which evolution from the set $\{\mc{M}^x\}_{x\in\mc{X}}$ takes place on input system $B'$. 
In particular, when $X'$ and $B'$ are spatially separated and the input states for the system $X'B'$ are considered to be in product state, the noisy evolution for such constrained interactions is given by the following bidirectional channel:
\begin{multline}\label{eq:bi-ch-mc-2}
\mc{N}_{X'B'\to XB}(\sigma_{X'}\otimes\rho_{B'})\\
 \coloneqq \sum_{x\in\mc{X}}\bra{x}\sigma_{X'}\ket{x}_{X'}\ket{x}\!\bra{x}_X\otimes\mc{M}^x_{B'\to B}(\rho_{B'}).
\end{multline}
This kind of bipartite interaction is in one-to-one correspondence with the notion of a memory cell from the context of quantum reading \cite{BRV00,Pir11}. There, a memory cell is a collection $\{\mc{M}^x_{B'\to B}\}_x$ of quantum channels. One party chooses which channel is applied to another party's input system $B'$ by selecting a classical letter $x$. Clearly, the description in
\eqref{eq:bi-ch-mc-1} is a fully quantum description of this process, and thus we see that quantum reading can be understood as the use of a particular kind of bipartite interaction. 

\subsection{Entropies and information}

The quantum entropy of a density operator $\rho_A$ is defined as \cite{Neu32}
\begin{equation}
S(A)_\rho:= S(\rho_A)= -\Tr[\rho_A\log_2\rho_A].
\end{equation}
The conditional quantum entropy $S(A\vert B)_\rho$ of a density operator $\rho_{AB}$ of a composite system $AB$ is defined as
\begin{equation}
S(A\vert B)_\rho \coloneqq S(AB)_\rho-S(B)_\rho.
\end{equation}
The coherent information $I(A\> B)_{\rho}$ of a density operator $\rho_{AB}$  of a composite system $AB$ is defined as \cite{SN96}
\begin{equation}\label{eq:coh-info}
I(A\rangle B)_{\rho} \coloneqq - S(A\vert B)_\rho = S(B)_{\rho}-S(AB)_{\rho}.
\end{equation}
The quantum relative entropy of two quantum states is a measure of their distinguishability. For $\rho\in\mc{D}(\mc{H})$ and $\sigma\in\mc{B}_+(\mc{H})$, it is defined as~\cite{Ume62} 
\begin{equation}
D(\rho\V \sigma):= \left\{ 
\begin{tabular}{c c}
$\Tr\{\rho[\log_2\rho-\log_2\sigma]\}$, & $\supp(\rho)\subseteq\supp(\sigma)$\\
$+\infty$, &  otherwise.
\end{tabular} 
\right.
\end{equation}
The quantum relative entropy is non-increasing under the action of positive trace-preserving maps \cite{MR15}, which is the statement that $D(\rho\V\sigma)\geq D(\mc{M}(\rho)\V\mc{M}{(\sigma)})$ for any two density operators $\rho$ and $\sigma$ and a positive trace-preserving map $\mc{M}$ (this inequality applies to quantum channels as well \cite{Lin75}, since every completely positive map is also a positive map by definition).

The quantum mutual information $I(L;A)_\rho$ is a measure of correlations between quantum systems $L$ and $A$ in a state $\rho_{LA}$. It is defined as
\begin{align}
I(L;A)_\rho &:=\inf_{\sigma_A \in\mathcal{D}(\mathcal{H}_A)}D(\rho_{LA}\Vert\rho_L\otimes\sigma_A)\\ 
&=S(L)_\rho+S(A)_\rho-S(LA)_\rho.
\end{align}
The conditional quantum mutual information $I(L;A\vert C)_\rho$ of a tripartite density operator $\rho_{LAC}$ is defined as
\begin{align}
I(L;A\vert C)_\rho &:=S(L\vert C)_\rho+S(A\vert C)_\rho-S(LA\vert C)_\rho.
\end{align}
It is known that quantum entropy, quantum mutual information, and conditional quantum  mutual information are all non-negative quantities (see \cite{LR73,LR73b}). 

The following Alicki--Fannes--Winter (AFW) inequality gives uniform continuity bounds for conditional entropy:
\begin{lemma}[\cite{AF04,Win16}]\label{thm:AFW}
Let $\rho_{LA},\sigma_{LA}\in\mc{D}(\mc{H}_{LA})$. Suppose that $\frac{1}{2}\left\Vert \rho_{LA}-\sigma_{LA}\right\Vert_1\leq\varepsilon$, where $\varepsilon\in\[0,1\]$. Then
\begin{equation}
\left\vert S(A|L)_\rho-S(A|L)_\sigma\right\vert \leq 2\varepsilon\log_2\dim(\mc{H}_A)+g(\varepsilon),
\end{equation}
where 
\begin{equation}\label{eq:g-2}
g(\varepsilon)\coloneqq (1+\varepsilon)\log_2(1+\varepsilon)-\varepsilon\log_2\varepsilon,
\end{equation} and $\dim(\mc{H}_A)$ denotes the dimension of the Hilbert space~$\mc{H}_A$.

Suppose that system $L$ is a classical register $X$ such that $\rho_{XA}$ and $\sigma_{XA}$ are classical--quantum (cq) states of the following form:
\begin{align}
\rho_{XA}&=\sum_{x\in\mc{X}}p_X(x)|x\>\<x|_X\otimes\rho^x_A,\\
 \sigma_{XA}& =\sum_{x\in\mc{X}}q_X(x)|x\>\<x|_X\otimes\sigma^x_A,
\end{align}
where $\{|x\>_X\}_{x\in\mc{X}}$ forms an orthonormal basis and for all $ x\in\mc{X},\ \rho^x_A,\sigma^x_A\in\mc{D}(\mc{H}_A)$. Then the following inequalities hold
\begin{align}
\left\vert S(X|A)_\rho-S(X|A)_\sigma\right\vert &\leq \varepsilon\log_2\dim(\mc{H}_X)+g(\varepsilon),\\
\left\vert S(A|X)_\rho-S(A|X)_\sigma\right\vert &\leq \varepsilon\log_2\dim(\mc{H}_A)+g(\varepsilon).
\end{align}
\end{lemma}

\subsection{Generalized divergence and generalized relative entropies}

A quantity is called a generalized divergence \cite{PV10,SW12} if it satisfies the following monotonicity (data-processing) inequality for all density operators $\rho$ and $\sigma$ and quantum channels $\mc{N}$:
\begin{equation}\label{eq:gen-div-mono}
\mathbf{D}(\rho\Vert \sigma)\geq \mathbf{D}(\mathcal{N}(\rho)\Vert \mc{N}(\sigma)).
\end{equation}
As a direct consequence of the above inequality, any generalized divergence satisfies the following two properties for an isometry $U$ and a state~$\tau$ \cite{WWY14}:
\begin{align}
\mathbf{D}(\rho\Vert \sigma) & = \mathbf{D}(U\rho U^\dag\Vert U \sigma U^\dag),\label{eq:gen-div-unitary}\\
\mathbf{D}(\rho\Vert \sigma) & = \mathbf{D}(\rho \otimes \tau \Vert \sigma \otimes \tau).\label{eq:gen-div-prod}
\end{align}
One can define a generalized mutual information for a quantum state $\rho_{RA}$ as
\begin{equation}
I_{\mathbf{D}}(R;A)_\rho :=\inf_{\sigma_A\in\mc{D}(\mc{H}_A)}\mathbf{D}(\rho_{RA}\Vert \rho_R\otimes\sigma_A).
\end{equation}

The sandwiched R\'enyi relative entropy  \cite{MDSFT13,WWY14} is denoted as $\wt{D}_\alpha(\rho\V\sigma)$  and defined for
$\rho\in\mc{D}(\mc{H})$, $\sigma\in\mc{B}_+(\mc{H})$, and  $\forall \alpha\in (0,1)\cup(1,\infty)$ as
\begin{equation}\label{eq:def_sre}
\wt{D}_\alpha(\rho\V \sigma):= \frac{1}{\alpha-1}\log_2 \Tr\left\{\left(\sigma^{\frac{1-\alpha}{2\alpha}}\rho\sigma^{\frac{1-\alpha}{2\alpha}}\right)^\alpha \right\} ,
\end{equation}
but it is set to $+\infty$ for $\alpha\in(1,\infty)$ if $\supp(\rho)\nsubseteq \supp(\sigma)$.
The sandwiched R\'enyi relative entropy obeys the following ``monotonicity in $\alpha$'' inequality \cite{MDSFT13}: for  $\alpha,\beta\in(0,1)\cup(1,\infty)$,
\begin{equation}\label{eq:mono_sre}
\wt{D}_\alpha(\rho\V\sigma)\leq \wt{D}_\beta(\rho\V\sigma) \quad \text{ if }  \quad \alpha\leq \beta.
\end{equation}
The following lemma states that the sandwiched R\'enyi relative entropy $\wt{D}_\alpha(\rho\V\sigma)$ is a particular generalized divergence for certain values of $\alpha$. 
\begin{lemma}[\cite{FL13}]
\label{lem:DP-sandRenyi}
Let $\mc{N}:\mc{B}_+(\mc{H}_A)\to \mc{B}_+(\mc{H}_B)$ be a quantum channel   and let $\rho_A\in\mc{D}(\mc{H}_A)$ and $\sigma_A\in \mc{B}_+(\mc{H}_A)$. Then, for all $ \alpha\in \[1/2,1\)\cup (1,\infty)$
\begin{equation}
\wt{D}_\alpha(\rho\V\sigma)\geq \wt{D}_\alpha(\mc{N}(\rho)\V\mc{N}(\sigma)).
\end{equation}
\end{lemma}

See \cite{Wilde2018a} for an alternative proof of Lemma~\ref{lem:DP-sandRenyi}, and \cite{Bei13} for an even different proof when $\alpha > 1$.

In the limit $\alpha\to 1$, the sandwiched R\'enyi relative entropy $\wt{D}_\alpha(\rho\V\sigma)$ converges to the quantum relative entropy \cite{MDSFT13,WWY14}:
\begin{equation}\label{eq:mono_renyi}
\lim_{\alpha\to 1}\wt{D}_\alpha(\rho\V\sigma):= D_1(\rho\V\sigma)=D(\rho\V\sigma).
\end{equation}
In the limit $\alpha\to \infty$, the sandwiched R\'enyi relative entropy $\wt{D}_\alpha(\rho\V\sigma)$ converges to the max-relative entropy \cite{MDSFT13}, which is defined as \cite{D09,Dat09}
\begin{equation}\label{eq:max-rel}
D_{\max}(\rho\V\sigma)=\inf\{\lambda:\ \rho \leq 2^\lambda\sigma\},
\end{equation}
and if $\supp(\rho)\nsubseteq\supp(\sigma)$ then $D_{\max}(\rho\V\sigma)=\infty$.

Another generalized divergence is the $\varepsilon$-hypothesis-testing divergence \cite{BD10,WR12},  defined as
\begin{multline}
D^\varepsilon_h\!\(\rho\Vert\sigma\) 
\coloneqq\\ -\log_2\inf_{\Lambda}\{\Tr\{\Lambda\sigma\}:\ 0\leq\Lambda\leq I \wedge\Tr\{\Lambda\rho\}\geq 1-\varepsilon\},
\end{multline}
for $\varepsilon\in[0,1]$, $\rho\in\mc{D}(\mc{H})$, and $\sigma\in\mc{B}_+(\mc{H})$.

\subsection{Entanglement measures}
 
Let $\Ent(A;B)_\rho$ denote an entanglement measure \cite{HHHH09} that is evaluated for a bipartite state~$\rho_{AB}$. 
The basic property of an entanglement measure is that it should be an LOCC monotone \cite{HHHH09}, i.e.,  non-increasing under the action of an LOCC channel. 
Given such an entanglement measure, one can define the entanglement $\Ent(\mc{M})$ of a channel $\mc{M}_{A\to B}$ in terms of it by optimizing over all pure, bipartite states that can be input to the channel:
\begin{equation}\label{eq:ent-mes-channel}
\Ent(\mc{M})=\sup_{\psi_{LA}} \Ent(L;B)_\omega,
\end{equation}
where $\omega_{LB}=\mc{M}_{A\to B}(\psi_{LA})$. Due to the properties of an entanglement measure and the well known Schmidt decomposition theorem, it suffices to optimize over pure states $\psi_{LA}$ such that $L\simeq A$ (i.e., one does not achieve a higher value of
$\Ent(\mc{M})$
 by optimizing over mixed states with unbounded reference system $L$). In an information-theoretic setting, the entanglement $\Ent(\mc{M})$ of a channel~$\mc{M}$ characterizes the amount of entanglement that a sender $A$ and receiver $B$ can generate by using the channel if they do not share entanglement prior to its use.

Alternatively, one can consider the amortized entanglement $\Ent_A(\mc{M})$ of a channel $\mc{M}_{A\to B}$ as the following optimization~\cite{KW17} (see also \cite{LHL03,BHLS03,CM17,DDMW17,RKB+17}):
\begin{multline}\label{eq:ent-arm}
\Ent_A(\mc{M})
 \coloneqq \\ \sup_{\rho_{L_AAL_B}} \left[\Ent(L_A;BL_B)_{\tau}-\Ent(L_AA;L_B)_{\rho}\right],
\end{multline}
where $\tau_{L_ABL_B}=\mc{M}_{A\to B}(\rho_{L_AAL_B})$ and $\rho_{L_AAL_B}$ is a state. The supremum is with respect to all states $\rho_{L_AAL_B}$ and the systems $L_A,L_B$ are finite-dimensional but could be arbitrarily large. Thus, in general, $\Ent_A(\mc{M})$ need not be computable. The amortized entanglement quantifies the net amount of entanglement that can be generated by using the channel $\mc{M}_{A\to B}$, if the sender and the receiver are allowed to begin with some initial entanglement in the form of the state $\rho_{L_AAL_B}$. That is, $\Ent(L_AA;L_B)_\rho$ quantifies the entanglement of the initial state $\rho_{L_AAL_B}$, and $\Ent(L_A;BL_B)_{\tau}$ quantifies the entanglement of the final state produced after the action of the channel. 

The Rains relative entropy of a state $\rho_{AB}$ is defined as \cite{Rai01,AdMVW02}
\begin{equation}\label{eq:rains-inf-state}
R(A;B)_\rho\coloneqq \min_{\sigma_{AB}\in \PPT'(A:B)} D (\rho_{AB}\Vert \sigma_{AB}),
\end{equation}
and it is monotone non-increasing under the action of a completely PPT-preserving quantum channel $\mc{P}_{A'B'\to AB}$, i.e.,
\begin{equation}
R(A';B')_\rho\geq R(A;B)_\omega,
\end{equation}
where $\omega_{AB}=\mc{P}_{A'B'\to AB}(\rho_{A'B'})$. The sandwiched Rains relative entropy of a state $\rho_{AB}$ is defined as follows \cite{TWW17}:  
\begin{equation}\label{eq:alpha-rains-inf-state}
\widetilde{R}_{\alpha}(A;B)_\rho\coloneqq \min_{\sigma_{AB}\in \PPT'(A:B)} \widetilde{D}_{\alpha} (\rho_{AB}\Vert \sigma_{AB}).
\end{equation}
The max-Rains relative entropy of a state $\rho_{AB}$ is defined as \cite{WD16b}
\begin{equation}
R_{\max}(A;B)_\rho\coloneqq \min_{\sigma_{AB}\in \PPT'(A:B)} D_{\max} (\rho_{AB}\Vert \sigma_{AB}).
\end{equation}
The max-Rains information of a quantum channel $\mc{M}_{A\to B}$ is defined as \cite{WFD17}
\begin{equation}
R_{\max}(\mc{M})\coloneqq \max_{\phi_{SA}}R_{\max} (S;B)_\omega,
\label{eq:max-Rains-channel}
\end{equation}
where $\omega_{SB}=\mc{M}_{A\to B}(\phi_{SA})$ and $\phi_{SA}$ is a pure state, with $\dim(\mc{H}_S)=\dim(\mc{H}_A)$. The amortized max-Rains information of a channel $\mc{M}_{A\to B}$, denoted as $R_{\max,A}(\mc{M})$, is defined by replacing $\Ent$ in \eqref{eq:ent-arm} with the max-Rains relative entropy $R_{\max}$ \cite{BW17}. It was shown in \cite{BW17} that amortization does not enhance the max-Rains information of an arbitrary point-to-point channel, i.e.,
\begin{equation}
R_{\max,A}(\mc{M})=R_{\max}(\mc{M}).
\end{equation}  

Recently, in \cite[Eq.~(8)]{WD16a} (see also \cite{WFD17}), the max-Rains relative entropy of a state $\rho_{AB}$ was expressed as 
\begin{equation}\label{eq:rains-w}
R_{\max}(A;B)_{\rho}=\log_2 W(A;B)_{\rho}, 
\end{equation}
where $W(A;B)_{\rho}$ is the solution to the following semi-definite program:
\begin{align}
\textnormal{minimize}\ &\ \Tr\{C_{AB}+D_{AB}\}\nonumber\\
\textnormal{subject to}\ &\ C_{AB}, D_{AB}\geq 0,\nonumber\\
   &\ \T_{B} (C_{AB}-D_{AB})\geq \rho_{AB}. \label{eq:rains-state-sdp}
\end{align}
Similarly, in \cite[Eq.~(21)]{WFD17}, the max-Rains information of a quantum channel $\mc{M}_{A\to B}$ was expressed as 
\begin{equation}\label{eq:rains-omega}
R_{\max}(\mc{M})=\log_2 \Gamma (\mc{M}),
\end{equation}
where $\Gamma(\mc{M})$ is the solution to the following semi-definite program:
\begin{align}
\textnormal{minimize}\ &\ \norm{\Tr_B\{V_{SB}+Y_{SB}\}}_{\infty}\nonumber\\
\textnormal{subject to}\ &\ Y_{SB}, V_{SB}\geq 0,\nonumber\\
& \ \T_B(V_{SB}-Y_{SB})\geq J^\mc{M}_{SB}.\label{eq:rains-channel-sdp}
\end{align}

The sandwiched relative entropy of entanglement of a bipartite state $\rho_{AB}$ is defined as \cite{WTB16} 
\begin{equation}\label{eq:rel-ent-state}
\widetilde{E}_{\alpha}(A;B)_{\rho}\coloneqq\min_{\sigma_{AB}\in\SEP(A:B)}\widetilde{D}_{\alpha}(\rho_{AB}\Vert\sigma_{AB}).
\end{equation} 
In the limit $\alpha\to 1$, $\widetilde{E}_{\alpha}(A;B)_{\rho}$ converges to the relative entropy of entanglement \cite{VP98}, i.e.,
\begin{align}\label{eq:rel-ent-state-1}
\lim_{\alpha\to 1}\widetilde{E}_{\alpha}(A;B)_{\rho} &=E_R(A;B)_{\rho}\\
& \coloneqq \min_{\sigma_{AB}\in\SEP(A:B)}D(\rho_{AB}\Vert\sigma_{AB}).
\end{align} The max-relative entropy of entanglement \cite{D09,Dat09} is defined for a bipartite state $\rho_{AB}$ as
\begin{equation}\label{eq:Emax}
E_{\max}(A;B)_{\rho}\coloneqq \min_{\sigma_{AB}\in\SEP(A:B)}D_{\max}(\rho_{AB}\Vert\sigma_{AB}).
\end{equation}  
The max-relative entropy of entanglement $E_{\max}(\mc{M})$ of a channel $\mc{M}_{A\to B}$ is defined as in \eqref{eq:ent-mes-channel}, by replacing $\Ent$ with $E_{\max}$ \cite{CM17}. It was shown in \cite{CM17} that amortization does not increase max-relative entropy of entanglement of a channel $\mc{M}_{A\to B}$, i.e.,
\begin{equation}
E_{\max,A}(\mc{M})=E_{\max}(\mc{M}).
\end{equation}

The squashed entanglement of a state $\rho_{AB}\in\mc{D}(\mc{H}_{AB})$ is defined as \cite{CW04} (see also \cite{Tuc99,Tuc02}):
\begin{multline}
E_{\sq}(A;B)_\rho \coloneqq
\frac{1}{2}\inf_{\omega_{ABE}\in\mc{D}\(\mc{H}_{ABE}\)} \{ I(A;B|E)_\omega:
\\
 \Tr_E\{\omega_{ABE}\}=\rho_{AB} \}.\label{eq:Esq}
\end{multline}
In general, the extension system $E$ is finite-dimensional, but can be arbitrarily large. We can directly infer from the above definition that $E_{\sq}(B;A)_\rho=E_{\sq}(A;B)_\rho$ for any $\rho_{AB}\in\mc{D}(\mc{H}_{AB})$. We can similarly define the squashed entanglement $E_{\sq}(\mc{M})$ of a channel $\mc{M}_{A\to B}$ \cite{TGW14}, and it is known that amortization does not increase the squashed entanglement of a channel \cite{TGW14}:
\begin{equation}
E_{\sq,A}(\mc{M}) = E_{\sq}(\mc{M}).
\end{equation}
For an overview of the various entanglement measures used in this work, see Table~\ref{tb:e-meas}.

\begin{table*}
\begin{tabular}{ c || c| c| c| c| c c c}
  $E$   & $E(\rho_{AB})$ & $E(\mc{M}_{A\to B})$ & $E_A(\mc{M}_{A\to B})$ & $E^{2 \to 2}(\mc{N}_{A'B'\to AB})$ & $E^{2 \to 2}_A(\mc{N}_{A'B'\to AB})$ \\  \hline\hline
  $\tilde{R}_\alpha$   & Eq. \eqref{eq:alpha-rains-inf-state} &  via Eq. \eqref{eq:ent-mes-channel}&  via Eq. \eqref{eq:ent-arm} &  &   \\ \hline
  $R$   & Eq. \eqref{eq:rains-inf-state} &  via Eq. \eqref{eq:ent-mes-channel}&  via Eq. \eqref{eq:ent-arm} &   &   \\ \hline
    $R_{\max}$   & Eq. \eqref{eq:rains-w} &  Eq. \eqref{eq:max-Rains-channel} &  via Eq. \eqref{eq:ent-arm} &  Definition \ref{def:bi-max-rains} &  Eq. \eqref{eq:ent-locc-a} \\ \hline
      $\tilde{E}_\alpha$   &Eq. \eqref{eq:rel-ent-state} &  via Eq. \eqref{eq:ent-mes-channel} &  via Eq. \eqref{eq:ent-arm}&   & \\ \hline 
  $E_R$   & Eq. \eqref{eq:rel-ent-state-1} &  via Eq. \eqref{eq:ent-mes-channel} & via Eq. \eqref{eq:ent-arm} & &  \\ \hline
    $E_{\max}$   &Eq.  \eqref{eq:Emax} &  via Eq. \eqref{eq:ent-mes-channel} &  via Eq. \eqref{eq:ent-arm}  &   Definition \ref{def:bi-max-rel}  & Eq.  \eqref{eq:ent-locc-b} \\\hline
        $E_{\sq}$   & Eq. \eqref{eq:Esq} & via Eq. \eqref{eq:ent-mes-channel}  & via Eq. \eqref{eq:ent-arm} &   & \\
\end{tabular}
\caption{Overview of where one can find the definitions of various entanglement measures for states $\rho_{AB}$, point-to-point channels $\mc{M}_{A\to B}$, bidirectional channels $\mc{N}_{A'B'\to AB}$, and their amortized versions.}
\label{tb:e-meas}
\end{table*}

\subsection{Private states and privacy test}\label{sec:rev-priv-states}

Private states \cite{HHHO05,HHHO09} are an essential notion in any discussion of secret key distillation in quantum information, and we review their basics here.

A tripartite key state $\gamma_{K_AK_BE}$ contains $\log_2 K$ bits of secret key, shared between systems $K_A$ and $K_B$, such that $|K_A|=|K_B|=K$, and protected from an eavesdropper possessing system $E$, if there exists a state $\sigma_E$ and a projective measurement channel $\mc{M}(\cdot)=\sum_{i}\ket{i}\!\bra{i}(\cdot)\ket{i}\!\bra{i}$, where $\{\ket{i}\}_i$ is an orthonormal basis,  such that
\begin{multline}
\(\mc{M}_{K_A}\otimes\mc{M}_{K_B}\)(\gamma_{K_AK_BE})\\
 =\frac{1}{K}\sum_{i=0}^{K-1}\ket{i}\!\bra{i}_{K_A}\otimes\ket{i}\!\bra{i}_{K_B}\otimes\sigma_E.
\end{multline}
The systems $K_A$ and $K_B$ are maximally classically correlated, and the key value is uniformly random and independent of the system $E$. 

A bipartite private state
$\gamma_{S_AK_AK_BS_B}$
 containing $\log_2 K$ bits of secret key   has the following form:
\begin{multline}
\gamma_{S_AK_AK_BS_B} = \\
U^t_{S_AK_AK_BS_B}(\Phi_{K_AK_B}\otimes\theta_{S_AS_B})(U^t_{S_AK_AK_BS_B})^\dag,
\end{multline}
where $\Phi_{K_AK_B}$ is a maximally entangled state of Schmidt rank $K$, $U^t_{S_AK_AK_BS_B}$ is a \textquotedblleft twisting\textquotedblright unitary of the form 
\begin{equation}
U^t_{S_AK_AK_BS_B}\coloneqq\sum_{i,j=0}^{K-1}\ket{i}\!\bra{i}_{K_A}\otimes\ket{j}\!\bra{j}_{K_B}\otimes U^{ij}_{S_AS_B},
\end{equation}
with each $U^{ij}_{S_AS_B}$ a unitary, and $\theta_{S_AS_B}$ is a state. The systems $S_A,S_B$ are called \textquotedblleft shield\textquotedblright systems because they, along with the twisting unitary, can help to protect the key in systems $K_A$ and $K_B$ from any party possessing a purification of $\gamma_{S_AK_AK_BS_B}$.

Bipartite private states and tripartite key states are equivalent \cite{HHHO05,HHHO09}. That is, for $\gamma_{S_AK_AK_BS_B}$ a bipartite private state and $\gamma_{S_AK_AK_BS_BE}$ some purification of it, $\gamma_{K_AK_BE}$ is a tripartite key state. Conversely, for any tripartite key state $\gamma_{K_AK_BE}$ and any purification $\gamma_{S_AK_AK_BS_BE}$ of it, $\gamma_{S_AK_AK_BS_B}$ is a bipartite private state. 

A state $\rho_{K_AK_BE}$ is an $\varepsilon$-approximate tripartite key state if there exists a tripartite key state $\gamma_{K_AK_BE}$ such that 
\begin{equation}
F(\rho_{K_AK_BE},\gamma_{K_AK_BE})\geq 1-\varepsilon,
\end{equation}
where $\varepsilon\in[0,1]$. Similarly, a state $\rho_{S_AK_AK_BS_B}$ is an $\varepsilon$-approximate bipartite private state if there exists a bipartite private state $\gamma_{S_AK_AK_BS_B}$ such that 
\begin{equation}
F(\rho_{S_AK_AK_BS_BE},\gamma_{S_AK_AK_BS_BE})\geq 1-\varepsilon.
\end{equation}

If $\rho_{S_AK_AK_BS_B}$ is an $\varepsilon$-approximate bipartite key state with $K$ key values, then Alice and Bob hold an $\varepsilon$-approximate tripartite key state with $K$ key values, and the converse is true as well \cite{HHHO05,HHHO09}.

A privacy test corresponding to $\gamma_{S_AK_AK_BS_B}$ (a $\gamma$-privacy test) is defined as the following dichotomic measurement \cite{WTB16}:
\begin{equation}
\{\Pi^\gamma_{S_AK_AK_BS_B}, I_{S_AK_AK_BS_B}-\Pi^\gamma_{S_AK_AK_BS_B}\},
\end{equation}
where
\begin{multline}
\Pi^\gamma_{S_AK_AK_BS_B}
\coloneqq \\ U^t_{S_AK_AK_BS_B}(\Phi_{K_AK_B}\otimes I_{S_AS_B})(U^t_{S_AK_AK_BS_B})^\dag
\end{multline}
 and $U^t_{S_AK_AK_BS_B}$ is the twisting unitary discussed earlier. Let $\varepsilon\in[0,1]$ and $\rho_{S_AK_AK_BS_B}$ be an $\varepsilon$-approximate bipartite private state. The probability for $\rho_{S_AK_AK_BS_B}$ to pass the $\gamma$-privacy test is never smaller than $1-\varepsilon$ \cite{WTB16}:
\begin{equation}
\Tr\{\Pi^\gamma_{S_AK_AK_BS_B}\rho_{S_AK_AK_BS_B}\}\geq 1-\varepsilon. 
\end{equation}
For a state $\sigma_{S_AK_AK_BS_B}\in\SEP(S_AK_A\!:\!K_BS_B)$, the probability of passing any $\gamma$-privacy test is never greater than $\frac{1}{K}$ \cite{HHHO09}:
\begin{equation}
\Tr\{\Pi^\gamma_{S_AK_AK_BS_B}\sigma_{S_AK_AK_BS_B}\}\leq \frac{1}{K},
\end{equation}
where $K$ is the number of values that the secret key can take (i.e., $K=\dim(\mc{H}_{K_A})=\dim(\mc{H}_{K_B})$). These two inequalities are foundational for some of the converse bounds established in this paper, as was the case in \cite{HHHO09,WTB16}.


\section{Entanglement distillation from bipartite quantum interactions}\label{sec:ent-dist}

In this section, we define the bidirectional max-Rains information $R^{2\to 2}_{\max}(\mc{N})$ of a bidirectional channel $\mc{N}$ and show that it is not enhanced by amortization.  We also prove that $R^{2\to 2}_{\max}(\mc{N})$ is an upper bound on the amount of entanglement that can be distilled from a bidirectional channel $\mc{N}$. We do so by adapting to the bidirectional setting, the result  from \cite{KW17} discussed below and recent techniques developed in  \cite{CM17,RKB+17,BW17} for point-to-point quantum communication protocols.

Recently, it was shown in \cite{KW17}, connected to related developments in \cite{LHL03,BHLS03,CM17,DDMW17,DW17}, that the amortized entanglement of a point-to-point channel $\mc{M}_{A\to B}$ serves as an upper bound on the entanglement of the final state, say $\omega_{AB}$, generated at the end of an LOCC- or PPT-assisted quantum communication protocol that uses $\mc{M}_{A\to B}$ $n$ times:
\begin{equation}
\Ent(A;B)_{\omega}\leq n\Ent_{A}(\mc{M}).
\end{equation}
Thus, the physical question of determining meaningful upper bounds on the LOCC- or PPT-assisted capacities of point-to-point channel $\mc{M}$ is equivalent to the mathematical question of whether amortization can enhance the entanglement of a given channel, i.e., whether the following equality holds for a given entanglement measure $\Ent$:
\begin{equation}
\Ent_A(\mc{M})\stackrel{?}{=} \Ent(\mc{M}). 
\end{equation}  

\subsection{Bidirectional max-Rains information} 

The following definition generalizes the  max-Rains information from \eqref{eq:max-Rains-channel}, \eqref{eq:rains-omega}, and \eqref{eq:rains-channel-sdp} to the bidirectional setting:

\begin{definition}[Bidirectional max-Rains information]\label{def:bi-max-rains}
The bidirectional max-Rains information of a bidirectional quantum channel $\mc{N}_{A'B'\to AB}$ is defined as
\begin{equation}
R_{\max}^{2\to 2}(\mc{N})\coloneqq \log \Gamma^{2\to2} (\mc{N}), \label{eq:bi-max-Rains-info}
\end{equation}
where $\Gamma^{2\to2}(\mc{N})$ is the solution to the following semi-definite program:
\begin{align}
\textnormal{minimize}\ &\ \norm{\Tr_{AB}\{V_{S_A ABS_B}+Y_{S_A ABS_B}\}}_\infty\nonumber\\
\textnormal{subject to}\ &\ V_{S_A ABS_B},Y_{S_A ABS_B}\geq 0,\nonumber\\
&\ \T_{BS_B}(V_{S_A ABS_B}-Y_{S_A ABS_B})\geq J^\mc{N}_{S_AABS_B},
\label{eq:bi-rains-channel-sdp}
\end{align} 
such that $S_A\simeq A'$, and $S_B\simeq B'$. 
\end{definition}

\begin{remark}
By employing the Lagrange multiplier method,
the bidirectional max-Rains information of a bidirectional channel $\mc{N}_{A'B'\to AB}$ can also be expressed as
\begin{equation}
R^{2\to 2}_{\max}(\mc{N})=\log \Gamma^{2\to2}(\mc{N}),
\end{equation}
where $\Gamma^{2\to2}(\mc{N})$ is solution to the following semi-definite program (SDP):
\begin{align}
&\textnormal{maximize}\qquad \Tr\{J^{\mc{N}}_{S_AABS_B}X_{S_A ABS_B}\}\nonumber\\
&\textnormal{subject to}:\nonumber\\
 &\ X_{S_A ABS_B},\rho_{S_AS_B}\geq 0,\quad \Tr\{\rho_{S_A S_B}\}=1, \nonumber\\
&-\rho_{S_A S_B}\otimes I_{AB}\leq \T_{BS_B}( X_{S_A ABS_B})\leq \rho_{S_A S_B}\otimes I_{AB},\label{eq:bi-rains-channel-sdp-primal}
\end{align}
such that $S_A\simeq A'$, and $S_B\simeq B'$. 
Strong duality holds by employing Slater's condition \cite{Wat15} (see also \cite{WD16a}). Thus, as indicated above, the optimal values of the primal and dual semi-definite programs, i.e., \eqref{eq:bi-rains-channel-sdp-primal} and \eqref{eq:bi-rains-channel-sdp}, respectively, are equal.
\end{remark}

The following proposition constitutes one of our main technical results, and an immediate corollary of it is that the bidirectional max-Rains information of a bidirectional quantum channel is an upper bound on the amortized max-Rains information of the same channel. 

\begin{proposition}\label{prop:rains-tri-ineq}
Let $\rho_{L_A A'B'L_B}$ be a state and let $\mc{N}_{A'B'\to AB}$ be a bidirectional channel. Then 
\begin{multline}
 R_{\max}(L_A A; B L_B)_{\omega} \leq \\
 R_{\max}(L_A A';B' L_B)_{\rho}+ R^{2\to 2}_{\max}(\mc{N}),
\end{multline}
where $\omega_{L_A AB L_B}=\mc{N}_{A'B'\to AB}(\rho_{L_A A'B'L_B})$ and $R^{2\to 2}_{\max}(\mc{N})$ is the bidirectional max-Rains information of $\mc{N}_{A'B'\to AB}$. 
\end{proposition}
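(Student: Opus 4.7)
The plan is to mimic the amortization argument of Berta and Wilde~\cite{BW17} for the point-to-point max-Rains information, but with both inputs and both outputs ``duplicated'' so that the bidirectional structure is respected. Throughout I will work with the SDP characterizations in \eqref{eq:rains-state-sdp} and \eqref{eq:bi-rains-channel-sdp}, reducing the claim to the multiplicative inequality
\begin{equation}
W(L_A A;BL_B)_\omega \leq W(L_A A';B'L_B)_\rho \cdot \Gamma^{2\to 2}(\mc{N}),
\end{equation}
which, after taking logarithms, is exactly the desired amortization bound.

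First I would fix feasible pairs: let $C_{L_A A' B' L_B},D_{L_A A' B' L_B}\geq 0$ be optimal for the state SDP defining $W(L_A A';B'L_B)_\rho$, so that $\T_{B' L_B}(C-D)\geq \rho_{L_A A' B' L_B}$ and $\Tr\{C+D\}=W(L_A A';B'L_B)_\rho$; and let $V_{S_A A B S_B}, Y_{S_A A B S_B}\geq 0$ be optimal for the channel SDP defining $\Gamma^{2\to 2}(\mc{N})$, so that $\T_{B S_B}(V-Y)\geq J^{\mc{N}}_{S_A A B S_B}$ and $\|\Tr_{AB}(V+Y)\|_\infty = \Gamma^{2\to 2}(\mc{N})$. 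I then construct the candidate pair
\begin{align}
C'_{L_A A B L_B} &\coloneqq \langle\Upsilon|_{A'B':S_AS_B}\bigl[C\otimes V + D\otimes Y\bigr]|\Upsilon\rangle_{A'B':S_AS_B},\\
D'_{L_A A B L_B} &\coloneqq \langle\Upsilon|_{A'B':S_AS_B}\bigl[C\otimes Y + D\otimes V\bigr]|\Upsilon\rangle_{A'B':S_AS_B},
\end{align}
both of which are manifestly PSD since each summand is a tensor product of PSD operators and contraction with $|\Upsilon\rangle\langle\Upsilon|$ preserves positivity.

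The key step is verifying the partial-transpose constraint $\T_{BL_B}(C'-D')\geq \omega_{L_A A B L_B}$. A direct computation gives $C'-D'=\langle\Upsilon|[(C-D)\otimes(V-Y)]|\Upsilon\rangle$, and since $BL_B$ lie outside the contracted systems, applying $\T_{BL_B}$ passes through to yield $\langle\Upsilon|\T_{L_B}(C-D)\otimes \T_B(V-Y)|\Upsilon\rangle$. Here the main technical device is the ``transpose transfer'' identity across the maximally entangled vector, namely $\langle\Upsilon|_{B':S_B} X_{B'}\otimes Y_{S_B}|\Upsilon\rangle = \langle\Upsilon|_{B':S_B}\T_{B'}(X)\otimes \T_{S_B}(Y)|\Upsilon\rangle$, which is a consequence of \eqref{eq:PT-last}. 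Using this to move the partial transposes onto $B'$ and $S_B$ converts the expression into $\langle\Upsilon|\T_{B'L_B}(C-D)\otimes \T_{BS_B}(V-Y)|\Upsilon\rangle$, and now the SDP feasibility of $(C,D)$ and $(V,Y)$ together with the operator inequality $A_1\otimes B_1\geq A_2\otimes B_2$ for $A_1\geq A_2\geq 0$, $B_1\geq B_2\geq 0$, gives a lower bound of $\langle\Upsilon|\rho\otimes J^{\mc{N}}|\Upsilon\rangle$, which equals $\omega_{L_A A B L_B}$ by the bidirectional postselected teleportation identity built from \eqref{eq:choi-sim}. I expect this partial-transpose bookkeeping, especially making sure the $B'\leftrightarrow S_B$ transfer is applied on the correct ``half'' of $|\Upsilon\rangle_{A'B':S_AS_B}$, to be the most delicate part.

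Finally, I would bound the SDP objective by tracing out $AB$ first on the $V+Y$ factor, yielding $\Tr\{C'+D'\} = \Tr_{L_A L_B A'B'S_AS_B}\{[(C+D)\otimes \Tr_{AB}(V+Y)]\cdot |\Upsilon\rangle\langle\Upsilon|_{A'B':S_AS_B}\}$, then using $\Tr_{AB}(V+Y)\leq \Gamma^{2\to 2}(\mc{N})\cdot I_{S_AS_B}$ and $\Tr_{S_AS_B}|\Upsilon\rangle\langle\Upsilon|_{A'B':S_AS_B}=I_{A'B'}$ to conclude $\Tr\{C'+D'\}\leq \Gamma^{2\to 2}(\mc{N})\cdot \Tr\{C+D\}$. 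Combined with the feasibility just verified, this yields $W(L_A A;BL_B)_\omega\leq \Gamma^{2\to 2}(\mc{N})\cdot W(L_A A';B'L_B)_\rho$ and the proposition follows on taking logarithms.
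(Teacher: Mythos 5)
Your proof is correct and follows essentially the same route as the paper's: the same candidate operators $\bra{\Upsilon}(C\otimes V + D\otimes Y)\ket{\Upsilon}$ and $\bra{\Upsilon}(C\otimes Y + D\otimes V)\ket{\Upsilon}$, the same transpose-transfer across the maximally entangled vector to verify the feasibility constraint, and the same post-selected-teleportation identity giving the lower bound $\mc{N}_{A'B'\to AB}(\rho_{L_AA'B'L_B})$. The only (cosmetic) difference is in bounding the objective, where you use $\Tr_{AB}\{V+Y\}\leq\Gamma^{2\to2}(\mc{N})\,I_{S_AS_B}$ together with $\Tr_{S_AS_B}\{\ket{\Upsilon}\!\bra{\Upsilon}\}=I_{A'B'}$, while the paper invokes H\"older's inequality and the invariance of the spectrum under a full transpose.
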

\begin{proof}
We adapt the proof steps of \cite[Proposition 1]{BW17} to the bidirectional setting. By removing logarithms and applying \eqref{eq:rains-w} and \eqref{eq:bi-max-Rains-info}, the desired inequality is equivalent to the following one:
\begin{equation}\label{eq:w-omega-ineq}
W(L_A A; B L_B)_{\omega}\leq W(L_A A';B' L_B)_{\rho}\cdot \Gamma^{2\to2}(\mc{N}),
\end{equation}
and so we aim to prove this one. Exploiting the identity in \eqref{eq:rains-state-sdp}, we find that 
\begin{equation}
W(L_A A'; B' L_B)_{\rho}=\min \Tr\{C_{L_AA'B'L_B}+D_{L_AA'B'L_B}\},
\end{equation}
subject to the constraints 
\begin{align}
C_{L_AA'B'L_B},D_{L_AA'B'L_B} &\geq 0,\\
\T_{B'L_{B}}(C_{L_AA'B'L_B}-D_{L_AA'B'L_B})&\geq \rho_{L_AA'B'L_B},
\end{align}
while the definition in \eqref{eq:bi-rains-channel-sdp} gives that 
\begin{equation}
\Gamma^{2\to2}(\mc{N})=\min \norm{\Tr_{AB}\{V_{S_A ABS_B}+Y_{S_A ABS_B}\}}_\infty,
\end{equation}
subject to the constraints 
\begin{align}
V_{S_A ABS_B},Y_{S_A ABS_B} &\geq 0,\\
 \T_{BS_B}(V_{S_A ABS_B}-Y_{S_A ABS_B})&\geq J^\mc{N}_{S_AABS_B}.\label{eq:choi-bi-b}
\end{align}
The identity in \eqref{eq:rains-state-sdp} implies that the left-hand side of \eqref{eq:w-omega-ineq} is equal to 
\begin{equation}
W(L_AA;BL_B)_\omega=\min \Tr\{E_{L_AABL_B}+F_{L_AABL_B}\},
\end{equation}
subject to the constraints
\begin{align}
E_{L_AABL_B},F_{L_AABL_B}&\geq 0,\label{eq:rains-sdp-ef}\\
\mc{N}_{A'B'\to AB}(\rho_{L_AA'B'L_B})&\leq \T_{BL_B}(E_{L_AABL_B}-F_{L_AABL_B})\label{eq:rains-sdp-channel-ef}.
\end{align}

Once we have these SDP formulations, we can now show that the inequality in \eqref{eq:w-omega-ineq} holds by making appropriate choices for $E_{L_AABL_B}$ and $ F_{L_AABL_B}$.  Let $C_{L_AA'B'L_B}$ and $ D_{L_AA'B'L_B}$ be optimal for $W(L_AA';B'L_B)_\rho$, and let $V_{S_AABS_B}$ and $Y_{S_AABS_B}$ be optimal for $\Gamma^{2\to2}(\mc{N})$. Let $\ket{\Upsilon}_{S_AS_B:A'B'}$ be the maximally entangled vector. Choose
\begin{align}
E_{L_AABL_B}&=\bra{\Upsilon}_{S_AS_B:A'B'}C_{L_AA'B'L_B}\otimes V_{S_AABS_B}\nonumber\\ 
&\quad +D_{L_AA'B'L_B}\otimes Y_{S_AABS_B}\ket{\Upsilon}_{S_AS_B:A'B'}\label{eq:E}\\
F_{L_AABL_B}&=\bra{\Upsilon}_{S_AS_B:A'B'}C_{L_AA'B'L_B}\otimes Y_{S_AABS_B}\nonumber\\
&\quad +D_{L_AA'B'L_B}\otimes V_{S_AABS_B}\ket{\Upsilon}_{S_AS_B:A'B'}.\label{eq:F}
\end{align}
Then, we have, $E_{L_AABL_B},F_{L_AABL_B}\geq 0$, because
\begin{equation}
C_{L_AA'B'L_B}, D_{L_AA'B'L_B}, Y_{S_AABS_B}, V_{S_AABS_B}\geq 0.
\end{equation}
Also, consider that
\begin{align}
&E_{L_AABL_B}-F_{L_AABL_B}\nonumber\\
&\ = \bra{\Upsilon}_{S_AS_B:A'B'}(C_{L_AA'B'L_B}-D_{L_AA'B'L_B})\otimes\nonumber\\&\qquad (V_{S_AABS_B}- Y_{S_AABS_B})\ket{\Upsilon}_{S_AS_B:A'B'}\nonumber\\
&\ = \Tr_{S_AA'B'S_B}\{\ket{\Upsilon}\!\bra{\Upsilon}_{S_AS_B:A'B'}(C_{L_AA'B'L_B} \nonumber\\ &\qquad -D_{L_AA'B'L_B})\otimes (V_{S_AABS_B}- Y_{S_AABS_B})\}.
\end{align}
 Then, using the abbreviations $E'\coloneqq~E_{L_AABL_B}$, $ F'\coloneqq~F_{L_AABL_B}$, $C'\coloneqq~C_{L_AA'B'L_B}$,  $D'\coloneqq~D_{L_AA'B'L_B}$, $V'\coloneqq~V_{S_AABS_B}$,  and $Y'\coloneqq~Y_{S_AABS_B}$, we have
 \begin{widetext}
\begin{align}
\T_{BL_B}(E'-F')
& = \T_{BL_B}\!\left[\Tr_{S_AA'B'S_B}\{\ket{\Upsilon}\!\bra{\Upsilon}_{S_AS_B:A'B'}(C'-D')\otimes (V'- Y')\}\right]\\
& = \T_{BL_B}\!\left[\Tr_{S_AA'B'S_B}\{\ket{\Upsilon}\!\bra{\Upsilon}_{S_AS_B:A'B'}(C'-D')\otimes(\T_{S_B}\circ\T_{S_B}) (V'- Y')\}\right]\\
& = \T_{BL_B}\!\left[\Tr_{S_AA'B'S_B}\{\T_{S_B}(\ket{\Upsilon}\!\bra{\Upsilon}_{S_AS_B:A'B'}) (C'-D')\otimes \T_{S_B} (V'- Y')\}\right]\\
& = \T_{BL_B}\!\left[\Tr_{S_AA'B'S_B}\{\ket{\Upsilon}\!\bra{\Upsilon}_{S_AS_B:A'B'} \T_{B'}(C'-D')\otimes \T_{S_B} (V'- Y')\}\right]\\
&=\Tr_{S_AA'B'S_B}\{\ket{\Upsilon}\!\bra{\Upsilon}_{S_AS_B:A'B'} \T_{B'L_B}(C'-D')\otimes \T_{BS_B} (V'- Y')\}\\
& \geq \bra{\Upsilon}_{S_AS_B:AB}\rho_{L_AA'B'L_B}\otimes J^\mc{N}_{S_AABS_B}\ket{\Upsilon}_{S_AS_B:AB}\\
& =\mc{N}_{A'B'\to AB}(\rho_{L_AA'B'L_B}).
\end{align}
\end{widetext}
In the above, we employed properties of the partial transpose reviewed in \eqref{eq:PT-1}--\eqref{eq:PT-last}. In particular, the third equality follows from the fact that $\T_{S_B}^\dagger=\T_{S_B}$. For the fourth equality we have used \eqref{eq:PT-last} to change $\T_{S_B}$ to $\T_{B'}$ and then $\T_{B'}^\dagger=\T_{B'}$. Now, consider that
\begin{align}
&\Tr\{E_{L_AABL_B}+F_{L_AABL_B}\} \nonumber\\
& = \Tr\{\bra{\Upsilon}_{S_AS_B:A'B'}(C_{L_AA'B'L_B}+D_{L_AA'B'L_B})\otimes\nonumber\\
&\qquad (V_{S_AABS_B}+ Y_{S_AABS_B})\ket{\Upsilon}_{S_AS_B:A'B'}\}\nonumber\\
& = \Tr\{(C_{L_AA'B'L_B}+D_{L_AA'B'L_B})\nonumber\\
&\qquad T_{A'B'}(V_{A'ABB'}+ Y_{A'ABB'})\}\nonumber\\
& = \Tr\{(C_{L_AA'B'L_B}+D_{L_AA'B'L_B})\nonumber\\
&\qquad T_{A'B'}(\Tr_{AB}\{V_{A'ABB'}+ Y_{A'ABB'}\})\}\nonumber\\
& \leq \Tr\{(C_{L_AA'B'L_B}+D_{L_AA'B'L_B})\}\nonumber\\
&\qquad \norm{ T_{A'B'}(\Tr_{AB}\{V_{A'ABB'}+ Y_{A'ABB'})\}}_\infty\nonumber\\
& = \Tr\{(C_{L_AA'B'L_B}+D_{L_AA'B'L_B})\}\nonumber\\
&\qquad \norm{ \Tr_{AB}\{V_{A'ABB'}+ Y_{A'ABB'}\}}_\infty\nonumber\\
&=W(L_AA';B'L_B)_\rho\cdot \Gamma^{2\to2}(\mc{N}).
\end{align}

The second equality follows from \eqref{eq:12} and \eqref{eq:Ttrick}. The inequality is a consequence of H\"{o}lder's inequality \cite{Bha97}. The  second-to-last equality follows because the spectrum of a positive semi-definite operator is invariant under the action of a full transpose (note, in this case, $\T_{A'B'}$ is the full transpose as it acts on reduced positive semi-definite operators $V_{A'B'}$ and $Y_{A'B'}$).

Therefore, we can infer that our choices of $E_{L_AABL_B}$ and $ F_{L_AABL_B}$ are feasible for $W(L_AA;BL_B)_\omega$. Since $W(L_AA;BL_B)_\omega$ involves a minimization over all  operators $E_{L_AABL_B}$ and $ F_{L_AABL_B}$ satisfying \eqref{eq:rains-sdp-ef} and \eqref{eq:rains-sdp-channel-ef}, this concludes our proof of \eqref{eq:w-omega-ineq}.
\end{proof}

\begin{remark}
The choices made for $E_{L_AABL_B}$ and $ F_{L_AABL_B}$ in \eqref{eq:E} and \eqref{eq:F}, respectively, can be thought of as bidirectional generalizations of those made in the proof of
\cite[Proposition 1]{BW17} (see also
\cite[Proposition 6]{WFD17}), and they can be understood roughly via \eqref{eq:choi-sim} as a post-selected teleportation of the optimal operators of $W(L_AA';B'L_B)_\rho$ through the optimal operators of $\Gamma^{2\to2}(\mc{N})$, with the optimal operators of $W(L_AA';B'L_B)_{\rho}$ being in correspondence with the Choi operator $J^\mc{N}_{S_AABS_B}$ through \eqref{eq:choi-bi-b}. 
\end{remark}

An immediate corollary of Proposition~\ref{prop:rains-tri-ineq} is the following:
\begin{corollary}\label{cor:rains-tri-ineq}
The amortized max-Rains information  of a bidirectional quantum channel $\mc{N}_{A'B'\to AB}$ is bounded from above by its bidirectional max-Rains information; i.e., the following inequality holds
\begin{equation}
R^{2\to 2}_{\max,A}(\mc{N})\leq R^{2\to 2}_{\max}(\mc{N}),
\label{eq:amortization-ineq-max-Rains}
\end{equation}
where $R^{2\to 2}_{\max, A} (\mc{N})$ is the amortized max-Rains information of a bidirectional channel $\mc{N}$, i.e.,
\begin{multline}\label{eq:ent-locc-a}
R^{2\to 2}_{\max, A} (\mc{N}) \coloneqq \\
\sup_{\rho_{L_AA'B'L_B}} \left[R_{\max}(L_AA;BL_B)_{\sigma}
 -R_{\max}(L_AA';B'L_B)_{\rho}\right],
\end{multline}
where $\rho_{L_AA'B'L_B}\in\mc{D}(\mc{H}_{L_AA'B'L_B})$ and $\sigma_{L_AABL_B}\coloneqq \mc{N}_{A'B'\to AB}(\rho_{L_AA'B'L_B})$.
\end{corollary}
\begin{proof}
The inequality in \eqref{eq:amortization-ineq-max-Rains} is an immediate consequence of Proposition~\ref{prop:rains-tri-ineq}. To see this, let $\rho_{L_AA'B'L_B}$ denote an arbitrary input state. Then from Proposition~\ref{prop:rains-tri-ineq} 
\begin{multline} \label{eq:r-lower-ineq}
R_{\max}(L_AA;BL_B)_\omega-R_{\max}(L_AA';B'L_B)_\rho
\\\leq R^{2\to 2}_{\max}(\mc{N}),
\end{multline}
where $\omega_{L_AABL_B}=\mc{N}_{A'B'\to AB}(\rho_{L_AA'B'L_B})$. As the inequality holds for any state $\rho_{L_AA'B'L_B}$, we conclude the inequality in \eqref{eq:amortization-ineq-max-Rains}.
\end{proof}


\subsection{Application to entanglement generation}

In this section, we discuss the implication of Proposition~\ref{prop:rains-tri-ineq} for PPT-assisted entanglement generation from a bidirectional channel. Suppose that two parties Alice and Bob are connected by a bipartite quantum interaction. Suppose that the systems that Alice and Bob hold are $A'$ and $B'$, respectively. The bipartite quantum interaction between them is represented by a bidirectional quantum channel $\mc{N}_{A'B'\to AB}$, where output systems $A$ and $B$ are in possession of Alice and Bob, respectively. This kind of protocol was considered in \cite{BHLS03} when there is LOCC assistance.  
 
\subsubsection{Protocol for PPT-assisted bidirectional entanglement generation}

\label{sec:ent-dist-protocol}

We now discuss PPT-assisted entanglement generation protocols that make use of a bidirectional quantum channel. We do so by generalizing the point-to-point communication protocol discussed in \cite{KW17} to the bidirectional setting.  

\begin{figure*}
		\centering
		\includegraphics[width=1.0\textwidth]{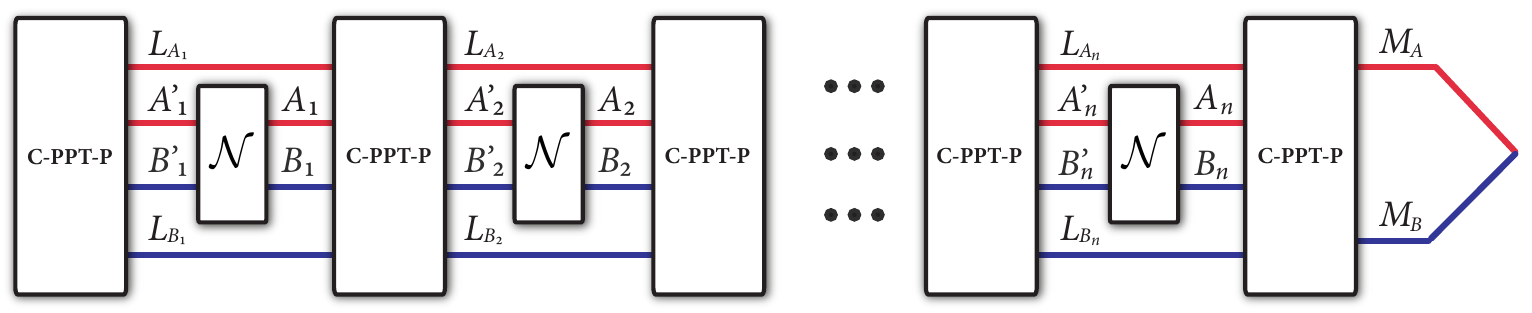}
		\caption{A protocol for PPT-assisted bidirectional quantum communication that employs $n$ uses of a bidirectional quantum channel $\mc{N}$. Every channel use is interleaved by a completely PPT-preserving channel. The goal of such a protocol is to produce an approximate maximally entangled state in the systems $M_A$ and $M_B$, where Alice possesses system $M_A$ and Bob system $M_B$.}\label{fig:bi-q-com}
	\end{figure*}

In a PPT-assisted bidirectional protocol, as depicted in Figure~\ref{fig:bi-q-com}, Alice and Bob are spatially separated and they are allowed to undergo a bipartite quantum interaction $\mc{N}_{A'B'\to AB}$, where for a fixed basis $\{|i\>_B|j\>_{L_B}\}_{i,j}$, the partial transposition $T_{BL_B}$ is considered on systems associated to Bob. Alice holds systems labeled by $A', A$ whereas Bob holds $B',B$. They begin by performing a completely PPT-preserving channel $\mc{P}^{(1)}_{\oldemptyset\to L_{A_1}A_1'B_1'L_{B_1}}$, which leads to a PPT state $\rho^{(1)}_{L_{A_1}A_1'B_1'L_{B_1}}$, where $L_{A_1},L_{B_1}$ are finite-dimensional systems of arbitrary size and $A_1',B_1'$ are input systems to the first channel use. Alice and Bob send systems $A_1'$ and $B_1'$, respectively, through the first channel use, which yields the output state
\begin{equation}
\sigma^{(1)}_{L_{A_1}A_1B_1L_{B_1}}\coloneqq\mc{N}_{A_1'B_1'\to A_1B_1}(\rho^{(1)}_{L_{A_1}A_1'B_1'L_{B_1}}).
\end{equation}
Alice and Bob then perform the completely PPT-preserving channel $\mc{P}^{(2)}_{L_{A_1}A_1B_1L_{B_1}\to L_{A_2}A_2'B_2'L_{B_2}}$, which leads to the state
\begin{equation}
\rho^{(2)}_{L_{A_2}A_2'B_2'L_{B_2}}\coloneqq \mc{P}^{(2)}_{L_{A_1}A_1B_1L_{B_1}\to L_{A_2}A_2'B_2'L_{B_2}}(\sigma^{(1)}_{L_{A_1}A_1B_1L_{B_1}}).
\end{equation}
Both parties then send systems $A_2',B_2'$ through the second channel use $\mc{N}_{A_2'B_2'\to A_2B_2}$, which yields the state
\begin{equation}
\sigma^{(2)}_{L_{A_2}A_2B_2L_{B_2}}\coloneqq \mc{N}_{A_2'B_2'\to A_2B_2}(\rho^{(2)}_{L_{A_2}A_2'B_2'L_{B_2}}).
\end{equation}
They iterate this process such that the protocol makes use of the channel $n$ times. In general, we have the following states for the $i$th use, for $i\in\{2,3,\ldots,n\}$:
\begin{align}
\rho^{(i)}_{L_{A_i}A_i'B_i'L_{B_i}} &\coloneqq \mc{P}^{(i)}(\sigma^{(i-1)}_{L_{A_{i-1}}A_{i-1}B_{i-1}L_{B_{i-1}}}),\\
\sigma^{(i)}_{L_{A_i}A_iB_iL_{B_i}} &\coloneqq \mc{N}_{A_i'B_i'\to A_iB_i}(\rho^{(i)}_{L_{A_i}A_i'B_i'L_{B_i}}),
\end{align}
where $\mc{P}^{(i)}_{L_{A_{i-1}}A_{i-1}B_{i-1}L_{B_{i-1}}\to L_{A_i}A_i'B_i'L_{B_i}}$ is a completely PPT-preserving channel, with  the partial transposition acting on systems $B_{i-1},L_{B_{i-1}}$ associated to Bob. In the final step of the protocol, a completely PPT-preserving channel $\mc{P}^{(n+1)}_{L_{A_{n}}A_{n}B_{n}L_{B_{n}}\to M_AM_B}$ is applied, which generates the final state:
\begin{equation}
\omega_{M_AM_B}\coloneqq \mc{P}^{(n+1)}_{L_{A_{n}}A_{n}B_{n}L_{B_{n}}\to M_AM_B} (\sigma^{(n)}_{L_{A_n}A_n'B_n'L_{B_n}}),
\end{equation}
where $M_A$ and $M_B$ are held by Alice and Bob, respectively. 

The goal of the protocol is for Alice and Bob to distill entanglement in the end; i.e., the final state $\omega_{M_AM_B}$ should be close to a maximally entangled state. For a fixed $n,\ M\in\mathbb{N},\ \varepsilon\in[0,1]$, the original protocol is an $(n,M,\varepsilon)$ protocol if the channel is used $n$ times as discussed above, $|M_A|=|M_B|=M$, and if 
\begin{align}
F(\omega_{M_AM_B},\Phi_{M_AM_B})&=\bra{\Phi}_{M_AM_B}\omega_{M_AM_B}\ket{\Phi}_{AB}\nonumber\\
& \geq 1-\varepsilon,
\end{align}
where $\Phi_{M_AM_B}$ is the maximally entangled state.

A rate $R$ is achievable for PPT-assisted bidirectional entanglement generation if for all $\varepsilon\in(0,1]$, $\delta>0$, and sufficiently large $n$, there exists an $(n,2^{n(R-\delta)},\varepsilon)$ protocol. The PPT-assisted bidirectional quantum capacity of a bidirectional channel $\mc{N}$, denoted as $Q^{2\to 2}_{\PPT}(\mc{N})$, is equal to the supremum of all achievable rates. Whereas, a rate $R$ is a strong converse rate for PPT-assisted bidirectional entanglement generation if for all $\varepsilon\in[0,1)$, $\delta>0$, and sufficiently large $n$, there does not exist an $(n,2^{n(R+\delta)},\varepsilon)$ protocol. The strong converse PPT-assisted bidirectional quantum capacity $\widetilde{Q}^{2\to 2}_{\PPT}(\mc{N})$ is equal to the infimum of all strong converse rates. A bidirectional channel $\mc{N}$ is said to obey the strong converse property for PPT-assisted bidirectional entanglement generation if $Q^{2\to 2}_{\PPT}(\mc{N})=\widetilde{Q}^{2\to 2}_{\PPT}(\mc{N})$. 

We note that every LOCC channel is a completely PPT-preserving channel. Given this, the well-known fact that teleportation \cite{BBC+93} is an LOCC channel, and completely PPT-preserving channels are allowed for free in the above protocol, there is no difference between an $(n,M,\varepsilon)$ entanglement generation protocol and an
$(n,M,\varepsilon)$ quantum communication protocol. Thus, all of the capacities for quantum communication are equal to those for entanglement generation.

Also, one can consider the whole development discussed above for LOCC-assisted bidirectional quantum communication instead of more general PPT-assisted bidirectional quantum communication. All the notions discussed above follow when we restrict the class of assisting completely PPT-preserving channels allowed to be LOCC channels. It follows that the LOCC-assisted bidirectional quantum capacity $Q^{2\to 2}_{\LOCC}(\mc{N})$ and the strong converse LOCC-assisted quantum capacity $\widetilde{Q}^{2\to 2}_{\LOCC}(\mc{N})$ are bounded from above as
\begin{align}
Q^{2\to 2}_{\LOCC}(\mc{N})&\leq Q^{2\to 2}_{\PPT}(\mc{N}),\\
\widetilde{Q}^{2\to 2}_{\LOCC}(\mc{N})& \leq \widetilde{Q}^{2\to 2}_{\PPT}(\mc{N}).
\end{align} 
Also, the capacities of bidirectional quantum communication protocols without any assistance are always less than or equal to the LOCC-assisted bidirectional quantum capacities.

 The following lemma is useful in deriving upper bounds on the bidirectional quantum capacities in the forthcoming sections, and it represents a generalization of the amortization idea to the bidirectional setting (see \cite{BHLS03} in this context).
 
\begin{lemma}\label{thm:ent-ppt-single-letter}
Let $\Ent_{\PPT}(A;B)_{\rho}$ be a bipartite entanglement measure for an arbitrary bipartite state $\rho_{AB}$. Suppose that $\Ent_{\PPT}(A;B)_{\rho}$ vanishes for all $\rho_{AB}\in \PPT(A\!:\!B)$ and is monotone non-increasing under completely PPT-preserving channels. Consider an $(n,M,\varepsilon)$ protocol for PPT-assisted entanglement generation over a bidirectional quantum channel $\mc{N}_{A'B'\to AB}$, as described in Section~\ref{sec:ent-dist-protocol}. Then the following bound holds
\begin{equation}
\Ent_{\PPT}(M_A;M_B)_\omega\leq n \Ent_{\PPT, A} (\mc{N}),
\end{equation}
where $\Ent_{\PPT, A} (\mc{N})$ is the amortized entanglement of a bidirectional channel $\mc{N}$, i.e.,
\begin{multline}\label{eq:ent-ppt-a}
\Ent_{\PPT, A} (\mc{N}) \coloneqq \sup_{\rho_{L_AA'B'L_B}} \left[\Ent_{\PPT}(L_AA;BL_B)_{\sigma}\right. \\
 \left. -\Ent_{\PPT}(L_AA';B'L_B)_{\rho}\right],
\end{multline}
$\rho_{L_AA'B'L_B}\in\mc{D}(\mc{H}_{L_AA'B'L_B})$, and $\sigma_{L_AABL_B}\coloneqq \mc{N}_{A'B'\to AB}(\rho_{L_AA'B'L_B})$.
\end{lemma}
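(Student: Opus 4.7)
The plan is to adapt the amortization-based converse of \cite{KW17} from the point-to-point to the bidirectional setting: telescope the bipartite entanglement of the final state through the $n$ channel uses, absorbing the intermediate PPT-preserving channels via monotonicity and each channel use via the amortization inequality built into the definition \eqref{eq:ent-ppt-a}. The argument needs only the two hypotheses on $\Ent_{\PPT}$, namely vanishing on PPT states and monotonicity under PPT-preserving channels with respect to the fixed Alice/Bob bipartite cut.

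First I would handle the two endpoints. The initial state $\rho^{(1)}_{L_{A_1}A_1'B_1'L_{B_1}}$ is the output of $\mc{P}^{(1)}$ on the trivial system, hence is PPT across $L_{A_1}A_1'\!:\!B_1'L_{B_1}$, which gives $\Ent_{\PPT}(L_{A_1}A_1';B_1'L_{B_1})_{\rho^{(1)}}=0$; and monotonicity under $\mc{P}^{(n+1)}$ yields $\Ent_{\PPT}(M_A;M_B)_\omega \leq \Ent_{\PPT}(L_{A_n}A_n;B_nL_{B_n})_{\sigma^{(n)}}$. Then I would insert the telescoping identity (with the $i=1$ lower term replaced by the zero quantity just obtained)
\begin{equation}
\Ent_{\PPT}(L_{A_n}A_n;B_nL_{B_n})_{\sigma^{(n)}} = \sum_{i=1}^n \Bigl[\Ent_{\PPT}(L_{A_i}A_i;B_iL_{B_i})_{\sigma^{(i)}} - \Ent_{\PPT}(L_{A_{i-1}}A_{i-1};B_{i-1}L_{B_{i-1}})_{\sigma^{(i-1)}}\Bigr],
\end{equation}
and, for each $i\geq 2$, use monotonicity of $\Ent_{\PPT}$ under the intermediate PPT-preserving channel $\mc{P}^{(i)}$ to upper-bound the subtracted term by $\Ent_{\PPT}(L_{A_i}A_i';B_i'L_{B_i})_{\rho^{(i)}}$. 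Since $\sigma^{(i)}=\mc{N}_{A_i'B_i'\to A_iB_i}(\rho^{(i)})$, the definition \eqref{eq:ent-ppt-a} of amortized entanglement then gives
\begin{equation}
\Ent_{\PPT}(L_{A_i}A_i;B_iL_{B_i})_{\sigma^{(i)}} - \Ent_{\PPT}(L_{A_i}A_i';B_i'L_{B_i})_{\rho^{(i)}} \leq \Ent_{\PPT,A}(\mc{N})
\end{equation}
for every $i\in\{1,\dots,n\}$, and summing these $n$ inequalities produces the claimed bound $\Ent_{\PPT}(M_A;M_B)_\omega \leq n\,\Ent_{\PPT,A}(\mc{N})$.

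The proof is essentially bookkeeping with no serious obstacle; the only point requiring care is to verify that each $\mc{P}^{(i)}$ is PPT-preserving with respect to exactly the Alice/Bob cut across which $\Ent_{\PPT}$ is evaluated, so that monotonicity can be invoked at a single fixed cut throughout the telescoping. This is guaranteed by the protocol description in Section~\ref{sec:ent-dist-protocol}, where the partial transposes defining the PPT-preserving property of every $\mc{P}^{(i)}$ always act on Bob's systems $B_{i-1},L_{B_{i-1}}$.
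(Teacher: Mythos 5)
Your proposal is correct and follows essentially the same route as the paper's proof: monotonicity under the final PPT-preserving channel, vanishing of $\Ent_{\PPT}$ on the initial PPT state, a telescoping (add-and-subtract) decomposition over the $n$ channel uses, monotonicity under each intermediate $\mc{P}^{(i)}$ to pass from $\sigma^{(i-1)}$ to $\rho^{(i)}$, and the definition of $\Ent_{\PPT,A}(\mc{N})$ to bound each summand. The only nitpick is wording: monotonicity gives $\Ent_{\PPT}(L_{A_i}A_i';B_i'L_{B_i})_{\rho^{(i)}}\leq \Ent_{\PPT}(L_{A_{i-1}}A_{i-1};B_{i-1}L_{B_{i-1}})_{\sigma^{(i-1)}}$, so you are replacing the subtracted quantity by a smaller one (equivalently, upper-bounding the negated term), which is exactly what your displayed inequality uses, so the argument stands as written.
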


\begin{proof}
From Section~\ref{sec:ent-dist-protocol}, as $\Ent$ is monotonically non-increasing under the action of completely PPT-preserving channels, we get that
\begin{align}
\Ent_{\PPT}(M_A;M_B)_\omega
& \leq \Ent_{\PPT}(L_{A_n}A_n;B_nL_{B_n})_{\sigma^{(n)}}\nonumber\\
&  = \Ent_{\PPT}(L_{A_n}A_n;B_nL_{B_n})_{\sigma^{(n)}}\nonumber\\
&\qquad -\Ent_{\PPT}(L_{A_1}A'_1;B'_1L_{B_1})_{\rho^{(1)}}\nonumber\\
& =\Ent_{\PPT}(L_{A_n}A_n;B_nL_{B_n})_{\sigma^{(n)}}\nonumber\\
&\qquad +\sum_{i=2}^n \left[\Ent_{\PPT}(L_{A_i}A'_i;B'_iL_{B_i})_{\rho^{(i)}}\right.\nonumber\\
&\quad\qquad\qquad\left. -\Ent_{\PPT}(L_{A_i}A'_i;B'_iL_{B_i})_{\rho^{(i)}}\right]\nonumber\\
&\quad\qquad - \Ent_{\PPT}(L_{A_1}A'_1;B'_1L_{B_1})_{\rho^{(1)}}\nonumber\\
& \leq \sum_{i=1}^n\left[ \Ent_{\PPT}(L_{A_i}A_i;B_iL_{B_i})_{\sigma^{(i)}}\right.\nonumber\\ 
&\quad\qquad \left. -\Ent_{\PPT}(L_{A_i}A'_i;B'_iL_{B_i})_{\rho^{(i)}}\right]\nonumber\\
& \leq n\Ent_{\PPT,A}(\mc{N}).
\end{align}
The first equality follows because $\rho^{(1)}_{L_{A_1}A_1'B_1'L_{B_1}}$ is a PPT state with vanishing $\Ent_{\PPT}$. The second equality follows trivially because we add and subtract the same terms. The second inequality follows because  $\Ent_{\PPT}(L_{A_i}A'_i;B'_iL_{B_i})_{\rho^{(i)}}\leq  \Ent_{\PPT}(L_{A_{i-1}}A_{i-1};B_{i-1}L_{B_{i-1}})_{\sigma^{(i-1)}}$ for all $i\in\{2,3,\ldots,n\}$, due to monotonicity of the entanglement measure $\Ent_{\PPT}$ with respect to completely PPT-preserving channels. The final inequality follows by applying the definition in \eqref{eq:ent-ppt-a} to each summand. 
\end{proof}

\subsubsection{Strong converse rate for PPT-assisted bidirectional entanglement generation}

We now establish the following upper bound on the bidirectional entanglement generation rate $\frac{1}{n}\log_2 M$ (qubits per channel use) of any $(n,M,\varepsilon)$ PPT-assisted protocol:
\begin{theorem}\label{thm:rains-ent-dist-strong-converse}
For a fixed $n,\ M\in\mathbb{N},\ \varepsilon\in(0,1)$, the following bound holds for an $(n,M,\varepsilon)$ protocol for PPT-assisted bidirectional entanglement generation over a bidirectional quantum channel $\mc{N}$:
\begin{equation}\label{eq:rains-ent-dist-strong-converse}
\frac{1}{n}
\log_2M\leq R^{2\to 2}_{\max}(\mc{N})+\frac{1}{n}\log_2\!\(\frac{1}{1-\varepsilon}\).
\end{equation}
\end{theorem}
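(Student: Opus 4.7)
The plan is to sandwich $R_{\max}(M_A;M_B)_\omega$ between an operational lower bound depending on $\log_2 M$ and $\varepsilon$, and a single-letter upper bound of $n\,R^{2\to 2}_{\max}(\mathcal{N})$ coming from the amortization machinery developed in Proposition~\ref{prop:rains-tri-ineq}.

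For the upper bound, I would apply Lemma~\ref{thm:ent-ppt-single-letter} with the choice $\Ent_{\PPT} = R_{\max}$. This requires verifying the two hypotheses of the lemma. First, $R_{\max}(A;B)_\rho = 0$ whenever $\rho_{AB}\in\PPT(A\!:\!B)$, because then $\rho_{AB}\in\PPT'(A\!:\!B)$ is a feasible choice in the minimization defining $R_{\max}$, yielding $D_{\max}(\rho_{AB}\Vert\rho_{AB})=0$. Second, $R_{\max}$ is monotone non-increasing under any PPT-preserving channel $\mathcal{P}_{AB\to A'B'}$, which follows from data processing of $D_{\max}$ under $\mathcal{P}$ together with the fact that $\mathcal{P}$ maps $\PPT'(A\!:\!B)$ into $\PPT'(A'\!:\!B')$ (an elementary consequence of the definition of PPT-preserving). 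Lemma~\ref{thm:ent-ppt-single-letter} then yields $R_{\max}(M_A;M_B)_\omega \leq n\,R^{2\to 2}_{\max,A}(\mathcal{N})$, and Corollary~\ref{cor:rains-tri-ineq} strengthens this to
\begin{equation}
R_{\max}(M_A;M_B)_\omega \;\leq\; n\,R^{2\to 2}_{\max}(\mathcal{N}).
\end{equation}

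For the lower bound, I would use the maximally entangled projector $\Phi_{M_AM_B}$ as a ``Rains-type'' test. The fidelity criterion gives $\Tr\{\Phi_{M_AM_B}\,\omega_{M_AM_B}\}\geq 1-\varepsilon$. The key auxiliary fact is that for every $\sigma_{M_AM_B}\in\PPT'(M_A\!:\!M_B)$,
\begin{equation}
\Tr\{\Phi_{M_AM_B}\,\sigma_{M_AM_B}\} \;=\; \tfrac{1}{M}\Tr\{F_{M_AM_B}\,\T_{M_B}(\sigma_{M_AM_B})\} \;\leq\; \tfrac{1}{M},
\end{equation}
where $F$ is the swap operator (so $\|F\|_\infty = 1$) and I used $\T_{M_B}(\Phi_{M_AM_B})=\tfrac{1}{M}F$ together with $\|\T_{M_B}(\sigma)\|_1\leq 1$ and H\"older's inequality. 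Now for any feasible $\sigma\in\PPT'$ with $\omega \leq 2^\lambda \sigma$, sandwiching $\Phi$ between both sides gives $1-\varepsilon\leq \Tr\{\Phi\,\omega\}\leq 2^\lambda\Tr\{\Phi\,\sigma\}\leq 2^\lambda/M$, hence $\lambda\geq \log_2 M - \log_2\!\bigl(\tfrac{1}{1-\varepsilon}\bigr)$. Taking the infimum over $\sigma$ and $\lambda$ yields
\begin{equation}
R_{\max}(M_A;M_B)_\omega \;\geq\; \log_2 M - \log_2\!\left(\tfrac{1}{1-\varepsilon}\right).
\end{equation}

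Chaining the two displayed inequalities and dividing by $n$ gives exactly \eqref{eq:rains-ent-dist-strong-converse}. I do not anticipate a serious obstacle: the amortization step is the content of Proposition~\ref{prop:rains-tri-ineq} and Lemma~\ref{thm:ent-ppt-single-letter}, while the meta-converse step reduces to the Rains-type overlap bound $\Tr\{\Phi\sigma\}\leq 1/M$ for $\sigma\in\PPT'$, which is standard. The only place requiring a small amount of care is confirming that the two hypotheses of Lemma~\ref{thm:ent-ppt-single-letter} really apply to $R_{\max}$, in particular the stability of the set $\PPT'$ under PPT-preserving channels.
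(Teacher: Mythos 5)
Your proof is correct and follows essentially the same route as the paper: the operational lower bound $R_{\max}(M_A;M_B)_\omega\geq\log_2[(1-\varepsilon)M]$ from the entanglement test together with the $1/M$ overlap bound for $\PPT'$ operators, combined with the upper bound $R_{\max}(M_A;M_B)_\omega\leq n\,R^{2\to2}_{\max}(\mc{N})$ from Lemma~\ref{thm:ent-ppt-single-letter} and Proposition~\ref{prop:rains-tri-ineq}/Corollary~\ref{cor:rains-tri-ineq}. The only difference is cosmetic: you supply inline proofs (the swap-operator/H\"older argument and the stability of $\PPT'$ under PPT-preserving maps) of facts the paper cites from \cite{Rai99} and the literature on the max-Rains relative entropy, and these verifications are accurate.
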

\begin{proof}
From Section~\ref{sec:ent-dist-protocol}, we have that
\begin{equation}
\Tr\{\Phi_{M_AM_B}\omega_{M_AM_B}\}\geq 1-\varepsilon,
\end{equation}
while \cite[Lemma 2]{Rai99} implies that, 
for all $ \sigma_{M_AM_B}\in\PPT'(M_A:M_B)$,
\begin{equation}
 \Tr\{\Phi_{M_AM_B}\sigma_{M_AM_B}\}\leq \frac{1}{M}.
\end{equation}
Under an \textquotedblleft entanglement test\textquotedblright, which is a measurement with POVM $\{\Phi_{M_AM_B},I_{M_AM_B}-\Phi_{M_AM_B}\}$, and applying the data processing inequality for the max-relative entropy, we find that (for details, see (56)--(59) in \cite{BW17})
\begin{equation}
R_{\max}(M_A;M_B)_\omega\geq \log_2[(1-\varepsilon)M]. \label{eq:rains-test-bound}
\end{equation}
Applying Lemma~\ref{thm:ent-ppt-single-letter} and Proposition~\ref{prop:rains-tri-ineq}, we get that
\begin{equation}
R_{\max}(M_A;M_B)_\omega \leq nR^{2\to 2}_{\max}(\mc{N}).\label{eq:rains-single-letter-proof}
\end{equation}
Combining \eqref{eq:rains-test-bound} and \eqref{eq:rains-single-letter-proof}, we arrive at the desired inequality in \eqref{eq:rains-ent-dist-strong-converse}. 
\end{proof}

\begin{remark}
The bound in \eqref{eq:rains-ent-dist-strong-converse} can also be rewritten as
\begin{equation}
1-\varepsilon \leq 2^{-n[Q-R^{2\to 2}_{\max}(\mc{N})]},
\end{equation}
where we set the rate $Q=\frac{1}{n}\log_2 M$. Thus, if the bidirectional communication rate $Q$ is strictly larger than the bidirectional max-Rains information $\mc{R}^{2\to 2}_{\max}(\mc{N})$, then the fidelity of the transmission ($1-\varepsilon$) decays exponentially fast to zero in the number $n$ of channel uses. 
\end{remark}

An immediate corollary of the above remark is the following strong converse statement:
\begin{corollary}
The strong converse PPT-assisted bidirectional quantum capacity of a bidirectional channel $\mc{N}$ is bounded from above by its bidirectional max-Rains information:
\begin{equation}
\widetilde{Q}^{2\to 2}_{\PPT}(\mc{N})\leq R^{2\to 2}_{\max}(\mc{N}).
\end{equation}
\end{corollary}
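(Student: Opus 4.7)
The plan is to derive the corollary as a direct consequence of Theorem~\ref{thm:rains-ent-dist-strong-converse}, which already provides the required single-shot bound on $\frac{1}{n}\log_2 M$ for any $(n, M, \varepsilon)$ PPT-assisted bidirectional entanglement generation protocol. The only work remaining is a routine unpacking of the definition of the strong converse capacity $\widetilde{Q}^{2\to 2}_{\PPT}(\mc{N})$, and the main technical content has already been established in Proposition~\ref{prop:rains-tri-ineq} and Lemma~\ref{thm:ent-ppt-single-letter}.

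First I would rewrite the bound of Theorem~\ref{thm:rains-ent-dist-strong-converse} in the exponential form highlighted in the remark following the theorem: for any $(n, M, \varepsilon)$ protocol with rate $Q = \frac{1}{n}\log_2 M$,
\begin{equation}
1-\varepsilon \leq 2^{-n\left[Q - R^{2\to 2}_{\max}(\mc{N})\right]}.
\end{equation}
Next, fix an arbitrary $R > R^{2\to 2}_{\max}(\mc{N})$ together with arbitrary $\delta > 0$ and $\varepsilon \in [0,1)$, and suppose for contradiction that an $(n, 2^{n(R+\delta)}, \varepsilon)$ protocol exists. Substituting $Q = R + \delta$ into the displayed inequality yields
\begin{equation}
1-\varepsilon \leq 2^{-n\left[(R+\delta) - R^{2\to 2}_{\max}(\mc{N})\right]},
\end{equation}
whose right-hand side tends to zero as $n\to\infty$ because the exponent $(R+\delta) - R^{2\to 2}_{\max}(\mc{N}) > 0$ is strictly positive. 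Since $1 - \varepsilon$ is a fixed positive constant, the inequality is violated for all sufficiently large $n$, so no such protocol can exist in that regime.

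By the definition of a strong converse rate given in Section~\ref{sec:ent-dist-protocol}, this is exactly the statement that every $R > R^{2\to 2}_{\max}(\mc{N})$ is a strong converse rate for PPT-assisted bidirectional entanglement generation over $\mc{N}$. Taking the infimum over all such $R$ then gives $\widetilde{Q}^{2\to 2}_{\PPT}(\mc{N}) \leq R^{2\to 2}_{\max}(\mc{N})$, as desired. I do not expect any genuine obstacle here, since the corollary is essentially a rephrasing of Theorem~\ref{thm:rains-ent-dist-strong-converse} at the level of capacities; the only mild care required is to quantify the order of quantifiers in the definition of strong converse rate and to invoke the positivity of the exponent to drive the fidelity strictly below any fixed threshold.
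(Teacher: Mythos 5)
Your proposal is correct and follows exactly the same route as the paper: the paper presents this corollary as an immediate consequence of the remark that rewrites Theorem~\ref{thm:rains-ent-dist-strong-converse} in the exponential form $1-\varepsilon \leq 2^{-n[Q - R^{2\to 2}_{\max}(\mc{N})]}$, from which the strong converse statement follows by the definition of strong converse rate. Your only addition is to spell out the unpacking of quantifiers, which is routine and accurate.
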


\section{Secret key distillation from bipartite quantum interactions}\label{sec:priv-key}

In this section, we define the bidirectional max-relative entropy of entanglement $E^{2\to 2}_{\max}(\mc{N})$. The main goal of this section is to derive an upper bound on the rate at which secret key can be distilled from a bipartite quantum interaction. In deriving this bound, we consider private communication protocols that use a bidirectional quantum channel, and we make use of recent techniques developed in quantum information theory for point-to-point private communication protocols \cite{HHHO09,WTB16,CM17,KW17}. 

\subsection{Bidirectional max-relative entropy of entanglement}

The following definition generalizes a channel's max-relative entropy of entanglement from \cite{CM17} to the bidirectional setting:

\begin{definition}\label{def:bi-max-rel}
The bidirectional max-relative entropy of entanglement of a bidirectional channel $\mc{N}_{A'B'\to AB}$ is defined as
\begin{equation}\label{eq:bi-max-rel-opt}
E_{\max}^{2\to 2}(\mc{N})=\sup_{\psi_{S_AA'} \otimes \varphi_{B'S_B}}E_{\max}(S_A A; B S_B)_{\omega},
\end{equation} 
where
$
\omega_{S_A A B S_B}:=\mc{N}_{A'B'\to AB}(\psi_{S_AA'} \otimes \varphi_{B'S_B})
$ and 
$\psi_{S_AA'}$ and  $\varphi_{B'S_B}$ are pure bipartite states
such that $S_A\simeq A'$, and $S_B\simeq B'$.
\end{definition}

\begin{remark}\label{rem:simplify-E-2-to-2-max}
Note that we could define $E_{\max}^{2\to 2}(\mc{N})$ to have an optimization over separable input states 
$\rho_{S_AA'B'S_B}\in \SEP(S_AA'\!:\!B'S_B)$ with finite-dimensional, but arbitrarily large auxiliary systems $S_A$ and $S_B$. However, the quasi-convexity of the max-relative entropy of entanglement \cite{D09,Dat09} and the Schmidt decomposition theorem guarantee that it suffices to restrict the optimization to be as stated in Definition~\ref{def:bi-max-rel}. 
\end{remark}

\begin{proposition}\label{prop:emax-tri-ineq}
Let $\rho_{L_AA'B'L_B}$ be a state and let $\mc{N}_{A'B'\to AB}$ be a bidirectional channel. Then
\begin{multline}
 E_{\max}(L_AA;BL_B)_\omega\\
\leq  E_{\max}(L_AA';B'L_B)_\rho+E^{2\to 2}_{\max}(\mc{N}),
\end{multline}
where $\omega_{L_AABL_B}=\mc{N}_{A'B'\to AB}(\rho_{L_AA'B'L_B})$ and $E^{2\to 2}_{\max}(\mc{N})$ is the bidirectional max-relative entropy of entanglement of $\mc{N}_{A'B'\to AB}$.
\end{proposition}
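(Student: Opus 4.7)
The plan is to mimic the SDP-style proof of Proposition~\ref{prop:rains-tri-ineq}, but using the operational characterization of the max-relative entropy of entanglement (namely, $E_{\max}(A;B)_\sigma\le \lambda$ iff there exists $\tau\in\SEP(A\!:\!B)$ with $\sigma \le 2^\lambda \tau$) rather than the Rains SDP. Set $\lambda_1 := E_{\max}(L_AA';B'L_B)_\rho$ and $\lambda_2 := E^{2\to 2}_{\max}(\mc{N})$; the goal is to produce a separable state in $\SEP(L_AA\!:\!BL_B)$ that dominates $\omega$ up to the factor $2^{\lambda_1+\lambda_2}$.

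First I would invoke the definition of $E_{\max}$ to obtain a separable state
\begin{equation*}
\sigma_{L_AA'B'L_B} = \sum_{x} p(x)\, \psi^x_{L_AA'} \otimes \varphi^x_{B'L_B} \;\in\; \SEP(L_AA'\!:\!B'L_B)
\end{equation*}
with $\rho_{L_AA'B'L_B} \le 2^{\lambda_1}\,\sigma_{L_AA'B'L_B}$, where the pure decomposition on the right is justified by standard arguments on separable states. Applying the CPTP (hence completely positive) map $\mc{N}_{A'B'\to AB}$ preserves operator inequalities, yielding
\begin{equation*}
\omega_{L_AABL_B} \;\le\; 2^{\lambda_1}\sum_{x} p(x)\, \mc{N}_{A'B'\to AB}\!\left(\psi^x_{L_AA'} \otimes \varphi^x_{B'L_B}\right).
\end{equation*}

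Next I would bound each term $\omega^x := \mc{N}_{A'B'\to AB}(\psi^x_{L_AA'} \otimes \varphi^x_{B'L_B})$ using the definition of $E^{2\to 2}_{\max}(\mc{N})$. The technical step to address here is that $L_A,L_B$ may be larger than $A',B'$, while Definition~\ref{def:bi-max-rel} only directly allows auxiliary systems $S_A\simeq A'$, $S_B\simeq B'$. This is exactly the gap bridged by Remark~\ref{rem:simplify-E-2-to-2-max}: by Schmidt-decomposing the pure states $\psi^x_{L_AA'}$ and $\varphi^x_{B'L_B}$, one obtains local isometries $V_{S_A\to L_A}$ and $V_{S_B\to L_B}$ and pure states $\psi'^{\,x}_{S_AA'}$, $\varphi'^{\,x}_{B'S_B}$ yielding $\omega^x$ as a local isometric image of $\mc{N}(\psi'^{\,x}\otimes \varphi'^{\,x})$; invariance of $E_{\max}$ under local isometries and its quasi-convexity then give $E_{\max}(L_AA;BL_B)_{\omega^x} \le \lambda_2$. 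Consequently, there exists $\xi^x_{L_AABL_B} \in \SEP(L_AA\!:\!BL_B)$ with $\omega^x \le 2^{\lambda_2}\,\xi^x$.

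Finally, taking the convex combination $\xi_{L_AABL_B} := \sum_x p(x)\, \xi^x_{L_AABL_B}$, which lies in $\SEP(L_AA\!:\!BL_B)$ by convexity of the separable set, I obtain
\begin{equation*}
\omega_{L_AABL_B} \;\le\; 2^{\lambda_1}\sum_x p(x)\,\omega^x \;\le\; 2^{\lambda_1+\lambda_2}\,\xi_{L_AABL_B},
\end{equation*}
which by definition of $E_{\max}$ gives $E_{\max}(L_AA;BL_B)_\omega \le \lambda_1+\lambda_2$, the claim. The only real obstacle is the auxiliary-dimension issue in the middle step, which is dispatched by the Schmidt/isometry argument; the rest is a clean chain of operator inequalities, avoiding any need to redo the SDP-duality calculations used for the Rains case.
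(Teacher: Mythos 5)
Your proof is correct. It takes essentially the same route as the paper, but with the two ``black boxes'' the paper cites unpacked into their proofs. The paper invokes the data-processed triangle inequality of Christandl and M\"uller-Hermes, $D_{\max}(\mc{N}(\rho)\Vert\sigma)\le D_{\max}(\rho\Vert\sigma')+D_{\max}(\mc{N}(\sigma')\Vert\sigma)$, and then appeals to Remark~\ref{rem:simplify-E-2-to-2-max} to reduce to the stated form of $E^{2\to2}_{\max}$. You instead reproduce the content of the triangle inequality inline by chaining $\rho\le 2^{\lambda_1}\sigma$ through $\mc{N}$ (which is exactly how that lemma is proved), and you reproduce the content of Remark~\ref{rem:simplify-E-2-to-2-max} by decomposing $\sigma$ into pure product terms, Schmidt-reducing each factor to $S_A\simeq A'$, $S_B\simeq B'$, embedding back via local isometries (under which $E_{\max}$ is invariant), and then taking the convex combination of the per-term optimizers using convexity of $\SEP$. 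Two tiny stylistic points: quasi-convexity is not actually needed in your step~3 since each $\omega^x$ arises from a single pure product input (the Schmidt/isometry argument alone suffices); and the construction of the dominating separable state differs superficially from the paper's --- you form a mixture $\sum_x p(x)\xi^x$ of per-term optimizers, while the paper picks a single optimizer for $E_{\max}(\mc{N}(\sigma'))$ --- but both produce a feasible separable state, so both are fine. Net effect: your argument is more self-contained, at the cost of a few more lines; nothing is missing.
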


\begin{proof}
Let us consider states $\sigma_{L_AA'B'L_B}'\in\SEP(L_AA'\!:\!B'L_B)$ and $\sigma_{L_AABL_B}\in\SEP(L_AA\!:\!BL_B)$, where $L_A$ and $L_B$ are finite-dimensional, but  arbitrarily large. With respect to the bipartite cut $L_AA:BL_B$, the following inequality holds
\begin{multline}
 E_{\max}(L_AA;BL_B)_\omega\\ 
 \leq D_{\max}(\mc{N}_{A'B'\to AB}(\rho_{L_AA'B'L_B})\Vert \sigma_{L_AABL_B}).
 \end{multline}
Applying the data-processed triangle inequality \cite[Theorem III.1]{CM17}, we find that
\begin{multline}
D_{\max}(\mc{N}_{A'B'\to AB}(\rho_{L_AA'B'L_B})\Vert \sigma_{L_AABL_B})\\
 \leq D_{\max}(\rho_{L_AA'B'L_B}\Vert \sigma_{L_AA'B'L_B}') \\ 
 +D_{\max}(\mc{N}_{A'B'\to AB}(\sigma_{L_AA'B'L_B}')\Vert \sigma_{L_AABL_B}).
\end{multline}
Since $\sigma_{L_AA'B'L_B}'$ and $\sigma_{L_AABL_B}$ are arbitrary separable states, we arrive at
\begin{multline}
E_{\max}(L_AA;BL_B)_\omega
 \leq  E_{\max}(L_AA';B'L_B)_\rho\\
 + E_{\max}(L_AA; BL_B)_\tau,
\end{multline}
where
\begin{align}
\omega_{L_AABL_B} & =\mc{N}_{A'B'\to AB}(\rho_{L_AA'B'L_B})\\ 
\tau_{L_AABL_B} & = \mc{N}_{A'B'\to AB}(\sigma_{L_AA'B'L_B}').
\end{align}
This implies the desired inequality after applying the observation in Remark~\ref{rem:simplify-E-2-to-2-max}, given that 
$\sigma_{L_AA'B'L_B}' \in \SEP(L_AA'\!:\!B'L_B)$.
\end{proof}
\bigskip

An immediate consequence of Proposition~\ref{prop:emax-tri-ineq} is the following corollary:
\begin{corollary}
\label{cor:amort-max-rel-ent}
Amortization does not enhance the bidirectional max-relative entropy of entanglement of a bidirectional quantum channel $\mc{N}_{A'B'\to AB}$; and the following equality holds
\begin{equation}
E^{2\to 2}_{\max,A}(\mc{N})=E^{2\to 2}_{\max}(\mc{N}),
\end{equation}  
where $E^{2\to 2}_{\max, A} (\mc{N})$ is the amortized entanglement of a bidirectional channel $\mc{N}$, i.e.,
\begin{multline}
\label{eq:ent-locc-b}
E^{2\to 2}_{\max, A} (\mc{N}) \coloneqq \sup_{\rho_{L_AA'B'L_B}} \left[E_{\max}(L_AA;BL_B)_{\sigma}\right.\\ 
 \left. -E_{\max}(L_AA';B'L_B)_{\rho}\right],
\end{multline}
where $\rho_{L_AA'B'L_B}\in\mc{D}(\mc{H}_{L_AA'B'L_B})$ and $\sigma_{L_AABL_B}\coloneqq \mc{N}_{A'B'\to AB}(\rho_{L_AA'B'L_B})$.
\end{corollary}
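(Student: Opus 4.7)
The plan is to prove both directions of the equality $E^{2\to 2}_{\max,A}(\mc{N})=E^{2\to 2}_{\max}(\mc{N})$ separately, with one direction being an immediate consequence of Proposition~\ref{prop:emax-tri-ineq} and the other following from a trivial choice of input state.

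For the inequality $E^{2\to 2}_{\max,A}(\mc{N})\leq E^{2\to 2}_{\max}(\mc{N})$, I would let $\rho_{L_AA'B'L_B}$ be an arbitrary input state and set $\sigma_{L_AABL_B}\coloneqq \mc{N}_{A'B'\to AB}(\rho_{L_AA'B'L_B})$. Applying Proposition~\ref{prop:emax-tri-ineq} directly gives
\begin{equation}
E_{\max}(L_AA;BL_B)_\sigma - E_{\max}(L_AA';B'L_B)_\rho \leq E^{2\to 2}_{\max}(\mc{N}).
\end{equation}
Since this holds for all input states $\rho_{L_AA'B'L_B}$, taking the supremum on the left-hand side yields $E^{2\to 2}_{\max,A}(\mc{N})\leq E^{2\to 2}_{\max}(\mc{N})$, in view of the definition in \eqref{eq:ent-locc-b}.

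For the reverse inequality $E^{2\to 2}_{\max}(\mc{N})\leq E^{2\to 2}_{\max,A}(\mc{N})$, I would restrict the supremum in \eqref{eq:ent-locc-b} to pure product input states of the form $\rho_{L_AA'B'L_B} = \psi_{L_AA'}\otimes \varphi_{B'L_B}$ (with $L_A\simeq A'$ and $L_B\simeq B'$), which are valid feasible points. For any such state, $\rho_{L_AA'B'L_B}\in\SEP(L_AA'\!:\!B'L_B)$, so $E_{\max}(L_AA';B'L_B)_\rho = 0$. Hence
\begin{equation}
E_{\max}(L_AA;BL_B)_\sigma \leq E^{2\to 2}_{\max,A}(\mc{N})
\end{equation}
for every such pure product input. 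Taking the supremum over all $\psi_{L_AA'}\otimes \varphi_{B'L_B}$ and invoking Definition~\ref{def:bi-max-rel} (together with Remark~\ref{rem:simplify-E-2-to-2-max}, which justifies restricting the optimization in $E^{2\to 2}_{\max}(\mc{N})$ to pure product inputs) gives $E^{2\to 2}_{\max}(\mc{N})\leq E^{2\to 2}_{\max,A}(\mc{N})$.

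I do not anticipate a genuine obstacle here: the first inequality is essentially a rewriting of Proposition~\ref{prop:emax-tri-ineq}, and the second is a one-line consequence of feeding in a separable (in fact, product) input that makes the subtracted entanglement term vanish. The only care needed is to make sure the pure product inputs used in the second direction are indeed allowable states in the supremum defining $E^{2\to 2}_{\max,A}(\mc{N})$ (they are, since $L_A,L_B$ are arbitrary finite-dimensional systems), and to invoke Remark~\ref{rem:simplify-E-2-to-2-max} so that the pure-product optimization matches the definition of $E^{2\to 2}_{\max}(\mc{N})$.
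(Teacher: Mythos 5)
Your proof is correct and follows essentially the same route as the paper: the inequality $E^{2\to 2}_{\max,A}(\mc{N})\leq E^{2\to 2}_{\max}(\mc{N})$ is obtained exactly as in the paper by applying Proposition~\ref{prop:emax-tri-ineq} and taking the supremum, while the reverse inequality is the standard observation (which the paper states as "always holds" without elaboration) that feeding in a pure product input makes the subtracted term $E_{\max}(L_AA';B'L_B)_\rho$ vanish, so the amortized quantity dominates the channel quantity. Your explicit spelling out of this second step, including the harmless invocation of Remark~\ref{rem:simplify-E-2-to-2-max}, is fine and matches the intended argument.
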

\begin{proof}
The inequality $E^{2\to 2}_{\max,A}(\mc{N})\geq E^{2\to 2}_{\max}(\mc{N})$ always holds. The other inequality $E^{2\to 2}_{\max,A}(\mc{N})\leq E^{2\to 2}_{\max}(\mc{N})$ is an immediate consequence of Proposition~\ref{prop:emax-tri-ineq} (the argument is similar to that given in the proof of Corollary~\ref{cor:rains-tri-ineq}).
\end{proof}

\subsection{Application to secret key agreement}
\subsubsection{Protocol for LOCC-assisted bidirectional secret key agreement}\label{sec:secret-dist-protocol}

We first introduce an LOCC-assisted secret key agreement protocol that employs a bidirectional quantum channel.

\begin{figure*}
		\centering
		\includegraphics[width=1.0\linewidth]{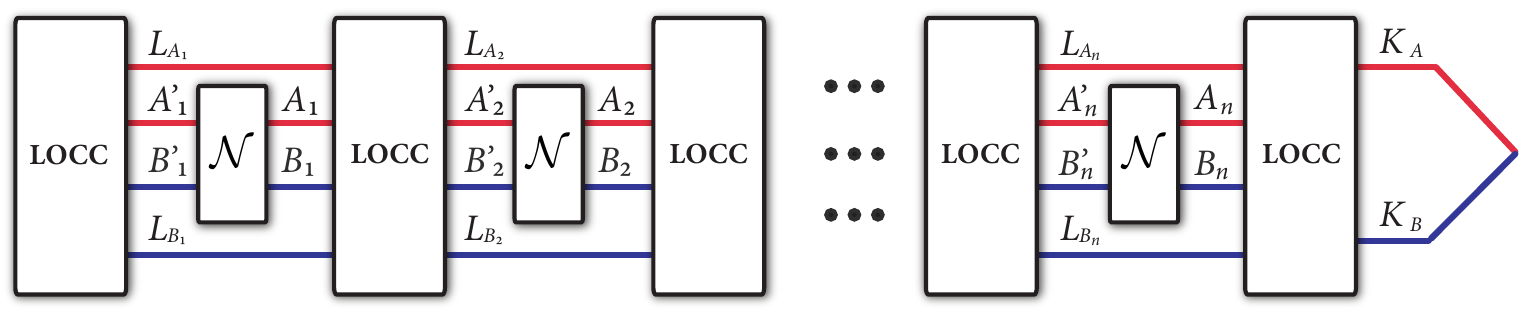}
		\caption{A protocol for LOCC-assisted bidirectional private communication that employs $n$ uses of a bidirectional quantum channel $\mc{N}$. Every channel use is interleaved by an LOCC channel. The goal of such a protocol is to produce an approximate private state in the systems $K_A$ and $K_B$, where Alice possesses system $K_A$ and Bob system $K_B$.}\label{fig:bi-p-com}
	\end{figure*}

In an LOCC-assisted bidirectional secret key agreement protocol, Alice and Bob are spatially separated and they are allowed to make use of a bipartite quantum interaction $\mc{N}_{A'B'\to AB}$, where the bipartite cut is considered between systems associated to Alice and Bob, $L_AA\!:\!L_BB$. Let $\mc{U}^\mc{N}_{A'B'\to ABE}$ be an isometric channel extending $\mc{N}_{A'B'\to AB}$:
\begin{equation}
\mc{U}^\mc{N}_{A'B'\to ABE}(\cdot)=U^{\mc{N}}_{A'B'\to ABE}(\cdot)\(U^{\mc{N}}_{A'B'\to ABE}\)^\dag,
\end{equation}
where $U^{\mc{N}}_{A'B'\to ABE}$ is an isometric extension of $\mc{N}_{A'B'\to AB}$. We assume that the eavesdropper Eve has access to the system $E$, also referred to as the environment, as well as a coherent copy of the classical communication exchanged between Alice and Bob. One could also consider a weaker assumption, in which the eavesdropper has access to only part of $E=E'E''$.

Alice and Bob begin by performing an LOCC channel $\mc{L}^{(1)}_{\oldemptyset\to L_{A_1}A_1'B_1'L_{B_1}}$, which leads to a state $\rho^{(1)}_{L_{A_1}A_1'B_1'L_{B_1}}\in\SEP(L_{A_1}A_1'\!:\!B_1'L_{B_1})$, where $L_{A_1},L_{B_1}$ are finite-dimensional systems of arbitrary size and $A_1',B_1'$ are input systems to the first channel use. Alice and Bob send systems $A_1'$ and $B_1'$, respectively, through the first channel use, that  outputs the state
\begin{equation}
\sigma^{(1)}_{L_{A_1}A_1B_1L_{B_1}}\coloneqq\mc{N}_{A_1'B_1'\to A_1B_1}(\rho^{(1)}_{L_{A_1}A_1'B_1'L_{B_1}}).
\end{equation}
They then perform the LOCC channel $\mc{L}^{(2)}_{L_{A_1}A_1B_1L_{B_1}\to L_{A_2}A_2'B_2'L_{B_2}}$, which leads to the state
\begin{equation}
\rho^{(2)}_{L_{A_2}A_2'B_2'L_{B_2}}\coloneqq \mc{L}^{(2)}_{L_{A_1}A_1B_1L_{B_1}\to L_{A_2}A_2'B_2'L_{B_2}}(\sigma^{(1)}_{L_{A_1}A_1B_1L_{B_1}}).
\end{equation}
Both parties then send systems $A_2',B_2'$ through the second channel use $\mc{N}_{A_2'B_2'\to A_2B_2}$, which yields the state $\sigma^{(2)}_{L_{A_2}A_2B_2L_{B_2}}\coloneqq \mc{N}_{A_2'B_2'\to A_2B_2}(\rho^{(2)}_{L_{A_2}A_2'B_2'L_{B_2}})$. They iterate the process such that the protocol uses the channel $n$ times. In general, we have the following states for the $i$th channel use, for $i\in\{2,3,\ldots,n\}$:
\begin{align}
\rho^{(i)}_{L_{A_i}A_i'B_i'L_{B_i}} &\coloneqq \mc{L}^{(i)}(\sigma^{(i-1)}_{L_{A_{i-1}}A_{i-1}B_{i-1}L_{B_{i-1}}}),\\
\sigma^{(i)}_{L_{A_i}A_iB_iL_{B_i}} &\coloneqq \mc{N}_{A_i'B_i'\to A_iB_i}(\rho^{(i)}_{L_{A_i}A_i'B_i'L_{B_i}}),
\end{align}
where $\mc{L}^{(i)}_{L_{A_{i-1}}A_{i-1}B_{i-1}L_{B_{i-1}}\to L_{A_i}A_i'B_i'L_{B_i}}$ is an LOCC channel corresponding to the bipartite cut $L_{A_{i-1}}A_{i-1}:B_{i-1}L_{B_{i-1}}$. In the final step of the protocol, an LOCC channel $\mc{L}^{(n+1)}_{L_{A_{n}}A_{n}B_{n}L_{B_{n}}\to K_AK_B}$ is applied, which generates the final state:
\begin{equation}
\omega_{K_AK_B}\coloneqq \mc{L}^{(n+1)}_{L_{A_{n}}A_{n}'B_{n}'L_{B_{n}}\to K_AK_B} (\sigma^{(n)}_{L_{A_n}A_n'B_n'L_{B_n}}),
\end{equation}
where the key systems $K_A$ and $K_B$ are held by Alice and Bob, respectively. 

The goal of the protocol is for Alice and Bob to distill a secret key state, such that the systems $K_A$ and $K_B$ are maximally classical correlated and tensor product with all of the systems that Eve possesses (see Section~\ref{sec:rev-priv-states} for a review of tripartite secret key states). See Figure \ref{fig:bi-p-com} for a depiction of the protocol.

\subsubsection{Purifying an LOCC-assisted bidirectional secret key agreement protocol}\label{sec:priv-dist-protocol} 

As observed in \cite{HHHO05,HHHO09} and reviewed in Section~\ref{sec:rev-priv-states}, any protocol of the above form, discussed in Section~\ref{sec:secret-dist-protocol}, can be purified in the following sense. 

The initial state $\rho^{(1)}_{L_{A_1}A_1'B_1'L_{B_1}}\in\SEP(L_{A_1}A_1'\!:\!B_1'L_{B_1})$ is of the following form:
\begin{equation}
\rho^{(1)}_{L_{A_1}A_1'B_1'L_{B_1}}\coloneqq \sum_{y_1}p_{Y_1}(y_1)\tau^{y_1}_{L_{A_1}A_1'}\otimes\varsigma^{y_1}_{L_{B_1}B_1'}.
\end{equation}
The classical random variable $Y_1$ corresponds to a message exchanged between Alice and Bob to establish this state. It can be purified in the following way:
\begin{multline}
\vert \psi^{(1)}\rangle _{Y_1S_{A_1}L_{A_1}A_1'B_1'L_{B_1}S_{B_1}}\coloneqq \\ \sum_{y_1}\sqrt{p_{Y_1}(y_1)}\ket{y_1}_{Y_1}\otimes\ket{\tau^{y_1}}_{S_{A_1}L_{A_1}A_1'}\otimes\ket{\varsigma^{y_1}}_{S_{B_1}L_{B_1}B_1'},
\end{multline}
where $S_{A_1}$ and $S_{B_1}$ are local ``shield" systems that in principle could be held by Alice and Bob, respectively, $\ket{\tau^{y_1}}_{S_{A_1}L_{A_1}A_1'}$ and $\ket{\varsigma^{y_1}}_{S_{B_1}L_{B_1}B_1'}$ purify $\tau^{y_1}_{L_{A_1}A_1'}$ and $\varsigma^{y_1}_{L_{B_1}B_1'}$, respectively, and Eve possesses system $Y_1$, which contains a coherent classical copy of the classical data exchanged between Alice and Bob. Each LOCC channel $\mc{L}^{(i)}_{L_{A_{i-1}}A_{i-1}B_{i-1}L_{B_{i-1}}\to L_{A_{i}}A_{i}'B_{i}'L_{B_{i}}}$ can be written in the following form \cite{Wat15}, for all $i\in {2,3,\ldots,n}$:
\begin{multline}\label{eq:locc}
\mc{L}^{(i)}_{L_{A_{i-1}}A_{i-1}B_{i-1}L_{B_{i-1}}\to L_{A_{i}}A_{i}'B_{i}'L_{B_{i}}}\\
\coloneqq \sum_{y_i}\mc{E}^{y_i}_{L_{A_{i-1}}A_{i-1}\to L_{A_{i}}A_{i}'}\otimes\mc{F}^{y_i}_{B_{i-1}L_{B_{i-1}}\to B_{i}'L_{B_{i}}},
\end{multline}
where $\{\mc{E}^{y_i}_{L_{A_{i-1}}A_{i-1}\to L_{A_{i}}A_{i}'}\}_{y_i}$ and $\{\mc{F}^{y_i}_{B_{i-1}L_{B_{i-1}}\to B_{i}'L_{B_{i}}}\}_{y_i}$ are collections of completely positive, trace non-increasing maps such that the map in~\eqref{eq:locc} is trace preserving. Such an LOCC channel can be purified to an isometry in the following way:
\begin{multline}\label{eq:iso-locc}
U^{\mc{L}^{(i)}}_{L_{A_{i-1}}A_{i-1}B_{i-1}L_{B_{i-1}}\to Y_iS_{A_i}L_{A_{i}}A_{i}'B_{i}'L_{B_{i}}S_{B_i}}\\
 \coloneqq \sum_{y_i}\ket{y_i}_{Y_i}\otimes U^{\mc{E}^{y_i}}_{L_{A_{i-1}}A_{i-1}\to S_{A_i}L_{A_{i}}A_{i}'}\\
  \otimes U^{\mc{F}^{y_i}}_{B_{i-1}L_{B_{i-1}}\to B_{i}'L_{B_{i}}S_{B_i}},
\end{multline}
where $\{U^{\mc{E}^{y_i}}_{L_{A_{i-1}}A_{i-1}\to S_{A_i}L_{A_{i}}A_{i}'}\}_{y_i}$ and $\{U^{\mc{F}^{y_i}}_{B_{i-1}L_{B_{i-1}}\to B_{i}'L_{B_{i}}S_{B_i}}\}_{y_i}$ are collections of linear operators (each of which is a contraction, i.e.,
\begin{multline}
\norm{U^{\mc{E}^{y_i}}_{L_{A_{i-1}}A_{i-1}\to S_{A_i}L_{A_{i}}A_{i}'}}_{\infty},\\
\norm{U^{\mc{F}^{y_i}}_{B_{i-1}L_{B_{i-1}}\to B_{i}'L_{B_{i}}S_{B_i}}}_{\infty}\leq 1
\end{multline}
 for all $y_i$) such that the linear operator $U^{\mc{L}^{(i)}}$ in \eqref{eq:iso-locc} is an isometry, the system $Y_i$ being held by Eve. The final LOCC channel can be written similarly as 
\begin{equation}
\mc{L}^{(n+1)}_{L_{A_{n}}A_{n}'B_{n}'L_{B_{n}}\to K_AK_B}\coloneqq \sum_{y_{n+1}}\mc{E}^{y_{n+1}}_{L_{A_{n}}A_{n}\to K_{A}}\otimes\mc{F}^{y_{n+1}}_{B_{n}L_{B_{n}}\to K_B},
\end{equation} 
and it can be purified to an isometry similarly as
\begin{multline}
 U^{\mc{L}^{(n+1)}}_{L_{A_{n}}A_{n}B_{n}L_{B_{n}}\to Y_{n+1} S_{A_{n+1}} K_A K_B S_{B_{n+1}}}\\ 
 \coloneqq \sum_{y_{n+1}}\ket{y_{n+1}}_{Y_{n+1}}\otimes U^{\mc{E}^{y_{n+1}}}_{L_{A_{n}}A_{n}\to S_{A_{n+1}}K_A}\otimes U^{\mc{F}^{y_{n+1}}}_{K_B S_{B_{n+1}}}.
\end{multline}
Furthermore, each channel use $\mc{N}_{A_i'B_i'\to A_iB_i}$, for all $i\in\{1,2,\ldots,n\}$, is purified by an isometry $U^\mc{N}_{A_i'B_i'\to A_iB_iE_i}$, such that Eve possesses the environment system $E_i$. 

At the end of the purified protocol, Alice possesses the key system $K_A$ and the shield systems $S_A\coloneqq S_{A_1}S_{A_2}\cdots S_{A_{n+1}}$, Bob possesses the key system $K_B$ and the shield systems $S_B\coloneqq S_{B_1}S_{B_2}\cdots S_{B_{n+1}}$, and Eve possesses the environment systems $E^n\coloneqq E_1E_2\cdots E_n$ as well as the coherent copies $Y^{n+1}\coloneqq Y_1Y_2\cdots Y_{n+1}$ of the classical data exchanged between Alice and Bob. The state at the end of the protocol is a pure state $\omega_{Y^{n+1}S_AK_AK_BS_BE^n}$.

For a fixed $n, K\in\mathbb{N},\ \varepsilon\in[0,1]$, the original protocol is an $(n,K,\varepsilon)$ protocol if the channel is used $n$ times as discussed above, $|K_A|=|K_B|=K$, and if 
\begin{align}
F(\omega_{S_AK_AK_BS_B},\gamma_{S_AK_AK_BS_B}) \geq 1-\varepsilon,
\end{align}
where $\gamma_{S_AK_AK_BS_B}$ is a bipartite private state. 

A rate $R$ is achievable for LOCC-assisted bidirectional secret key agreement if for all $\varepsilon\in(0,1]$, $\delta>0$, and sufficiently large $n$, there exists an $(n,2^{n(R-\delta)},\varepsilon)$ protocol. The LOCC-assisted bidirectional secret-key-agreement capacity of a bidirectional channel $\mc{N}$, denoted as $P^{2\to 2}_{\LOCC}(\mc{N})$, is equal to the supremum of all achievable rates. Whereas, a rate $R$ is a strong converse rate for LOCC-assisted bidirectional secret key agreement if for all $\varepsilon\in[0,1)$, $\delta>0$, and sufficiently large $n$, there does not exist an $(n,2^{n(R+\delta)},\varepsilon)$ protocol. The strong converse LOCC-assisted bidirectional secret-key-agreement capacity $\widetilde{P}^{2\to 2}_{\LOCC}(\mc{N})$ is equal to the infimum of all strong converse rates. A bidirectional channel $\mc{N}$ is said to obey the strong converse property for LOCC-assisted bidirectional secret key agreement  if $P^{2\to 2}_{\LOCC}(\mc{N})=\widetilde{P}^{2\to 2}_{\LOCC}(\mc{N})$. 

We note that the identity channel corresponding to no assistance is an LOCC channel. Therefore, one can consider the whole development discussed above for bidirectional private communication without any assistance or feedback instead of LOCC-assisted communication. All the notions discussed above follow when we exempt the employment of any non-trivial LOCC-assistance. It follows that the non-adaptive bidirectional private capacity $P^{2\to 2}_{\textnormal{n-a}}(\mc{N})$ and the strong converse non-adaptive bidirectional private capacity $\widetilde{P}_{\textnormal{n-a}}^{2\to 2}(\mc{N})$ are bounded from above as
\begin{align}
P^{2\to 2}_{\textnormal{n-a}}(\mc{N})&\leq P^{2\to 2}_{\LOCC}(\mc{N}),\\
\widetilde{P}_{\textnormal{n-a}}^{2\to 2}(\mc{N})& \leq \widetilde{P}^{2\to 2}_{\LOCC}(\mc{N}).
\end{align} 

The following lemma is useful in deriving upper bounds on the bidirectional secret-key-agreement capacity of a bidirectional channel. Its proof is very similar to the proof of Lemma~\ref{thm:ent-ppt-single-letter}, and so we omit it.
\begin{lemma}\label{thm:ent-locc-single-letter}
Let $\Ent_{\LOCC}(A;B)_{\rho}$ be a bipartite entanglement measure for an arbitrary bipartite state $\rho_{AB}$. Suppose that $\Ent_{\LOCC}(A;B)_{\rho}$ vanishes for all $\rho_{AB}\in \SEP(A\!:\!B)$ and is monotone non-increasing under LOCC channels. Consider an $(n,K,\varepsilon)$ protocol for LOCC-assisted secret key agreement over a bidirectional quantum channel $\mc{N}_{A'B'\to AB}$ as described in Section~\ref{sec:priv-dist-protocol}. Then the following bound holds
\begin{equation}
\Ent_{\LOCC}(S_A K_A;K_B S_B)_{\omega}\leq n \Ent_{\LOCC, A} (\mc{N}),
\end{equation}
where $\Ent_{\LOCC, A} (\mc{N})$ is the amortized entanglement of a bidirectional channel $\mc{N}$, i.e.,
\begin{multline}
\label{eq:ent-locc-c}
\Ent_{\LOCC, A} (\mc{N})\coloneqq \sup_{\rho_{L_AA'B'L_B}} \left[\Ent_{\LOCC}(L_AA;BL_B)_{\sigma}\right.\\
\left. -\Ent_{\LOCC}(L_AA';B'L_B)_{\rho}\right],
\end{multline}
and $\sigma_{L_AABL_B}\coloneqq \mc{N}_{A'B'\to AB}(\rho_{L_AA'B'L_B})$.
\end{lemma}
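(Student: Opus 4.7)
My plan is to mirror the proof of Lemma~\ref{thm:ent-ppt-single-letter} essentially line by line, replacing PPT-preserving channels by LOCC channels and PPT states by separable states, and checking that the two structural ingredients needed there carry over. Those ingredients are (i) that $\Ent_{\LOCC}$ is monotone non-increasing under LOCC channels, which holds by assumption, and (ii) that $\Ent_{\LOCC}$ vanishes on the initial state produced by the protocol. For (ii), the key observation is that in the purified protocol described in Section~\ref{sec:priv-dist-protocol}, the first LOCC channel $\mc{L}^{(1)}_{\oldemptyset \to L_{A_1}A_1'B_1'L_{B_1}}$ produces a separable state $\rho^{(1)}_{L_{A_1}A_1'B_1'L_{B_1}} \in \SEP(L_{A_1}A_1'\!:\!B_1'L_{B_1})$, so by hypothesis $\Ent_{\LOCC}(L_{A_1}A_1';B_1'L_{B_1})_{\rho^{(1)}} = 0$.

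Given these two facts, I would begin by applying LOCC monotonicity of $\Ent_{\LOCC}$ through the final LOCC channel $\mc{L}^{(n+1)}_{L_{A_n}A_nB_nL_{B_n} \to K_A K_B}$ (absorbing the shield systems into $K_A, K_B$ as appropriate), giving
\begin{equation}
\Ent_{\LOCC}(S_AK_A;K_BS_B)_{\omega} \leq \Ent_{\LOCC}(L_{A_n}A_n;B_nL_{B_n})_{\sigma^{(n)}}.
\end{equation}
Next I would subtract the vanishing initial term and telescope by adding and subtracting $\Ent_{\LOCC}(L_{A_i}A_i';B_i'L_{B_i})_{\rho^{(i)}}$ for $i = 2, \ldots, n$, producing a sum of $n$ differences of the form $\Ent_{\LOCC}(L_{A_i}A_i;B_iL_{B_i})_{\sigma^{(i)}} - \Ent_{\LOCC}(L_{A_i}A_i';B_i'L_{B_i})_{\rho^{(i)}}$ together with cross terms comparing $\rho^{(i)}$ to $\sigma^{(i-1)}$.

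The cross terms are handled by applying LOCC monotonicity once more: since $\rho^{(i)} = \mc{L}^{(i)}(\sigma^{(i-1)})$ with $\mc{L}^{(i)}$ an LOCC channel across the cut $L_{A_{i-1}}A_{i-1}\!:\!B_{i-1}L_{B_{i-1}}$, we get $\Ent_{\LOCC}(L_{A_i}A_i';B_i'L_{B_i})_{\rho^{(i)}} \leq \Ent_{\LOCC}(L_{A_{i-1}}A_{i-1};B_{i-1}L_{B_{i-1}})_{\sigma^{(i-1)}}$, so these terms have the correct sign to be discarded. Finally, each of the $n$ remaining differences is upper bounded by $\Ent_{\LOCC,A}(\mc{N})$ by definition \eqref{eq:ent-locc-c}, yielding the claimed bound $\Ent_{\LOCC}(S_AK_A;K_BS_B)_\omega \leq n\,\Ent_{\LOCC,A}(\mc{N})$.

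I do not anticipate a genuine obstacle here: the entire argument is a formal telescoping that depends only on the two abstract properties (vanishing on separable states and LOCC monotonicity) plus the structural fact that the first state of the protocol is separable. The only point that requires slight care, compared to the PPT version, is that the LOCC channels in Section~\ref{sec:priv-dist-protocol} act across a well-defined bipartite cut and that the shield systems are consistently absorbed into the key registers when applying the final monotonicity step; once that bookkeeping is fixed, the derivation is identical in form to the calculation in Lemma~\ref{thm:ent-ppt-single-letter}.
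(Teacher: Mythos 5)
Your proposal is correct and matches the paper's intended argument: the paper omits the proof of this lemma precisely because it is the line-by-line analogue of Lemma~\ref{thm:ent-ppt-single-letter}, with LOCC monotonicity replacing PPT monotonicity and separability of the initial state $\rho^{(1)}$ (which remains separable across the cut even after adjoining the shield systems $S_{A_1},S_{B_1}$) replacing the PPT condition. Your handling of the shields—keeping them as local outputs of the (purified) LOCC instruments so that each telescoping summand is still of the form appearing in \eqref{eq:ent-locc-c}—is exactly the bookkeeping the paper has in mind.
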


\subsubsection{Strong converse rate for LOCC-assisted bidirectional secret key agreement}
We now prove the following upper bound on the bidirectional secret key agreement rate $\frac{1}{n}\log_2 K$ (secret bits per channel use) of any $(n,K,\varepsilon)$ LOCC-assisted secret-key-agreement protocol:

\begin{theorem}\label{thm:emax-ent-dist-strong-converse}
For a fixed $n,\ K\in\mathbb{N},\ \varepsilon\in(0,1)$, the following bound holds for an $(n,K,\varepsilon)$ protocol for LOCC-assisted bidirectional secret key agreement over a bidirectional quantum channel $\mc{N}$:
\begin{equation}\label{eq:emax-ent-dist-strong-converse}
\frac{1}{n}\log_2K\leq E^{2\to 2}_{\max}(\mc{N})+
\frac{1}{n}\log_2\!\(\frac{1}{1-\varepsilon}\).
\end{equation}
\end{theorem}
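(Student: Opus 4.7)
The plan is to mimic the structure used in the proof of Theorem~\ref{thm:rains-ent-dist-strong-converse}, replacing the Rains entanglement test with the $\gamma$-privacy test reviewed in Section~\ref{sec:rev-priv-states}, and replacing the bidirectional max-Rains information with the bidirectional max-relative entropy of entanglement. The two main technical pillars that make this strategy work have already been established in the paper: Lemma~\ref{thm:ent-locc-single-letter}, which chains an LOCC-monotone entanglement measure across the $n$ channel uses in terms of its amortized version, and Corollary~\ref{cor:amort-max-rel-ent}, which collapses the amortized quantity $E^{2\to 2}_{\max,A}(\mc{N})$ down to $E^{2\to 2}_{\max}(\mc{N})$.

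First I would set up a lower bound on $E_{\max}(S_AK_A;K_BS_B)_\omega$ using the privacy test. By assumption, the final state $\omega_{S_AK_AK_BS_B}$ is $\varepsilon$-close in fidelity to some bipartite private state $\gamma_{S_AK_AK_BS_B}$ with $K$ key values, so from Section~\ref{sec:rev-priv-states} its probability of passing the associated $\gamma$-privacy test satisfies $\Tr\{\Pi^\gamma_{S_AK_AK_BS_B}\omega_{S_AK_AK_BS_B}\}\geq 1-\varepsilon$. On the other hand, for any separable state $\sigma_{S_AK_AK_BS_B}\in\SEP(S_AK_A\!:\!K_BS_B)$, the same probability is bounded above by $1/K$. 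Applying the dichotomic privacy-test channel and using the data-processing inequality for $D_{\max}$ (together with the definition of $E_{\max}$ as the minimum max-relative entropy to the separable set), I obtain
\begin{equation}
E_{\max}(S_AK_A;K_BS_B)_\omega \;\geq\; \log_2\!\bigl[(1-\varepsilon)K\bigr],
\end{equation}
exactly analogous to \eqref{eq:rains-test-bound} in the max-Rains argument.

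Next I would establish the matching upper bound. Since $E_{\max}$ vanishes on separable states and is monotone non-increasing under LOCC channels, Lemma~\ref{thm:ent-locc-single-letter} applies with $\Ent_{\LOCC} = E_{\max}$, yielding
\begin{equation}
E_{\max}(S_AK_A;K_BS_B)_\omega \;\leq\; n\,E^{2\to 2}_{\max,A}(\mc{N}).
\end{equation}
Corollary~\ref{cor:amort-max-rel-ent} then gives $E^{2\to 2}_{\max,A}(\mc{N}) = E^{2\to 2}_{\max}(\mc{N})$, so the right-hand side becomes $n\,E^{2\to 2}_{\max}(\mc{N})$.

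Combining the two bounds yields $\log_2[(1-\varepsilon)K]\leq n\,E^{2\to 2}_{\max}(\mc{N})$, which rearranges into \eqref{eq:emax-ent-dist-strong-converse}. The conceptually substantive work has really already been done in Proposition~\ref{prop:emax-tri-ineq} (and hence Corollary~\ref{cor:amort-max-rel-ent}) and in the construction of the purified LOCC-assisted key-agreement protocol in Section~\ref{sec:priv-dist-protocol}; the only step that requires any care here is invoking the privacy-test inequalities with the correct shield/key systems and checking that $E_{\max}$ satisfies the hypotheses of Lemma~\ref{thm:ent-locc-single-letter} (vanishing on $\SEP$ and LOCC monotonicity, both of which follow from its definition as a minimization of $D_{\max}$ over separable states together with data-processing for $D_{\max}$).
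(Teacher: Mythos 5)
Your proposal is correct and follows essentially the same route as the paper's proof: the privacy-test inequalities from Section~\ref{sec:rev-priv-states} give $\log_2[(1-\varepsilon)K]\leq E_{\max}(S_AK_A;K_BS_B)_\omega$ (the paper cites the monotonicity argument of Christandl--M\"uller-Hermes rather than spelling out the $D_{\max}$ data-processing step, but it is the same reasoning), and then Lemma~\ref{thm:ent-locc-single-letter} together with Corollary~\ref{cor:amort-max-rel-ent} gives $E_{\max}(S_AK_A;K_BS_B)_\omega\leq nE^{2\to 2}_{\max}(\mc{N})$, which combine to yield \eqref{eq:emax-ent-dist-strong-converse}.
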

\begin{proof}
From Section~\ref{sec:priv-dist-protocol}, the following inequality holds for an $(n,K,\varepsilon)$ protocol:
\begin{equation}
F(\omega_{S_AK_AK_BS_B},\gamma_{S_AK_AK_BS_B})\geq 1-\varepsilon,
\end{equation}
for some bipartite private state $\gamma_{S_AK_AK_BS_B}$ with key dimension $K$. From Section~\ref{sec:rev-priv-states}, $\omega_{S_AK_AK_BS_B}$ passes a $\gamma$-privacy test with probability  at least $1-\varepsilon$, whereas any $\tau_{S_AK_AK_BS_B}\in\SEP(S_AK_A:K_BS_B)$ does not pass with probability greater than $\frac{1}{K}$ \cite{HHHO09} (see also \cite{WTB16}). Making use of the discussion in \cite[Sections III \& IV]{CM17} (i.e., from the monotonicity of the max-relative entropy of entanglement under the $\gamma$-privacy test), we conclude that
\begin{equation}\label{eq:emax-test-bound}
\log_2 K \leq
E_{\max}(S_AK_A;K_BS_B)_{\omega}+\log_2\!\(\frac{1}{1-\varepsilon}\).
\end{equation}
Applying Lemma~\ref{thm:ent-locc-single-letter} and Corollary~\ref{cor:amort-max-rel-ent}, we get that
\begin{equation}
E_{\max}(S_AK_A;K_BS_B)_\omega \leq nE^{2\to 2}_{\max}(\mc{N}).\label{eq:emax-single-letter-proof}
\end{equation}
Combining \eqref{eq:emax-test-bound} and \eqref{eq:emax-single-letter-proof}, we get the desired inequality in \eqref{eq:emax-ent-dist-strong-converse}. 
\end{proof}

\begin{remark}
The bound in \eqref{eq:emax-ent-dist-strong-converse} can also be rewritten as
\begin{equation}
1-\varepsilon \leq 2^{-n[P-E^{2\to 2}_{\max}(\mc{N})]},
\end{equation}
where we set the rate $P=\frac{1}{n}\log_2 K$. Thus, if the bidirectional secret-key-agreement rate $P$ is strictly larger than the bidirectional max-relative entropy of entanglement $\mc{E}^{2\to 2}_{\max}(\mc{N})$, then the reliability and security  of the transmission $(1-\varepsilon)$ decays exponentially fast to zero in the number $n$ of channel uses. 
\end{remark}

An immediate corollary of the above remark is the following strong converse statement:
\begin{corollary}
The strong converse LOCC-assisted bidirectional secret-key-agreement capacity of a bidirectional channel $\mc{N}$ is bounded from above by its bidirectional max-relative entropy of entanglement:
\begin{equation}
\widetilde{P}^{2\to 2}_{\LOCC}(\mc{N})\leq E^{2\to 2}_{\max}(\mc{N}).
\end{equation}
\end{corollary}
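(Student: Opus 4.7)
The plan is to derive this bound as a short asymptotic repackaging of Theorem~\ref{thm:emax-ent-dist-strong-converse}, since all the substantive work — the amortization inequality of Proposition~\ref{prop:emax-tri-ineq}, its Corollary~\ref{cor:amort-max-rel-ent}, the privacy-test reduction from Section~\ref{sec:rev-priv-states} combined with Lemma~\ref{thm:ent-locc-single-letter}, and the resulting single-shot bound — has already been done. The remaining task is just to pass from the one-shot inequality to an infimum over strong converse rates.

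Concretely, I would unfold the definition of $\widetilde{P}^{2\to 2}_{\LOCC}(\mc{N})$: it suffices to show that every $R > E^{2\to 2}_{\max}(\mc{N})$ is a strong converse rate, i.e., that for every $\varepsilon \in [0,1)$ and every $\delta > 0$ no $(n, 2^{n(R+\delta)}, \varepsilon)$ protocol exists for all sufficiently large $n$. Fix such $R$ and $\delta$, and suppose toward contradiction that such a protocol exists. Applying Theorem~\ref{thm:emax-ent-dist-strong-converse} with $K = 2^{n(R+\delta)}$ gives
\begin{equation}
R + \delta \;\leq\; E^{2\to 2}_{\max}(\mc{N}) + \frac{1}{n}\log_2\!\left(\frac{1}{1-\varepsilon}\right),
\end{equation}
which rearranges, exactly as in the remark preceding the corollary, to
\begin{equation}
1 - \varepsilon \;\leq\; 2^{-n\left[R + \delta - E^{2\to 2}_{\max}(\mc{N})\right]}.
\end{equation}
Since the exponent is strictly positive by choice of $R$ and $\delta$, the right-hand side tends to $0$ as $n \to \infty$, forcing $\varepsilon \to 1$. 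Hence for any fixed $\varepsilon \in [0,1)$ there is some $n_0$ such that no $(n, 2^{n(R+\delta)}, \varepsilon)$ protocol can exist for $n \geq n_0$, so $R$ is a strong converse rate. Taking the infimum over all $R > E^{2\to 2}_{\max}(\mc{N})$ yields $\widetilde{P}^{2\to 2}_{\LOCC}(\mc{N}) \leq E^{2\to 2}_{\max}(\mc{N})$.

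There is essentially no obstacle; if anything, the only point to be careful about is the quantifier ordering in the definition of the strong converse capacity (for all $\varepsilon \in [0,1)$ and for sufficiently large $n$), which is exactly what the exponential decay $1 - \varepsilon \lesssim 2^{-n[R+\delta - E^{2\to 2}_{\max}(\mc{N})]}$ supplies uniformly. The real content of the corollary lives upstream in Theorem~\ref{thm:emax-ent-dist-strong-converse}, whose own proof rests on the amortization collapse $E^{2\to 2}_{\max,A}(\mc{N}) = E^{2\to 2}_{\max}(\mc{N})$ from Corollary~\ref{cor:amort-max-rel-ent}.
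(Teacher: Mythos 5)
Your proposal is correct and follows essentially the same route as the paper: the paper's own proof is just the remark rewriting Theorem~\ref{thm:emax-ent-dist-strong-converse} as $1-\varepsilon \leq 2^{-n[P-E^{2\to 2}_{\max}(\mc{N})]}$ and noting the corollary is immediate, which is precisely the rearrangement and quantifier argument you spell out. Your version merely makes explicit the passage from the one-shot bound to the infimum over strong converse rates, with no substantive difference.
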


\section{Bidirectional channels with symmetry}\label{sec:ent-mes-sim}

Channels obeying particular symmetries have played an important role in several quantum information processing tasks in the context of quantum communication protocols \cite{BDSW96,HHH99,Hol02}, quantum computing and quantum metrology \cite{DP05,JWD+08,DM14}, and resource theories \cite{Fri15,BG15}, etc. 

In this section, we define bidirectional PPT- and teleportation-simulable channels by adapting the definitions of point-to-point PPT- and LOCC-simulable channels \cite{BDSW96,HHH99,KW17} to the bidirectional setting. Then, we give upper bounds on the entanglement and secret-key-agreement capacities for communication protocols that employ bidirectional PPT- and teleportation-simulable channels, respectively. These bounds are generally tighter than those given in the previous section, because they exploit the symmetry inherent in bidirectional PPT- and teleportation-simulable channels.

\begin{definition}[Bidirectional PPT-simulable]\label{def:bi-ppt-sim}
A bidirectional channel $\mc{N}_{A'B'\to AB}$ is PPT-simulable
with associated resource state 
$\theta_{D_AD_B}\in\mc{D}\(\mc{H}_{D_A}\otimes\mc{H}_{D_B}\)$
if for all input states $\rho_{A'B'}\in\mc{D}\(\mc{H}_{A'}\otimes\mc{H}_{B'}\)$ the following equality holds
\begin{multline}
\mc{N}_{A'B'\to AB}\(\rho_{A'B'}\)\\ =\mc{P}_{D_AA'B'D_B\to AB}\(\rho_{A'B'}\otimes\theta_{D_AD_B}\),
\end{multline}
with $\mc{P}_{D_AA'B'D_B\to AB}$ being a completely PPT-preserving channel acting on $D_AA'\!:\!D_BB'$, where the partial transposition acts on the composite system $D_BB'$.
\end{definition}    

The following definition was given in 
\cite{STM11} for the special case of bipartite unitary channels:
\begin{definition}[Bidirectional teleportation-simulable]\label{def:bi-tel-sim}
A bidirectional channel $\mc{N}_{A'B'\to AB}$ is teleportation-simulable
with associated resource state 
$\theta_{D_AD_B}\in\mc{D}\(\mc{H}_{D_A}\otimes\mc{H}_{D_B}\)$
if for all input states $\rho_{A'B'}\in\mc{D}\(\mc{H}_{A'}\otimes\mc{H}_{B'}\)$ the following equality holds
\begin{multline}
\mc{N}_{A'B'\to AB}\(\rho_{A'B'}\) \\ =\mc{L}_{D_AA'B'D_B\to AB}\(\rho_{A'B'}\otimes\theta_{D_AD_B}\),
\end{multline}
where $\mc{L}_{D_AA'B'D_B\to AB}$ is an LOCC channel acting on $D_AA':D_BB'$.
\end{definition} 

Let $G$ and $H$ be finite groups, and for $g\in G$ and $h\in H$, let
$g\rightarrow U_{A^{\prime}}(g)$ and $h\rightarrow V_{B^{\prime}}(h)$ be
unitary representations. Also, let $(g,h)\rightarrow W_{A}(g,h)$ and
$(g,h)\rightarrow T_{B}(g,h)$ be unitary representations. A bidirectional
quantum channel $\mathcal{N}_{A^{\prime}B^{\prime}\rightarrow AB}$ is
\textit{bicovariant} with respect to these representations if the following relation
holds for all input density operators $\rho_{A^{\prime}B^{\prime}}$ and group
elements $g\in G$ and $h\in H$:%
\begin{multline}
\mathcal{N}_{A^{\prime}B^{\prime}\rightarrow AB}((\mathcal{U}_{A^{\prime}%
}(g)\otimes\mathcal{V}_{B^{\prime}}(h))(\rho_{A^{\prime}B^{\prime}%
}))\\ =(\mathcal{W}_{A}(g,h)\otimes\mathcal{T}_{B}(g,h))(\mathcal{N}_{A^{\prime
}B^{\prime}\rightarrow AB}(\rho_{A^{\prime}B^{\prime}})),
\end{multline}
where $\mathcal{U}(g)(\cdot)\coloneqq U(g)(\cdot)\left(
U(g)\right)^{\dag}$ denotes the unitary channel associated with a unitary operator $U(g)$, with a similar convention for the other unitary channels above.

\begin{definition}[Bicovariant channel]\label{def:bicov}
We define a bidirectional channel to be bicovariant if it is bicovariant with
respect to groups that have representations as unitary one-designs, i.e.,
$\frac{1}{\left\vert G\right\vert }\sum_{g}\mathcal{U}_{A^{\prime}}%
(g)(\rho_{A^{\prime}})=\pi_{A^{\prime}}$ and $\frac{1}{\left\vert H\right\vert
}\sum_{h}\mathcal{V}_{B^{\prime}}(h)(\rho_{B^{\prime}})=\pi_{B^{\prime}}$.
\end{definition}

An example of a bidirectional channel that is bicovariant is the
controlled-NOT (CNOT) gate \cite{BDEJ95}, for which we have the following covariances \cite{G99,GC99}:
\begin{align}
\text{CNOT}(X\otimes I)  & =(X\otimes X)\text{CNOT},\\
\text{CNOT}(Z\otimes I)  & =(Z\otimes I)\text{CNOT},\\
\text{CNOT}(Y\otimes I)  & =(Y\otimes X)\text{CNOT},\\
\text{CNOT}(I\otimes X)  & =(I\otimes X)\text{CNOT},\\
\text{CNOT}(I\otimes Z)  & =(Z\otimes Z)\text{CNOT},\\
\text{CNOT}(I\otimes Y)  & =(Z\otimes Y)\text{CNOT},
\end{align}
where $\{I,X,Y,Z\}$ is the Pauli group with the identity element $I$. A more general example of a bicovariant channel is one that applies a CNOT with some probability and,  with the complementary probability, replaces the input with the maximally mixed state.

In \cite{GC99}, the prominent idea of gate teleportation was developed, wherein one can generate the Choi state for the CNOT gate by sending in shares of maximally entangled states and then  simulate the CNOT gate's action on any input state by using teleportation through the Choi state (see also \cite{NC97} for earlier related developments). This idea generalized the notion of teleportation simulation of channels \cite{BDSW96,HHH99} from the single-sender single-receiver setting to the bidirectional setting. After these developments, \cite{CDKL01,DBB08} generalized the idea of gate teleportation to bipartite quantum channels that are not necessarily unitary channels. 

The following result slightly generalizes the developments in \cite{GC99,CDKL01,DBB08}:
\begin{proposition}\label{prop:bicov}
If a bidirectional channel $\mathcal{N}_{A^{\prime}B^{\prime}\rightarrow AB}$
is bicovariant, Definition~\ref{def:bicov}, then it is teleportation-simulable with resource state
$\theta_{L_{A}ABL_{B}}=\mathcal{N}_{A^{\prime}B^{\prime}\rightarrow AB}%
(\Phi_{L_{A}A^{\prime}}\otimes\Phi_{B^{\prime}L_{B}})$ (Definition~\ref{def:bi-tel-sim}).
\end{proposition}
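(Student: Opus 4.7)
The plan is to lift the standard argument that covariant point-to-point channels are teleportation-simulable (the remark following Definition~\ref{def:tel-sim}, attributed to \cite{CDP09}) to the bipartite setting. Bicovariance as formulated in Definition~\ref{def:bicov} is tailored precisely to make this lift go through: it supplies the commutation relation that is needed to push outcome-dependent unitary corrections past the channel.

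Concretely, I would describe an LOCC protocol consisting of two parallel generalized teleportations. Alice holds $L_A,A$ and Bob holds $B,L_B$ of the resource state $\theta_{L_A A B L_B} = \mathcal{N}_{A'B' \to AB}(\Phi_{L_A A'} \otimes \Phi_{B' L_B})$. Given an arbitrary input $\rho_{A'B'}$ with $A'$ at Alice and $B'$ at Bob, Alice performs the generalized Bell measurement on her $A' L_A$ associated with the unitary one-design $\{U_{A'}(g)\}_{g \in G}$, Bob performs the generalized Bell measurement on his $B' L_B$ associated with $\{V_{B'}(h)\}_{h \in H}$, they exchange the outcomes $(g,h)$ via two-way classical communication, and finally Alice applies $\mathcal{W}_A(g,h)^\dagger$ to $A$ while Bob applies $\mathcal{T}_B(g,h)^\dagger$ to $B$. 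Every step is either local at one party or classical communication of outcomes, so the composition is an LOCC channel across $L_A A' : L_B B'$ as required by Definition~\ref{def:bi-tel-sim}.

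For correctness, I would exploit the fact that $\mathcal{N}_{A'B'\to AB}$ inside the resource state acts on the Choi-input registers, so it commutes with operations on the \emph{other} halves of the entangled pairs $\Phi_{L_A A'}$ and $\Phi_{B' L_B}$. It is therefore equivalent to first perform the two generalized Bell measurements on the enlarged state $\rho_{A'B'} \otimes \Phi_{L_A A'} \otimes \Phi_{B' L_B}$ and only then apply $\mathcal{N}$. The post-selected teleportation identity \eqref{eq:choi-sim}, together with the one-design property, says that the post-measurement state on the channel inputs with outcomes $(g,h)$ is $(\mathcal{U}_{A'}(g) \otimes \mathcal{V}_{B'}(h))(\rho_{A'B'})$, with each outcome occurring uniformly (with probability $1/(|G|\cdot|H|)$). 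Applying $\mathcal{N}$ and invoking the bicovariance relation of Definition~\ref{def:bicov} converts this to $(\mathcal{W}_A(g,h) \otimes \mathcal{T}_B(g,h))(\mathcal{N}(\rho_{A'B'}))$, which the local corrections undo exactly. Averaging over outcomes yields $\mathcal{N}_{A'B' \to AB}(\rho_{A'B'})$ on $AB$, as desired.

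The main step requiring care is the construction of the generalized Bell measurement when the one-design $\{U(g)\}_{g \in G}$ has more than $d_{A'}^2$ elements: the measurement must be implemented as a POVM of the form $\{(d_{A'}^2/|G|)\,|\Phi^g\rangle\!\langle\Phi^g|\}_{g \in G}$ rather than a projective measurement, with the one-design property ensuring that these operators sum to the identity (and analogously for Bob). Once this bookkeeping is in place, the post-selected teleportation identity \eqref{eq:choi-sim} applies cleanly in its averaged form, and the rest of the verification is the short algebraic identity already encoded in Definition~\ref{def:bicov}, so no deeper technical obstacle is expected.
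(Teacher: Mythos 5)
Your proposal is essentially the paper's own proof (Appendix~\ref{app:bicov}): the same LOCC protocol consisting of two local generalized Bell POVMs built from the one-designs (with elements scaled by $|A'|^2/|G|$ and $|B'|^2/|H|$, exactly your normalization caveat), classical communication of $(g,h)$, outcome-dependent local corrections justified by bicovariance, and the observation that $\mathcal{N}$ inside the resource state commutes with operations on the reference halves of $\Phi_{L_AA'}\otimes\Phi_{B'L_B}$. The only bookkeeping point to fix is a dagger: with the POVM elements $\propto (U_{A'}(g)\otimes I)\Phi(U_{A'}(g)^\dagger\otimes I)$, the heralded operation on the channel input is conjugation by $U_{A'}(g)^\dagger$ (the two transposes, one from the measurement and one from the resource state, cancel), so the corrections are $\mathcal{W}_A(g,h)$ and $\mathcal{T}_B(g,h)$ un-daggered, as in the paper; your daggered corrections would instead pair with Bell POVMs defined from $U(g)^\dagger$, and either convention yields the claimed simulation.
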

We give a proof of Proposition~\ref{prop:bicov}  in Appendix~\ref{app:bicov}. 
\bigskip

We now establish an upper bound on the  entanglement generation rate of any $(n,M,\varepsilon)$ PPT-assisted protocol that employs a bidirectional PPT-simulable channel.

\begin{theorem}\label{thm:rains-ent-dist-strong-converse-ppt}
For a fixed $n,\ M\in\mathbb{N},\ \varepsilon\in(0,1)$, the following strong converse bound holds for an $(n,M,\varepsilon)$ protocol for PPT-assisted bidirectional entanglement generation  over a bidirectional PPT-simulable quantum channel $\mc{N}$ with associated resource state $\theta_{D_AD_B}$, Definition~\ref{def:bi-ppt-sim}, $\forall \alpha>1$,
\begin{multline}\label{eq:rains-ent-dist-strong-converse-ppt}
\frac{1}{n} \log_2 M
 \leq \\ \widetilde{R}_{\alpha} (D_A;D_B)_{\theta} 
+\frac{\alpha}{n(\alpha-1)}\log_2 \!\(
\frac{1}{1-\varepsilon}\),
\end{multline} 
where $\widetilde{R}_{\alpha}(D_A;D_B)_{\theta}$ is the sandwiched Rains information \eqref{eq:alpha-rains-inf-state} of the resource state $\theta_{D_A D_B}$.
\end{theorem}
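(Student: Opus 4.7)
The plan is to combine the PPT-simulability structure of $\mathcal{N}$ with monotonicity and subadditivity of the sandwiched Rains relative entropy, together with a hypothesis-testing lower bound coming from the entanglement fidelity.

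First I would argue that any $(n,M,\varepsilon)$ PPT-assisted protocol from Section~\ref{sec:ent-dist-protocol} using the PPT-simulable channel $\mathcal{N}_{A'B'\to AB}$ can be rewritten as a single PPT-preserving channel applied to $\theta_{L_AL_B}^{\otimes n}$. Indeed, by Definition~\ref{def:bi-ppt-sim}, each use $\mathcal{N}_{A_i'B_i'\to A_iB_i}$ is implemented by tensoring the inputs with an independent copy $\theta_{L_{A_i}L_{B_i}}$ and applying a PPT-preserving map across the bipartite cut $L_{A_i}A_i':L_{B_i}B_i'$. Composition of PPT-preserving channels is PPT-preserving, so interleaving these simulations with the PPT-preserving assistance channels $\mathcal{P}^{(1)},\ldots,\mathcal{P}^{(n+1)}$ yields a single PPT-preserving channel $\widetilde{\mathcal{P}}:\theta_{L_AL_B}^{\otimes n}\mapsto \omega_{M_AM_B}$ between the bipartite cuts $L_A^n:L_B^n$ and $M_A:M_B$.

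Next, by monotonicity of the sandwiched R\'enyi relative entropy under PPT-preserving channels (applied inside the minimization defining $\widetilde{R}_\alpha$), I obtain
\begin{equation}
\widetilde{R}_\alpha(M_A;M_B)_\omega \leq \widetilde{R}_\alpha(L_A^n;L_B^n)_{\theta^{\otimes n}}.
\end{equation}
Taking the $n$-fold product $(\sigma^\ast)^{\otimes n}$ of an optimal $\sigma^\ast\in\PPT'(L_A\!:\!L_B)$ for $\theta$ as a feasible point — which remains in $\PPT'(L_A^n\!:\!L_B^n)$ since $\|\T_{L_B^n}((\sigma^\ast)^{\otimes n})\|_1 = \|\T_{L_B}(\sigma^\ast)\|_1^n \leq 1$ — and using additivity of $\widetilde{D}_\alpha$ on product states gives the subadditivity bound $\widetilde{R}_\alpha(L_A^n;L_B^n)_{\theta^{\otimes n}} \leq n\,\widetilde{R}_\alpha(L_A;L_B)_\theta$.

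For the lower bound on $\widetilde{R}_\alpha(M_A;M_B)_\omega$, I would apply the entanglement test $\{\Phi_{M_AM_B}, I-\Phi_{M_AM_B}\}$, using $\Tr\{\Phi\,\omega\}\geq 1-\varepsilon$ and Rains' inequality $\Tr\{\Phi\,\sigma\}\leq 1/M$ for every $\sigma\in\PPT'(M_A\!:\!M_B)$ (as invoked in the proof of Theorem~\ref{thm:rains-ent-dist-strong-converse}). By the data-processing inequality for $\widetilde{D}_\alpha$ with $\alpha\in(1,\infty)$, this reduces to a classical binary sandwiched R\'enyi divergence, and dropping the nonnegative $(1-p)^\alpha(1-q)^{1-\alpha}$ term yields
\begin{equation}
\widetilde{D}_\alpha(\omega_{M_AM_B}\|\sigma_{M_AM_B}) \geq \log_2 M - \frac{\alpha}{\alpha-1}\log_2\!\(\frac{1}{1-\varepsilon}\).
\end{equation}
Minimizing over $\sigma\in\PPT'(M_A\!:\!M_B)$ and chaining with the two inequalities above yields \eqref{eq:rains-ent-dist-strong-converse-ppt}.

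The main obstacle I anticipate is the bookkeeping in the first step: verifying that the composition of the PPT-preserving simulations and the assistance channels genuinely defines a PPT-preserving channel with respect to the fixed cut $L_A^n\!:\!L_B^n \to M_A\!:\!M_B$, since the shield/label systems $L_{A_i},L_{B_i}$ are introduced and consumed at different rounds and the cut is preserved throughout. Once this is handled, the monotonicity and subadditivity steps are routine, and the hypothesis-testing lower bound is a direct adaptation of the argument used for $R_{\max}$ in Theorem~\ref{thm:rains-ent-dist-strong-converse}.
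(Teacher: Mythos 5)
Your proof is correct, but it takes a structurally different route from the paper's. The paper handles the protocol via the amortization (telescoping) argument of Lemma~\ref{thm:ent-ppt-single-letter}: it first bounds $\widetilde{R}_{\alpha}(M_A;M_B)_\omega$ by $n$ times the supremum of the per-round differences $\widetilde{R}_{\alpha}(L_AA;BL_B)_{\mc{N}(\rho)}-\widetilde{R}_{\alpha}(L_AA';B'L_B)_{\rho}$, and only then invokes PPT-simulability together with monotonicity and subadditivity to bound each summand by $\widetilde{R}_{\alpha}(L_A;L_B)_\theta$. You instead collapse the entire protocol into a single PPT-preserving channel acting on $\theta_{L_AL_B}^{\otimes n}$ (a stretching-type reduction in the spirit of teleportation simulation), apply monotonicity of $\widetilde{R}_\alpha$ once, and then use the product-feasible-point argument $\|\T_{L_B^n}((\sigma^\ast)^{\otimes n})\|_1=\|\T_{L_B}(\sigma^\ast)\|_1^n\le 1$ plus additivity of $\widetilde{D}_\alpha$ to get $n\,\widetilde{R}_\alpha(L_A;L_B)_\theta$; the entanglement-test step is identical in both proofs and your binary-divergence computation is right. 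The bookkeeping you flag is genuine but routine: a PPT-preserving channel tensored with the identity on spectator systems split across the cut remains PPT-preserving (the partial transposes on spectators cancel), and compositions of PPT-preserving channels with consistent cuts are PPT-preserving, so the composed map from $L_A^n\!:\!L_B^n$ to $M_A\!:\!M_B$ is indeed PPT-preserving even though the copies of $\theta$ are consumed in different rounds. What each approach buys: your collapse argument is shorter for simulable channels and makes the single-letter bound transparent, while the paper's amortization formulation avoids the protocol-collapse bookkeeping entirely and, more importantly, is the version that still functions when the channel is not PPT-simulable (as in Theorem~\ref{thm:rains-ent-dist-strong-converse}, where no resource state exists and one must bound the amortized quantity by $R^{2\to2}_{\max}$ directly), which is why the paper routes Theorem~\ref{thm:rains-ent-dist-strong-converse-ppt} through the same lemma.
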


\begin{proof}
The first few steps are similar to those in the proof of Theorem~\ref{thm:rains-ent-dist-strong-converse}. From Section~\ref{sec:ent-dist-protocol}, we have that
\begin{equation}
\Tr\{\Phi_{M_AM_B}\omega_{M_AM_B}\}\geq 1-\varepsilon,
\end{equation}
while \cite[Lemma 2]{Rai99} implies that, $\forall \sigma_{M_AM_B}\in\PPT'(M_A\!:\!M_B)$, 
\begin{equation}
 \Tr\{\Phi_{M_AM_B}\sigma_{M_AM_B}\}\leq \frac{1}{M}.
\end{equation}
Under an \textquotedblleft entanglement test\textquotedblright, which is a measurement with POVM $\{\Phi_{M_AM_B},I_{M_AM_B}-\Phi_{M_AM_B}\}$, and applying the data processing inequality for the sandwiched R\'enyi relative entropy, we find that  (for details, see Lemma 5 of \cite{CMW14}), for all $\alpha>1$,
\begin{equation}
\log_2 M\leq \widetilde{R}_{\alpha}(M_A;M_B)_{\omega}+\frac{\alpha}{\alpha-1}\log_2\!\(\frac{1}{1-\varepsilon}\). \label{eq:rains-test-bound-ppt}
\end{equation}
The sandwiched Rains relative entropy is monotonically non-increasing under the action of completely PPT-preserving channels and vanishing for a PPT state. Applying Lemma~\ref{thm:ent-ppt-single-letter}, we find that  
\begin{align}
&\frac{1}{n}\widetilde{R}_{\alpha}(M_A;M_B)_\omega\leq\nonumber\\
&  \sup_{\rho_{L_AA'B'L_B}} \left[ \widetilde{R}_{\alpha}(L_{A}A;BL_{B})_{\mc{N}(\rho)}-\widetilde{R}_{\alpha}(L_{A}A';B'L_{B})_{\rho}\right].\label{eq:summand-ppt}
\end{align}

As stated in Definition~\ref{def:bi-ppt-sim}, a PPT-simulable bidirectional channel $\mc{N}_{A'B'\to AB}$ with associated resource state
$\theta_{D_AD_B}$
is such that, for any input state $\rho'_{A'B'}$, 
\begin{multline}
\mc{N}_{A'B'\to AB}\(\rho'_{A'B'}\)\\ =\mc{P}_{D_AA'B'D_B\to AB}
\(\rho'_{A'B'}\otimes\theta_{D_AD_B}\).
\end{multline}
Then, for any input state $\omega'_{L_AA'B'L_B}$, 
\begin{align}
&\widetilde{R}_{\alpha}(L_AA;BL_B)_{\mc{P}(\omega'\otimes\theta)}-\widetilde{R}_{\alpha}(L_AA';B'L_B)_{\omega'}\nonumber\\ 
&\leq \widetilde{R}_{\alpha}(D_AL_AA';B'L_BD_B)_{\omega'\otimes\theta}-\widetilde{R}_{\alpha}(L_AA';B'L_B)_{\omega'} \nonumber \\
&\leq \widetilde{R}_{\alpha}(L_AA';B'L_B)_{\omega'}+\widetilde{R}_{\alpha}(D_A;D_B)_\theta \nonumber \\
&\qquad -\widetilde{R}_{\alpha}(L_AA';B'L_B)_{\omega'}\nonumber \\
&=\widetilde{R}_{\alpha}(D_A;D_B)_\theta. \label{eq:reduce-bound-ppt}
\end{align}
The first inequality follows from monotonicity of $\widetilde{R}_{\alpha}$ with respect to completely PPT-preserving channels. The second inequality follows because $\widetilde{R}_{\alpha}$ is sub-additive with respect to tensor-product states.

Applying the bound in \eqref{eq:reduce-bound-ppt} to \eqref{eq:summand-ppt}, we find that
\begin{equation}
\widetilde{R}_{\alpha}(M_A;M_B)_\omega\leq n\widetilde{R}_{\alpha}(D_A;D_B)_\theta.\label{eq:rains-single-letter-proof-ppt}
\end{equation}
 Combining \eqref{eq:rains-test-bound-ppt} and \eqref{eq:rains-single-letter-proof-ppt}, we get the desired inequality in \eqref{eq:rains-ent-dist-strong-converse-ppt}. 
\end{proof}

\bigskip
Now we establish an upper bound on the  secret key rate of an $(n,K,\varepsilon)$ secret-key-agreement protocol that employs a bidirectional teleportation-simulable channel.  

\begin{theorem}\label{thm:rel-ent-dist-strong-converse-tel}
For a fixed $n,\ K\in\mathbb{N},\ \varepsilon\in(0,1)$, the following strong converse bound holds for an $(n,K,\varepsilon)$ protocol for secret key agreement  over a bidirectional teleportation-simulable quantum channel $\mc{N}$ with associated resource state $\theta_{D_AD_B}$: $\forall \alpha>1$,
\begin{equation}\label{eq:rel-ent-dist-strong-converse-tel}
\frac{1}{n} \log_2 K
\leq \widetilde{E}_{\alpha} (D_A;D_B)_{\theta}
+\frac{\alpha}{n(\alpha-1)}\log_2\!\( \frac{1}{1-\varepsilon}\)
,
\end{equation} 
where $\widetilde{E}_{\alpha}(D_A;D_B)_{\theta}$ is the sandwiched relative entropy of entanglement \eqref{eq:rel-ent-state} of the resource state $\theta_{D_AD_B}$.
\end{theorem}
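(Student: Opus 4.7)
The plan is to mirror the proof of Theorem~\ref{thm:rains-ent-dist-strong-converse-ppt} almost line-for-line, with the PPT-simulation and Rains machinery replaced by LOCC-simulation and relative-entropy-of-entanglement machinery, and with the entanglement test replaced by the $\gamma$-privacy test from Section~\ref{sec:rev-priv-states}.

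First, I would invoke the $(n,K,\varepsilon)$ condition $F(\omega_{S_AK_AK_BS_B},\gamma_{S_AK_AK_BS_B})\geq 1-\varepsilon$ for some bipartite private state $\gamma$ of key dimension $K$, and apply the $\gamma$-privacy test $\{\Pi^\gamma,I-\Pi^\gamma\}$. By the two inequalities reviewed in Section~\ref{sec:rev-priv-states}, $\omega$ passes with probability $\geq 1-\varepsilon$ while any $\sigma\in\SEP(S_AK_A\!:\!K_BS_B)$ passes with probability $\leq 1/K$. Applying the data-processing inequality for $\widetilde{D}_\alpha$ (valid for $\alpha>1$) under the binary-outcome measurement, and then optimizing over separable $\sigma$, yields the single-letter bound
\begin{equation}
\log_2 K \leq \widetilde{E}_\alpha(S_AK_A;K_BS_B)_\omega+\frac{\alpha}{\alpha-1}\log_2\!\(\frac{1}{1-\varepsilon}\).
\end{equation}

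Next, I would observe that $\widetilde{E}_\alpha(A;B)_\rho$ vanishes on $\SEP(A\!:\!B)$ and is monotone non-increasing under LOCC channels, so Lemma~\ref{thm:ent-locc-single-letter} applies with $\Ent_{\LOCC}=\widetilde{E}_\alpha$, giving
\begin{equation}
\widetilde{E}_\alpha(S_AK_A;K_BS_B)_\omega \leq n\sup_{\rho_{L_AA'B'L_B}}\left[\widetilde{E}_\alpha(L_AA;BL_B)_{\mc{N}(\rho)}-\widetilde{E}_\alpha(L_AA';B'L_B)_\rho\right].
\end{equation}
Then, using the teleportation-simulability of $\mc{N}$ via an LOCC channel $\mc{L}_{L_AA'B'L_B\to AB}$ acting on the input together with the resource state $\theta_{L_AL_B}$, each summand can be bounded by applying LOCC monotonicity of $\widetilde{E}_\alpha$ followed by sub-additivity on tensor-product inputs (exactly as in \eqref{eq:reduce-bound-ppt} but with the bipartite cut $L_AS_AA'\!:\!B'S_BL_B$ for the separable optimizer), so that the amortized quantity collapses to $\widetilde{E}_\alpha(L_A;L_B)_\theta$.

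Combining the two bounds gives \eqref{eq:rel-ent-dist-strong-converse-tel}. The only place that requires any genuine care is the amortization-collapse step: I need the LOCC simulation map to act on the correct bipartite cut so that both monotonicity and sub-additivity of $\widetilde{E}_\alpha$ can be applied with respect to the same Alice/Bob partition, and I need the simulation to be valid when the channel is fed part of an arbitrary purifying reference $L_AL_B$, which it is because teleportation simulation (Definition~\ref{def:bi-tel-sim}) holds for all input states $\rho_{A'B'}$ and thus commutes with arbitrary reference systems. Everything else is routine since the sandwiched relative entropy of entanglement has all the needed properties (data processing for $\alpha>1$, vanishing on SEP, LOCC monotonicity, and sub-additivity on product states).
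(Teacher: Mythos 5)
Your proposal is correct and follows essentially the same route as the paper's proof: the $\gamma$-privacy test together with data processing for $\widetilde{D}_\alpha$ (which the paper imports from \cite[Section 5.2]{WTB16}) gives the single-shot bound, Lemma~\ref{thm:ent-locc-single-letter} with $\Ent_{\LOCC}=\widetilde{E}_\alpha$ gives the amortized bound, and the teleportation-simulation argument with LOCC monotonicity and sub-additivity collapses the amortized quantity to $\widetilde{E}_\alpha(L_A;L_B)_\theta$, exactly as in \eqref{eq:reduce-bound-tel}. The only difference is the order of the steps, which is immaterial.
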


\begin{proof}
As stated in Definition~\ref{def:bi-ppt-sim}, a bidirectional teleportation-simulable  channel $\mc{N}_{A'B'\to AB}$ is such that, for any input state $\rho'_{A'B'}$,
\begin{multline}
\mc{N}_{A'B'\to AB}\(\rho'_{A'B'}\)\\ =\mc{L}_{D_AA'B'D_B\to AB}\(\rho'_{A'B'}\otimes\theta_{D_AD_B}\).
\end{multline}
Then, for any input state $\omega'_{L'_AA'B'L'_B}$, 
\begin{align}
&\widetilde{E}_{\alpha}(L'_AA;BL'_B)_{\mc{L}(\omega'\otimes\theta)}-\widetilde{E}_{\alpha}(L'_AA';B'L'_B)_{\omega'}\nonumber\\ 
&\leq \widetilde{E}_{\alpha}(D_AL'_AA';B'L'_BD_B)_{\omega'\otimes\theta}-\widetilde{E}_{\alpha}(L'_AA';B'L'_B)_{\omega'}\nonumber\\
&\leq \widetilde{E}_{\alpha}(L'_AA';B'L'_B)_{\omega'}+\widetilde{E}_{\alpha}(D_A;D_B)_\theta \nonumber\\ 
&\qquad  -\widetilde{E}_{\alpha}(L'_AA';B'L'_B)_{\omega'}\nonumber\\
&=\widetilde{E}_{\alpha}(D_A;D_B)_\theta. \label{eq:reduce-bound-tel}
\end{align}
The first inequality follows from monotonicity of $\widetilde{E}_{\alpha}$ with respect to LOCC channels. The second inequality follows because $\widetilde{E}_{\alpha}$ is sub-additive.

From Section~\ref{sec:priv-dist-protocol}, the following inequality holds for an $(n,K,\varepsilon)$ protocol:
\begin{equation}
F(\omega_{S_AK_AK_BS_B},\gamma_{S_AK_AK_BS_B})\geq 1-\varepsilon,
\end{equation}
for some bipartite private state $\gamma_{S_AK_AK_BS_B}$ with key dimension $K$. From Section~\ref{sec:rev-priv-states}, $\omega_{S_AK_AK_BS_B}$ passes a $\gamma$-privacy test with probability at least $1-\varepsilon$, whereas any $\tau_{S_AK_AK_BS_B}\in\SEP(S_AK_A:K_BS_B)$ does not pass with probability greater than $\frac{1}{K}$ \cite{HHHO09}. Making use of the results in \cite[Section 5.2]{WTB16}, we conclude that
\begin{align}\label{eq:emax-test-bound-tel}
\log_2 K \leq \widetilde{E}_{\alpha}(S_AK_A;K_BS_B)_{\omega}+\frac{\alpha}{\alpha-1}\log_2\!\(\frac{1}{1-\varepsilon}\).
\end{align}
Now we can follow steps similar to those in the proof of Theorem~\ref{thm:rains-ent-dist-strong-converse-ppt} in order to arrive at \eqref{eq:rel-ent-dist-strong-converse-tel}. 
\end{proof}

\bigskip 

We can also establish the following weak converse bounds, by combining the above approach with that in \cite[Section~3.5]{KW17}:

\begin{remark}
The following weak converse bound holds for an $(n,M,\varepsilon)$ PPT-assisted bidirectional quantum communication protocol (Section~\ref{sec:ent-dist-protocol}) that employs a bidirectional PPT-simulable quantum channel $\mc{N}$ with associated resource state $\theta_{L_AL_B}$ 
\begin{equation}\label{eq:rel-ent-dist-strong-converse-ppt-2}
(1-\varepsilon)\frac{\log_2 M}{n} \leq R(L_A;L_B)_{\theta}+\frac{1}{n}h_2(\varepsilon),
\end{equation} 
where $R(L_A;L_B)_{\theta}$ is defined in \eqref{eq:rains-inf-state} and $h_2(\varepsilon)\coloneqq -\varepsilon\log_2\varepsilon-(1-\varepsilon)\log_2(1-\varepsilon)$. 
\end{remark}

\begin{remark}\label{rem:tel-sim-priv-dist}
The following weak converse bound holds for an $(n,K,\varepsilon)$ LOCC-assisted bidirectional secret key agreement protocol (Section~\ref{sec:priv-dist-protocol}) that employs a bidirectional teleportation-simulable quantum channel $\mc{N}$ with associated resource state $\theta_{D_AD_B}$
\begin{equation}\label{eq:rel-ent-dist-strong-converse-tel-2}
(1-\varepsilon)\frac{\log_2 K}{n} \leq E(D_A;D_B)_{\theta}+\frac{1}{n}h_2(\varepsilon),
\end{equation} 
where $E(D_A;D_B)_{\theta}$ is defined in \eqref{eq:rel-ent-state-1}. 
\end{remark}

Since every LOCC channel $\mc{L}_{D_AA'B'D_B\to AB}$ acting with respect to the bipartite cut $D_AA':D_BB'$ is also a completely PPT-preserving channel with the partial transposition action on $D_BB'$, it follows that bidirectional teleportation-simulable channels are also bidirectional PPT-simulable channels. Based on Proposition~\ref{prop:bicov}, Theorem~\ref{thm:rains-ent-dist-strong-converse-ppt}, Theorem~\ref{thm:rel-ent-dist-strong-converse-tel}, and the limits $n \to \infty$ and then $\alpha\to 1$ (in this order),\footnote{One could also set $\alpha = 1+1/\sqrt{n}$ and then take the limit $n \to \infty$.} we can then  conclude the following strong converse bounds:
\begin{corollary}
\label{cor:str-conv-TP-simul}
If a bidirectional quantum channel $\mc{N}$ is bicovariant (Definition~\ref{def:bicov}), then 
\begin{align}
\widetilde{Q}^{2\to2}_{\PPT}(\mc{N})& \leq R(L_A A;BL_B)_{\theta},\\
\widetilde{P}^{2\to2}_{\LOCC}(\mc{N})&\leq E(L_AA;BL_B)_{\theta},
\end{align}
where $\theta_{L_{A}ABL_{B}}=\mathcal{N}_{A^{\prime}B^{\prime}\rightarrow AB}
(\Phi_{L_{A}A^{\prime}}\otimes\Phi_{B^{\prime}L_{B}})$, and $\widetilde{Q}^{2\to2}_{\PPT}(\mc{N})$ and $\widetilde{P}^{2\to2}_{\LOCC}(\mc{N})$ denote the strong converse PPT-assisted bidirectional quantum capacity and strong converse LOCC-assisted bidirectional secret-key-agreement capacity, respectively, of a bidirectional channel $\mc{N}$. 
\end{corollary}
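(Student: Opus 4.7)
The plan is to chain together three ingredients already in the paper: Proposition~\ref{prop:bicov} to convert the symmetry assumption into a simulation statement, Theorems~\ref{thm:rains-ent-dist-strong-converse-ppt} and \ref{thm:rel-ent-dist-strong-converse-tel} to control any finite-block protocol in terms of sandwiched R\'enyi quantities of the resource state, and finally two limit passes $n\to\infty$ and $\alpha\to 1^+$ to extract the strong converse capacities.

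First I would invoke Proposition~\ref{prop:bicov}: since $\mc{N}_{A'B'\to AB}$ is bicovariant (Definition~\ref{def:bicov}), it is teleportation-simulable with resource state $\theta_{L_AABL_B} = \mc{N}_{A'B'\to AB}(\Phi_{L_AA'}\otimes\Phi_{B'L_B})$, in the sense of Definition~\ref{def:bi-tel-sim} applied to the cut $L_AA\!:\!BL_B$. Because every LOCC channel is a PPT-preserving channel, the same resource state witnesses PPT-simulability (Definition~\ref{def:bi-ppt-sim}) of $\mc{N}$ across that cut. Thus both hypotheses required by Theorems~\ref{thm:rains-ent-dist-strong-converse-ppt} and \ref{thm:rel-ent-dist-strong-converse-tel} are met, with the role played by ``$L_A$'' in those theorems filled by the composite system $L_AA$ and ``$L_B$'' by $BL_B$.

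Next I apply Theorem~\ref{thm:rains-ent-dist-strong-converse-ppt} to any $(n,M,\varepsilon)$ PPT-assisted bidirectional entanglement generation protocol, yielding, for every $\alpha>1$,
\begin{equation}
\frac{1}{n}\log_2 M \;\leq\; \widetilde{R}_{\alpha}(L_AA;BL_B)_{\theta} + \frac{\alpha}{n(\alpha-1)}\log_2\!\(\frac{1}{1-\varepsilon}\).
\end{equation}
Fixing $\varepsilon\in[0,1)$ and letting $n\to\infty$ kills the second term and shows that every strong converse rate is at most $\widetilde{R}_{\alpha}(L_AA;BL_B)_{\theta}$ for every $\alpha>1$. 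Passing then to $\alpha\to 1^+$ and using the standard fact $\lim_{\alpha\to 1^+}\widetilde{R}_{\alpha}(\cdot;\cdot)_{\theta} = R(\cdot;\cdot)_{\theta}$ (a direct consequence of \eqref{eq:mono_renyi} combined with the minimization over $\PPT'$ in \eqref{eq:alpha-rains-inf-state}) produces the first inequality $\widetilde{Q}^{2\to 2}_{\PPT}(\mc{N})\leq R(L_AA;BL_B)_{\theta}$. The second inequality is obtained in the same way, but starting from Theorem~\ref{thm:rel-ent-dist-strong-converse-tel} in place of Theorem~\ref{thm:rains-ent-dist-strong-converse-ppt} and using $\lim_{\alpha\to 1^+}\widetilde{E}_{\alpha}= E$, as recorded in \eqref{eq:rel-ent-state-1}.

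The main thing to be careful about is the order of the two limits, which is why the paper's footnote mentions the alternative scaling $\alpha = 1+1/\sqrt{n}$. Taking $n\to\infty$ first (for fixed $\alpha>1$) is precisely what produces a strong converse bound at the R\'enyi level, and continuity of $\widetilde{R}_{\alpha}$ and $\widetilde{E}_{\alpha}$ at $\alpha=1$ then lets us tighten to the Rains and relative entropy of entanglement of $\theta_{L_AABL_B}$. I do not expect a serious obstacle: Proposition~\ref{prop:bicov} supplies the needed simulation, the two theorems already provide the $n$-letter inequalities, and the $\alpha\to 1^+$ limits are already cited in the excerpt; the proof is essentially a bookkeeping composition of these statements.
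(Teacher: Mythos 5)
Your proposal is correct and follows essentially the same route as the paper: bicovariance gives teleportation simulability via Proposition~\ref{prop:bicov}, the inclusion of LOCC in PPT-preserving channels gives PPT-simulability with the same resource state, and then Theorems~\ref{thm:rains-ent-dist-strong-converse-ppt} and~\ref{thm:rel-ent-dist-strong-converse-tel} are applied with the cut $L_AA\!:\!BL_B$, followed by the limits $n\to\infty$ and then $\alpha\to 1$. The only nitpick is the phrasing ``every strong converse rate is at most $\widetilde{R}_{\alpha}$''; what the argument actually shows is that every rate strictly above $\widetilde{R}_{\alpha}$ is a strong converse rate, so the infimum of strong converse rates is at most $\widetilde{R}_{\alpha}$, which is what is needed.
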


\section{Private reading of a read-only memory device}\label{sec:priv-read}
Devising a communication or information processing protocol that is secure against an eavesdropper is an area of primary interest in information theory. In this section, we introduce the task of private reading of information stored in a memory device. A secret message can either be  encrypted in a computer program with circuit gates or in a physical storage device, such as a CD-ROM, DVD, etc. Here we limit ourselves to the case in which these computer programs or physical storage devices are used for read-only tasks; for simplicity, we refer to such media as memory devices. 

In \cite{BRV00}, a communication setting was considered in which a memory cell consists of unitary operations that encode a classical message. This model  was generalized and studied under the name \textquotedblleft quantum reading\textquotedblright\ in \cite{Pir11}, and it was  applied to the setting of an optical memory. In subsequent works \cite{PLG+11,LP16,DW17}, the model was extended to a memory cell consisting of arbitrary quantum channels.
In \cite{DW17}, the most  natural and general definition of the reading capacity of a memory cell was given, and this work also determined the reading capacities for some broad classes of memory cells. 
Quantum reading can be understood as a direct application of quantum channel discrimination \cite{Kit97,Fuj01,DPP01,Aci01,WY06,DFY09,HHLW10,CMW14,DGLL16}. In many cases, one can achieve performance better than what can be achieved when using a classical strategy \cite{PLG+11,GDN+11,GW12,WGTL12,LP16}. In \cite{Spe15}, the author discussed the security of a message encoded using a particular class of optical memory cells against readers employing classical strategies. 

In a reading protocol, it is assumed that the reader has a description of a memory cell, which is a set of quantum channels. The memory cell is used to encode a classical message in a memory device. The memory device containing the encoded message is then delivered to the interested reader, whose task is to read out the message stored in it. To decode the message,
the reader can transmit a quantum state to the memory device and perform a quantum measurement on the output state. In general, since quantum channels are noisy, there is a loss of information to the environment, and there is a limitation on how well information can be read out from the memory device.

To motivate the task of private reading, consider that once reading devices equipped with quantum systems are built, the readers can use these devices to transmit quantum states as a probe and then perform a joint measurement for reading the memory device. There could be a circumstance in which an individual would have to access a reading device in a public library under the surveillance of a librarian or other parties, whom we suppose to be a passive eavesdropper Eve. In such a situation, an individual would want information in a memory device not to be leaked to Eve, who has access to the environment, for security and privacy reasons. This naturally gives rise to the question of whether there exists a protocol for reading out a classical message that is secure from a passive eavesdropper.

In what follows, we introduce the details of private reading: briefly, it is the task of reading out a classical message (key) stored in a memory device, encoded with a memory cell, by the reader such that the message is not leaked to Eve. We also mention here that private reading can be understood as a particular kind of secret-key-agreement protocol that employs a particular kind of bipartite interaction, and thus, there is a strong link between the developments in Section~\ref{sec:priv-key} and what follows (we elaborate on this point in what follows).

\subsection{Private reading protocol}

\label{sec:q-r-p}

\begin{figure*}
		\centering
		\includegraphics[scale=0.75]{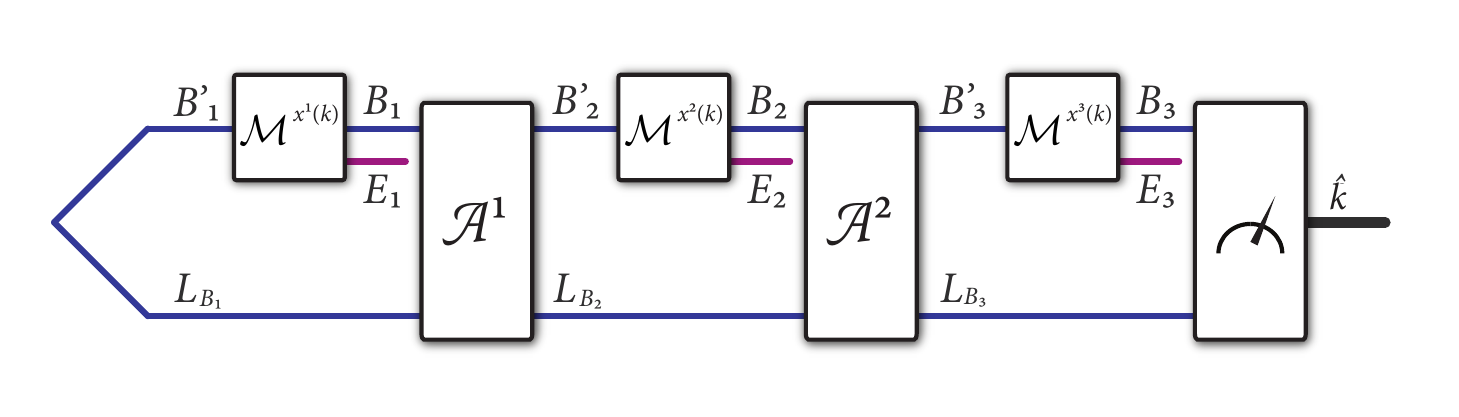}
		\caption{The figure depicts a private reading protocol that calls a memory cell three times to
decode the key $k$ as $\hat{k}$. See the discussion in Section~\ref{sec:q-r-p} for a detailed description of a private
reading protocol.}\label{fig:priv-read}
	\end{figure*}

In a private reading protocol, we consider an encoder and a reader (decoder). Alice, an encoder, is one who encodes a secret classical message onto a read-only memory device that is delivered to Bob, a receiver, whose task is to read the message. We also refer to Bob as the reader. The private reading task comprises the estimation of the secret message encoded in the form of a sequence of quantum wiretap channels chosen from a given set $\{\mc{M}^x_{B'\to BE}\}_{x\in\mc{X}}$ of quantum wiretap channels (called a wiretap memory cell), where $\mathcal{X}$ is an alphabet, such that there is negligible leakage of information to Eve, who has access to the system $E$.
A special case of this is when each wiretap channel $\mc{M}^x_{B'\to BE}$ is an isometric channel.
In the most natural and general setting, the reader can use an adaptive strategy when decoding, as considered in \cite{DW17}.

Consider a set $\{\mc{M}^x_{B'\to BE}\}_{x\in\mc{X}}$ of wiretap quantum channels, where the size of $B'$, $B$, and $E$ are fixed and independent of $x$. The memory cell from the encoder Alice to the reader Bob is as follows: $\overline{\mathcal{M}}_{\mathcal{X}}=\{\mathcal{M}^x_{B'\to B}\}_{x}$, where
\begin{equation}
\forall x\in\mc{X}:\ \mathcal{M}^x_{B'\to B}(\cdot)\coloneqq \Tr_E\{\mc{M}^x_{B'\to BE}(\cdot)\},
\end{equation}
 which may also be known to Eve, before executing the reading protocol. We assume only the systems $E$ are accessible to Eve for all channels $\mc{M}^{x}$ in a memory cell. Thus, Eve is a passive eavesdropper in the sense that  all she can do is to access the output of the channels \begin{equation}
\forall x\in\mc{X}:\  \mc{M}^x_{B'\to E}(\cdot)=\Tr_{B}\left\{\mc{M}^x_{B'\to B E}(\cdot)\right\}.
 \end{equation}
 
 We consider a classical message set $\mathcal{K}=\{1,2,\ldots,K\}$, and let $K_A$ be an associated system denoting a classical register for the secret message. In general, Alice encodes a message $k\in\mathcal{K}$ using a codeword
$x^n(k)=x_1(k)x_2(k)\cdots x_n(k)$
of length $n$, where $x_i(k)\in\mathcal{X}$ for all $i\in\{1,2,\ldots,n\}$. Each codeword identifies with a corresponding sequence of quantum channels chosen from the wiretap memory cell $\overline{\mathcal{M}}_{\mathcal{X}}$:
\begin{equation} 
\(\mathcal{M}^{x_1(k)}_{{B}_1'\to B_1 E_1}, \mathcal{M}^{x_2(k)}_{B_2'\to B_2 E_2},\ldots,\mathcal{M}^{x_n(k)}_{B_n'\to B_n E_n}\).
\end{equation}

An adaptive decoding strategy makes $n$ calls to the memory cell, as depicted in Figure~\ref{fig:priv-read}. It is specified in terms of a transmitter state $\rho_{L_{B_1}B_1'}$, a set of adaptive, interleaved channels $\{\mc{A}^i_{L_{B_i}B_i\to L_{B_{i+1}}B'_{i+1}}\}_{i=1}^{n-1}$, and a final quantum measurement $\{\Lambda^{(\hat{k})}_{L_{B_n}B_n}\}_{\hat{k}}$ that outputs an estimate~$\hat{k}$ of the message~$k$. The strategy begins with Bob preparing the input state $\rho_{L_{B_1}B'_1}$ and sending the $B'_1$ system into the channel $\mc{M}^{x_1(k)}_{B'_1\to B_1 E_1}$. The channel outputs the system $B_1$ for Bob. He adjoins the system $B_1$ to the system $L_{B_1}$ and applies the channel $\mc{A}^1_{L_{B_1}B_1\to L_{B_2}B'_2}$. The channel $\mc{A}^i_{L_{B_i}B_i\to L_{B_{i+1}}B'_{i+1}}$ is called adaptive because it can take an action conditioned on the information in the system $B_i$, which itself might contain partial information about the message~$k$. Then, he sends the system $B'_2$ into the  channel $\mc{M}^{x_2(k)}_{B'_2\to B_2 E_2}$, which outputs systems $B_2$ and $E_2$. The process of successively using the channels interleaved by the adaptive channels  continues $n-2$ more times, which results in the final output systems $L_{B_n}$ and $B_n$ with Bob. Next, he performs a measurement $\{\Lambda^{(\hat{k})}_{L_{B_n}B_n}\}_{\hat{k}}$ on the output state $\rho_{L_{B_n} B_n}$, and the measurement outputs an estimate $\hat{k}$ of the original message $k$. It is natural to assume that the outputs of the adaptive channels and their complementary channels are inaccessible to Eve and are instead held securely by Bob.

The physical model that we assume, as is standard in QKD protocols, is that Bob's local laboratory is secure. So Bob can perform whatever local operations that he would like to in his lab. Furthermore, without loss of generality, Bob can perform all of these local steps as isometric channels, sending the original output as output and keeping the former environment to himself, thus ensuring that the new complement of each isometric channel is trivial so that Eve gets no information from these steps. So the task does not change even if we assume that Eve has access to the complements of each of the adaptive channels since it is possible to do things in this way without loss of generality.

It is apparent that a non-adaptive strategy is a special case of an adaptive strategy. In a non-adaptive strategy, the reader does not perform any adaptive channels and instead uses $\rho_{L_B{B'}^n}$ as the transmitter state with each $B'_i$ system passing through the corresponding channel $\mc{M}^{x_i(k)}_{B'_i\to B_i E_i}$ and $L_B$ being a reference system. The final step in such a non-adaptive strategy is to perform a decoding measurement on the  joint system $L_BB^n$. 

As argued in \cite{DW17}, based on the physical setup of quantum reading, in which the reader assumes the role of both a transmitter and receiver, it is natural to consider the use of an adaptive strategy when defining the private reading capacity of a memory cell. 

\begin{definition}[Private reading protocol]\label{def:QR}
An $(n,K,\varepsilon,\delta)$ private reading protocol for a wiretap memory cell $\overline{\mc{M}}_{\mathcal{X}}$ is defined by an encoding map $\mc{K}\to \mc{X}^{\otimes n}$, an adaptive strategy with measurement $\{\Lambda_{L_{B_n} B_n}^{(\hat{k})}\}_{\hat{k}}$, such that the average success probability is at least $1-\varepsilon$ where $\varepsilon\in(0,1)$:
\begin{equation}
1- \varepsilon \leq 1 - p_{\operatorname{err}} :=\frac{1}{K} \sum_{k}\Tr\left\{\Lambda^{(k)}_{L_{B_n}B_n}\rho^{(k)}_{L_{B_n}B_n}\right\},
\end{equation} 
where
\begin{multline}
\label{eq3.3}
\rho^{(k)}_{L_{B_n}B_nE^n}=\Big(\mc{M}^{x_n(k)}_{B'_n\to B_nE_n}\circ\mc{A}^{{n-1}}_{L_{B_{n-1}}B_{n-1}\to L_{B_n}B'_n}\circ\\
\cdots\circ\mc{A}^{1}_{L_{B_1}B_1\to L_{B_2}B'_2}\circ\mc{M}^{x_1(k)}_{B'_1\to B_1E_1}\Big)\(\rho_{L_{B_1}B'_1}\).
\end{multline}
Furthermore, the security condition is that
\begin{equation}
\frac{1}{K}\sum_{ k\in\mc{K}} \frac{1}{2}\left\|\rho^{(k)}_{E^n}- \tau_{E^n}\right\|_1\leq\delta,
\end{equation}
where $\rho^{(k)}_{E^n}$ denotes the state accessible to the passive eavesdropper when message $k$ is encoded. Also, $\tau_{E^n}$ is some fixed state. The rate $P\coloneqq\frac{1}{n}\log_2 K$ of a given $(n,K,\varepsilon,\delta)$ private reading protocol is equal to the number of secret bits read per channel use.
\end{definition}

Based on the discussions in \cite[Appendix~B]{WTB16}, there are connections between the notions of private communication given in Section~\ref{sec:priv-dist-protocol} and Definition~\ref{def:QR}, and we exploit these in what follows. 

To arrive at a definition of the private reading capacity, we demand that there exists a sequence of private reading protocols, indexed by $n$, for which the error probability $p_{\operatorname{err}}\to 0$ and security parameter $\delta\to 0$ as $n\to \infty$ at a fixed rate~$P$.

A rate $P$ is called achievable if for all $\varepsilon,\delta \in (0,1]$, $\delta^\prime >0$, and sufficiently large $n$, there exists an $(n,2^{n(P-\delta^\prime)},\varepsilon,\delta)$ private reading protocol. 
The private reading capacity $P^{\textnormal{read}}(\overline{\mc{M}}_{\mc{X}})$ of a wiretap memory cell $\overline{\mc{M}}_{\mc{X}}$ is defined as the supremum of all achievable rates.

An $(n,K,\varepsilon,\delta)$ private reading protocol for a wiretap memory cell $\overline{\mc{M}}_{\mathcal{X}}$ is a non-adaptive private reading protocol when the reader abstains from employing any adaptive strategy for decoding. 
The non-adaptive private reading capacity $P^{\textnormal{read}}_{\textnormal{n-a}}(\overline{\mc{M}}_{\mc{X}})$ of a wiretap  memory cell $\overline{\mc{M}}_{\mc{X}}$ is defined as the supremum of all achievable rates for a private reading protocol that is limited to non-adaptive strategies.

\subsection{Non-adaptive private reading capacity}\label{sec:na-priv-read}

In what follows we restrict our attention to reading protocols that employ a non-adaptive strategy, and we now derive a regularized expression  for the non-adaptive private reading capacity of a general wiretap memory cell.

\begin{theorem}\label{thm:n-a-priv-read}
The non-adaptive private reading capacity of a wiretap memory cell $\overline{\mc{M}}_{\mc{X}}$ is given by
\begin{multline}
P^{\textnormal{read}}_{\textnormal{n-a}}\(\overline{\mc{M}}_{\mc{X}}\) =
\\ \sup_{n}\max_{p_{X^n},\sigma_{L_B{B'}^n}}\frac{1}{n}\[I(X^n;L_BB^n)_\tau-I(X^n;E^n)_\tau\],
\end{multline}
where
\begin{multline}\label{eq:cq-na-read}
\tau_{X^nL_BB^nE^n} \coloneqq
\\\sum_{x^n}p_{X^n}(x^n)\ket{x^n}\!\bra{x^n}_{X^n}\otimes \mc{M}^{x^n}_{{B'}^n\to B^nE^n}(\sigma_{L_B{B'}^n}),
\end{multline}
and it suffices for $\sigma_{L_B{B'}^n}$ to be  a pure state such that $L_B \simeq {B'}^n$.
\end{theorem}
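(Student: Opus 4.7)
The strategy is to recognize that, for any fixed block length $n$ and probe state $\sigma_{L_B {B'}^n}$, a non-adaptive private reading protocol is equivalent to private classical communication over the induced cq-wiretap channel $\mathcal{W}_n^{\sigma}\colon x^n \mapsto \mc{M}^{x^n}_{{B'}^n\to B^n E^n}(\sigma_{L_B {B'}^n})$, with legitimate output system $L_B B^n$ and wiretapper output system $E^n$. The capacity can then be read off from the Devetak--Cai--Winter--Yeung (DCW) formula for cq-wiretap private capacity, and the regularization over $n$ accounts for the fact that the reader may correlate successive uses of the memory cell through a joint probe.

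For achievability, I would fix $n$, a distribution $p_{X^n}$, and a probe $\sigma_{L_B {B'}^n}$, which I can take to be pure with $L_B \simeq {B'}^n$ without loss (purifying a mixed probe and retaining the purifying system in Bob's laboratory can only enhance his received state while leaving Eve's marginal unchanged; the Schmidt decomposition then bounds $|L_B|$). Applying the DCW coding theorem to $\mathcal{W}_n^{\sigma}$ used $m$ times yields a sequence of codes achieving rate $I(X^n; L_B B^n)_\tau - I(X^n; E^n)_\tau$ per use of $\mathcal{W}_n^{\sigma}$ with vanishing decoding error and vanishing trace distance between Eve's state and a fixed product state. Since each use of $\mathcal{W}_n^{\sigma}$ corresponds to $n$ uses of the memory cell, the per-channel-use rate is $\frac{1}{n}[I(X^n; L_B B^n)_\tau - I(X^n; E^n)_\tau]$, and taking the supremum over $n$, $p_{X^n}$, and pure probes gives the direct part.

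For the converse, consider an arbitrary $(N, K, \varepsilon, \delta)$ non-adaptive private reading protocol, with uniform message $K_A$, codeword $X^N$ (a deterministic function of $K_A$), and joint cq-state of the form in~\eqref{eq:cq-na-read}. Fano's inequality together with data processing yields
\begin{equation}
(1-\varepsilon)\log_2 K \le I(K_A; L_B B^N)_\tau + h_2(\varepsilon),
\end{equation}
while the security condition, combined with convexity and the AFW inequality (Lemma~\ref{thm:AFW}), yields $I(K_A; E^N)_\tau \le 2\delta \log_2 K + g(2\delta)$. Subtracting these and using the identity $I(K_A;Y) = I(X^N;Y)$ that follows from the deterministic encoding $X^N = f(K_A)$ together with the Markov chain $K_A \to X^N \to (L_B B^N, E^N)$, I obtain
\begin{equation}
(1-\varepsilon-2\delta)\log_2 K \le I(X^N;L_B B^N)_\tau - I(X^N; E^N)_\tau + h_2(\varepsilon) + g(2\delta).
\end{equation}
Dividing by $N$, upper-bounding the right side by the regularized supremum in the theorem statement with $n \coloneqq N$, and letting $\varepsilon, \delta \to 0$ completes the converse.

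The main obstacles are twofold. First, in the converse, the step $I(K_A;\cdot) = I(X^N;\cdot)$ relies on $X^N$ being a deterministic function of $K_A$ together with the Markov property, and must be handled carefully so as to avoid a (generally false) data-processing inequality for the wiretap-difference quantity. Second, the invocation of the DCW coding theorem for the induced cq-wiretap channel requires verifying that the standard private coding construction translates to the reading setting while respecting both the reliability and the security definitions of Definition~\ref{def:QR}; this is routine but requires attention to the norm conventions in the security criterion and to the fact that Bob's decoding measurement acts jointly on the reference system $L_B$ and the channel outputs $B^n$.
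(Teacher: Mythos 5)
Your proof follows essentially the same approach as the paper: reduce a non-adaptive reading protocol to private classical communication over the induced cq-wiretap channel $x^n \mapsto \mc{M}^{x^n}(\sigma_{L_B {B'}^n})$, invoke the Devetak--Cai--Winter--Yeung formula for achievability, and prove the converse via Fano's inequality, the Holevo bound, and the AFW continuity bound, with regularization over $n$. The only small discrepancies are that your explicit treatment of the identity $I(K_A;\cdot)=I(X^N;\cdot)$ is more careful than the paper's (which silently relies on the deterministic encoding), and your $2\delta\log_2 K + g(2\delta)$ is slightly more conservative than the paper's $\delta\log_2 K + g(\delta)$ obtained from the cq version of AFW (Lemma~\ref{thm:AFW}), though this does not affect the result after taking $\varepsilon,\delta\to 0$.
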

\begin{proof}
Let us begin by defining a cq-state corresponding to the task of private reading. Consider a wiretap memory cell $\overline{\mc{M}}_{\mc{X}}=\{\mc{M}^x_{B'\to BE}\}_{x\in\mc{X}}$. The initial  state $\rho_{K_AL_B{B'}^n}$ of a non-adaptive private reading protocol takes  the form
\begin{equation}
\rho_{K_AL_B{B'}^n}\coloneqq\frac{1}{K}\sum_{k}|k\>\<k|_{K_A}\otimes\rho_{L_B{B'}^n}.
\end{equation}
The action of the encoding is to apply an instrument that  measures the $K_A$ register and, conditioned on the outcome, presents Bob with a channel codeword sequence $\mathcal{M}^{x^n(k)}_{{B'}^n\to B^n E^n}\coloneqq \bigotimes_{i=1}^n\mc{M}^{x_i(k)}_{{B'}_i\to B_i E_i}$. Bob then passes the transmitter state $\rho_{L_B{B'}^n}$ through $\mathcal{M}^{x^n(k)}_{{B'}^n\to B^n E^n}$. 
Then the resulting state is
\begin{equation}
\rho_{K_AL_BB^nE^n}=\frac{1}{K}\sum_{k}|k\>\<k|_{K_A}\otimes\mc{M}^{x^n(k)}_{{B'}^n\to B^n E^n}\left(\rho_{L_B{B'}^n}\right).
\end{equation}
Let $\rho_{K_AK_B}=\mc{D}_{L_BB^n\to K_B}\(\rho_{K_AL_BB^n}\)$ be the output state at the end of the protocol after the decoding channel $\mc{D}_{L_BB^n\to K_B}$ is performed by Bob. The privacy criterion introduced in Definition~\ref{def:QR} requires that
\begin{equation}
\frac{1}{K} \sum_{ k\in\mc{K}}  \frac{1}{2}\left\Vert \rho^{x^n(k)}_{E^n}-\tau_{E^n}\right\V_1\leq\delta,
\end{equation}
where 
$\rho^{x^n(k)}_{E^n}\coloneqq \operatorname{Tr}_{L_BB^n} \{{\mc{M}^{x^n(k)}}_{{B'}^n\to B^n E^n}\left(\rho_{L_B{B'}^n}\right)\}$ and 
$\tau_{E^n}$
is some arbitrary constant state. Hence
\begin{align}
\delta & \geq \frac{1}{2}\sum_{k}\frac{1}{K}\left\Vert \rho^{x^n(k)}_{E^n}-\tau_{E^n}\right\Vert_1 \\ 
& =\frac{1}{2}\Vert \rho_{K_AE^n}-\pi_{K_A}\otimes\tau_{E^n}\Vert_1,
\end{align}
where $\pi_{K_A}$ denotes maximally mixed state, i.e., $\pi_{K_A}\coloneqq\frac{1}{K}\sum_{k}|k\>\<k|_{K_A}$. We note that
\begin{align}
I(K_A;E^n)_\rho &= S(K_A)_\rho-S(K_A|E^n)_\rho\\
&= S(K_A|E^n)_{\pi\otimes\tau}-S(K_A|E^n)_\rho\\
&\leq \delta\log_2 K+g(\delta),
\end{align}
which follows from an application of Lemma~\ref{thm:AFW}.

We are now ready to derive a weak converse bound on the private reading rate:
\begin{align}
& \log_2 K \nonumber \\
& = S(K_A)_\rho \nonumber\\ 
&=I(K_A;K_B)_\rho+S(K_A|K_B)_{\rho}\nonumber\\
&\leq I(K_A;K_B)_\rho+\varepsilon\log_2K+h_2(\varepsilon)\nonumber\\
&\leq I(K_A;L_BB^n)_\rho+\varepsilon\log_2K+h_2(\varepsilon)\nonumber\\ 
&\leq I(K_A;L_BB^n)_\rho-I(K_A;E^n)_\rho+\varepsilon\log_2K\nonumber\\ &\qquad +h_2(\varepsilon)+\delta\log_2K+g(\delta)\nonumber\\ 
&\leq \max_{p_{X^n},\sigma_{L_B{B'}^n}}\[I(X^n;L_BB^n)_\tau-I(X^n;E^n)_\tau\]\nonumber\\ &\qquad +\varepsilon\log_2K+h_2(\varepsilon)+\delta\log_2K+g(\delta),
\end{align}
where $\tau_{X^nL_BB^nE^n}$ is a state of the form  in \eqref{eq:cq-na-read}.
The first inequality follows from Fano's inequality \cite{F08}. The second inequality follows from the monotonicity of mutual information under the action of a local quantum channel by Bob (Holevo bound). The final inequality follows because the maximization is over all possible probability distributions and input states. Then,
\begin{multline}
\frac{\log_2K}{n} (1-\varepsilon-\delta)\\ \leq \max_{p_{X^n},\sigma_{L_B{B'}^n}}\frac{1}{n}\[I(X^n;L_BB^n)_\tau-I(X^n;E^n)_\tau\]\\ +\frac{h_2(\varepsilon)+g(\delta)}{n}.
\end{multline}
Now considering a sequence of non-adaptive $(n,K_n,\varepsilon_n,\delta_n)$ protocols with $\lim_{n \to \infty} \frac{\log_2 K_n}{n} = P$,
$\lim_{n \to \infty} \varepsilon_n = 0$, and $\lim_{n \to \infty} \delta_n = 0$, 
the converse bound on non-adaptive private reading capacity of memory cell $\overline{\mc{M}}_{\mc{X}}$ is given by
\begin{equation}\label{eq:priv-na-rad-up}
P\leq\sup_{n}\max_{p_{X^n},\sigma_{L_B{B'}^n}}\frac{1}{n}\[I(X^n;L_BB^n)_\tau-I(X^n;E^n)_\tau\],
\end{equation} 
which follows by taking the limit as $n\to \infty$.

It follows from the results of \cite{D05,DW05} that right-hand side of \eqref{eq:priv-na-rad-up} is also an achievable rate in the limit $n\to\infty$. Indeed, the encoder and reader can induce the cq wiretap channel $x \to \mc{M}^{x}_{{B'}\to B E}(\sigma_{L_BB'})$, to which the results of \cite{D05,DW05} apply. A regularized coding strategy then gives the general achievability statement. Therefore, the non-adaptive private reading capacity is given as stated in the theorem.
\end{proof}

\subsection{Purifying private reading protocols}\label{sec:n-a-priv-read-coherent}

As observed in \cite{HHHO05,HHHO09} and reviewed in Section~\ref{sec:rev-priv-states}, any protocol of the above form, discussed in Section~\ref{sec:na-priv-read}, can be purified in the following sense. In this section, we assume that each wiretap memory cell consists of a set of isometric channels, written as
$\{\mc{U}^{\mc{M}^x}_{B'\to BE}\}_x$. 
Thus, Eve has access to system $E$, which is the output of a particular isometric extension of the channel $\mc{M}^x_{B'\to B}$, i.e., $\widehat{\mc{M}}^x_{B'\to E}(\cdot) = 
\Tr_B\{\mc{U}^{\mc{M}^x}_{B'\to BE}(\cdot)\}$, for all $x\in\mc{X}$. We refer to such memory cell as an isometric wiretap memory cell.

We begin by considering non-adaptive private reading protocols. A non-adaptive purified secret-key-agreement protocol that uses an isometric wiretap memory cell begins with Alice preparing a purification of the maximally classically correlated state: 
\begin{equation}
\frac{1}{\sqrt{K}}\sum_{k\in\mc{K}}\ket{k}_{K_A}\ket{k}_{\hat{K}}\ket{k}_{C},
\end{equation}
where $\mc{K}=\{1,2,\ldots,K\}$, and $K_A$, $\hat{K}$, and $C$ are classical registers. Alice coherently encodes the value of the register $C$ using the memory cell, the codebook $\{x^n(k)\}_k$,
and the isometric mapping $\ket{k}_C \to \ket{x^n(k)}_{X^n}$. Alice makes two coherent copies of the codeword $x^n(k)$ and stores them safely in coherent  classical registers $X^n$ and $\hat{X}^n$. At the same time, she acts on Bob's input state $\rho_{L_B{B'}^n}$ with the following isometry:
\begin{equation}\label{eq:iso-v-m}
 \sum_{x^n}\ket{x^n}\!\bra{x^n}_{X^n}\otimes{U}^{\mc{M}^{x^n}}_{{B'}^n\to B^nE^n}\otimes\ket{x^n}_{\hat{X}^n}.
\end{equation} 
For the task of reading, Bob inputs the state $\rho_{L_B{B'}^n}$ to the channel sequence $\mc{M}^{x^n(k)}$, with the goal of decoding $k$. In the purified setting, the resulting output state is $\psi_{K_A\hat{K}X^nL_B'L_BB^nE^n\hat{X}^n}$, which  includes all concerned coherent classical registers or quantum systems accessible by Alice, Bob and Eve:
\begin{multline}
\label{eq:coh-mem-state}
\ket{\psi}_{K_A\hat{K}X^nL_B'L_BB^nE^n\hat{X}^n} \coloneqq \frac{1}{\sqrt{K}}\sum_{k}\ket{k}_{K_A}\ket{k}_{\hat{K}}\otimes 
\\\ket{x^n(k)}_{X^n}
U^{\mc{M}^{x^n}}_{{B'}^n\to B^nE^n}\ket{\psi}_{L_B'L_B{B'}^n}\ket{x^n(k)}_{\hat{X}^n},
\end{multline}
where $\psi_{L_B'L_B{B'}^n}$ is a purification of $\rho_{L_B{B'}^n}$ and the systems $L_B'$, $L_B$, and $B^n$ are held by Bob, whereas Eve has access only to $E^n$. The final global state is $\psi_{K_A\hat{K}X^nL_B'K_BE^n\hat{X}^n}$  after Bob applies the decoding channel $\mc{D}_{L_BB^n\to K_B}$, where 
\begin{multline}
\ket{\psi}_{K_A\hat{K}X^nL_B'L_B''K_BE^n\hat{X}^n}\\ \coloneqq U^{\mc{D}}_{L_BB^n\to L_B'' K_B}\ket{\psi}_{K_A\hat{K}X^nL_B'L_BB^nE^n\hat{X}^n},
\end{multline}
$U^{\mc{D}}$ is an isometric extension of the decoding channel $\mc{D}$, and $L_B''$ is part of the shield system of Bob. 

At the end of the purified protocol, Alice possesses the key system $K_A$ and the shield systems $\hat{K}X^n\hat{X}^n$, Bob possesses the key system $K_B$ and the shield systems $L_B'L_B''$, and Eve possesses the environment system $E^n$. The state $\psi_{K_A\hat{K}X^nL_B'L_B''K_B\hat{X}^nE^n}$ at the end of the protocol is a pure state.

For a fixed $n,\ K\in\mathbb{N},\ \varepsilon\in[0,1]$, the original protocol is an $(n,2^{nP},\sqrt{\varepsilon},\sqrt{\varepsilon})$ private reading protocol if the memory cell is called $n$ times as discussed above, and if 
\begin{equation}\label{eq:coh-key-aprox-priv}
F(\psi_{K_A\hat{K}X^nL_B'L_B''K_B\hat{X}^n},\gamma_{S_AK_AK_BS_B})\geq 1-\varepsilon,
\end{equation}
where $\gamma$ is a private state such that $S_A=\hat{K}X^n\hat{X}^n,\ K_A=K_A, \ K_B=K_B,\ S_B=L_B'L_B''$. See \cite[Appendix~B]{WTB16} for further details.

Similarly, it is possible to purify a general adaptive private reading protocol, but we omit the details.

\subsection{Converse bounds on private reading capacities}\label{sec:priv-read-sc}

In this section, we derive different upper bounds on the private reading capacity of an isometric wiretap memory cell. 
The first is a weak converse upper bound on the non-adaptive private reading capacity in terms of the squashed entanglement. The second is a strong converse upper bound on the (adaptive) private reading capacity in terms of the bidirectional max-relative entropy of entanglement. Finally, we evaluate the private reading capacity for an example:  a qudit erasure memory cell. 


We derive the first converse bound on non-adaptive private reading capacity by making the following observation, related to the development in \cite[Appendix~B]{WTB16}: any non-adaptive $(n,2^{nP},\varepsilon,\delta)$ private reading protocol of an isometric wiretap memory cell $\overline{\mc{M}}_{\mc{X}}$, for reading out a secret key, can be realized by  an $(n,2^{nP},\varepsilon'(2-\varepsilon'))$ non-adaptive purified secret-key-agreement reading protocol, where $\varepsilon'\coloneqq\varepsilon+2\delta$. As such, a converse bound for the latter protocol implies a converse bound for the former.

First, we derive an upper bound on the non-adaptive private reading capacity in terms of the squashed entanglement \cite{CW04}: 
\begin{proposition}\label{prop:EsqBound}
The non-adaptive private reading capacity $P^{\textnormal{read}}_{\textnormal{n-a}}(\overline{\mc{M}}_{\mc{X}})$ of an isometric wiretap memory cell $\overline{\mc{M}}_{\mc{X}}=\{\mc{U}^{\mc{M}^x}_{B'\to BE}\}_{x\in\mc{X}}$ is bounded from above as
\begin{equation}
P^{\textnormal{read}}_{\textnormal{n-a}}(\overline{\mc{M}}_{\mc{X}})\leq \sup_{p_X,\psi_{LB'}}E_{\sq}(XL_B;B)_\omega,
\end{equation} 
where $\omega_{XL_BB}=\Tr_{E}\{\omega_{XL_BBE}\}$, such that $\psi_{L_BB'}$ is a pure state and
\begin{equation}
\vert \omega\rangle_{XLBE}=\sum_{x\in\mc{X}}\sqrt{p_X(x)}|x\>_X\otimes{U}^{\mc{M}^{x}}_{B'\to BE}\ket{\psi}_{L_BB'} \label{eq:opt-form-sq-bnd}.
\end{equation}
\end{proposition}

\begin{proof}
For the discussed purified non-adaptive secret-key-agreement reading protocol, when \eqref{eq:coh-key-aprox-priv} holds, the dimension of the secret key system is upper bounded as \cite[Theorem 2]{Wil16}:
\begin{equation}
\log_2 K\leq E_{\sq}(\hat{K}X^n\hat{X}^nK_A;K_BL_BL_B'')_{\psi}+f_1(\sqrt{\varepsilon},K),
\end{equation}
where 
\begin{equation}
f_1(\varepsilon,K_A)\coloneqq 2\varepsilon\log_2K+2g(\varepsilon).
\end{equation}
We can then proceed as follows:
\begin{align}
\log_2 K&\leq E_{\sq}(\hat{K}X^n\hat{X}^nK_A;K_BL_B''L_B')_{\psi}+f_1(\sqrt{\varepsilon},K)\\
&= E_{\sq}(\hat{K}X^n\hat{X}^nK_A;B^nL_BL_B')_{\psi}+f_1(\sqrt{\varepsilon},K).\label{eq:priv-sq-bound}
\end{align}
where the first equality is due to the invariance of $E_{\sq}$ under isometries. 

For any five-partite pure state $\phi_{B'B_1B_2E_1E_2}$, the following inequality holds \cite[Theorem~7]{TGW14}:
\begin{equation}\label{eq:tri-sq-ent}
E_{\sq}(B';B_1B_2)_\phi\leq E_{\sq}(B'B_2E_2;B_1)_{\phi}+E_{\sq}(B'B_1E_1;B_2)_{\phi}.
\end{equation}
Choosing $B'=\hat{K}X^n\hat{X}^nK_A$, $B_1=B_n$, $B_2=L_BL_B'B^{n-1}$, $E_1=E_n$ and $E_2=E^{n-1}$, this implies that
\begin{align}
& E_{\sq}(\hat{K}X^n\hat{X}^nK_A;B^nL_BL_B')_{\psi}\nonumber\\
&\leq E_{\sq}(\hat{K}X^n\hat{X}^{n}K_AL_BL_B'B^{n-1}E^{n-1};B_n)_{\psi}\nonumber\\ &\qquad +E_{\sq}(\hat{K}X^n\hat{X}^{n}K_AB_nE_n;L_BL_B'B^{n-1})_{\psi}\nonumber\\
&= E_{\sq}(\hat{K}X^n\hat{X}^nK_AL_BL_B'B^{n-1}E^{n-1};B_n)_{\psi}\nonumber\\ &\qquad +E_{\sq}(\hat{K}X^n\hat{X}^{n-1}K_AB'_n;L_BL_B'B^{n-1})_{\psi}.\label{eq:sq-ine-n-pre}
\end{align}
where the equality holds by considering an isometry with the following uncomputing action:
\begin{widetext}
\begin{align}
&\ket{k}_{K_A}\ket{k}_{\hat{K}}\ket{x^n(k)}_{X^n}U^{\mc{M}^{x^n}}_{{B'}^n\to B^nE^n}\ket{\psi}_{L_B'L_B{B'}^n}\ket{x^n(k)}_{\hat{X}^n}\nonumber\\ &\qquad\qquad \qquad
\to \ket{k}_{K_A}\ket{k}_{\hat{K}}\ket{x^n(k)}_{X^n}U^{\mc{M}^{x^{n-1}}}_{{B'}^{n-1}\to B^{n-1}E^{n-1}}\ket{\psi}_{L_B'L_B{B'}^n}\ket{x^{n-1}(k)}_{\hat{X}^{n-1}}.
\end{align}
\end{widetext}

 Applying the inequality in \eqref{eq:tri-sq-ent} and uncomputing isometries like the above repeatedly to \eqref{eq:sq-ine-n-pre}, we find that 
\begin{multline}\label{eq:single-let-sq}
E_{\sq}(\hat{K}X^n\hat{X}^nK_A;B^nL_BL_B')_{\psi}\\ \leq \sum_{i=1}^n E_{\sq}(\hat{K}X^n\hat{X}_iK_AL_BL_B'B'^{n\setminus \{i\}};B_i),
\end{multline}
where the notation ${B'}^{n\setminus\{i\}}$ indicates the composite system $B'_1B'_2\cdots B'_{i-1}B'_{i+1}\cdots B'_n$, i.e. all $n-1$\ $B'$-labeled systems except $B'_i$.
Each summand above is equal to the squashed entanglement of some state of the following form: a bipartite state is prepared on some auxiliary system $Z$ and a control system $X$, a bipartite state is prepared on systems $L_B$ and $B'$, a controlled isometry $\sum_x \vert x\rangle \langle x\vert_X \otimes {U}^{\mc{M}^{x}}_{B'\to BE}$ is performed from $X$ to $B'$, and then $E$ is traced out. By applying the development in \cite[Appendix~A]{CY16}, we conclude that the auxiliary system $Z$ is not necessary. Thus, the state of
systems $X$, $L_B$, $B'$, and $E$ can be taken to have the form in \eqref{eq:opt-form-sq-bnd}.
From \eqref{eq:priv-sq-bound} and the above reasoning, since $\lim_{\varepsilon\to 0}\lim_{n\to \infty} \frac{f_1(\sqrt{\varepsilon},K)}{n}=0$, we conclude that 
\begin{equation}
\widetilde{P}^{\textnormal{read}}_{\textnormal{n-a}}(\overline{\mc{M}}_{\mc{X}})\leq \sup_{p_X,\psi_{L_BB'}}E_{\sq}(XL;B)_\omega,
\end{equation} 
where $\omega_{XL_BB}=\Tr_{E}\{\omega_{XL_BBE}\}$, such that $\psi_{L_BB'}$ is a pure state and
\begin{equation}
\vert \omega\rangle_{XL_BBE}=\sum_{x\in\mc{X}}\sqrt{p_X(x)}|x\>_X\otimes{U}^{\mc{M}^{x}}_{B'\to BE}\ket{\psi}_{L_BB'}.
\end{equation}
This concludes the proof.
\end{proof}
\bigskip

We now bound the strong converse private reading capacity of an isometric wiretap memory cell in terms of the bidirectional max-relative entropy.
\begin{theorem}\label{thm:priv-read-strong-converse}
The strong converse private reading capacity $\widetilde{P}^{\textnormal{read}}(\overline{\mc{M}}_{\mc{X}})$ of an isometric wiretap memory cell $\overline{\mc{M}}_{\mc{X}}=\{\mc{U}^{\mc{M}^x}_{B'\to BE}\}_{x\in\mc{X}}$ is  bounded from above by the bidirectional max-relative entropy of entanglement $E^{2\to 2}_{\max}(\mc{N}^{\overline{\mc{M}}_{\mc{X}}}_{X'B'\to XB})$ of the bidirectional channel $\mc{N}^{\overline{\mc{M}}_{\mc{X}}}_{X'B'\to XB}$, i.e.,
\begin{equation}\label{eq:priv-read-strong-converse}
\widetilde{P}^{\textnormal{read}}(\overline{\mc{M}}_{\mc{X}})\leq E^{2\to 2}_{\max}(\mc{N}^{\overline{\mc{M}}_{\mc{X}}}_{XB'\to XB}),
\end{equation}
where 
\begin{equation}\label{eq:memory-cell-bidir}
\mc{N}^{\overline{\mc{M}}_{\mc{X}}}_{XB'\to XB}(\cdot)\coloneqq \Tr_{E}\left\{U^{\overline{\mc{M}}_{\mc{X}}}_{XB'\to XBE}(\cdot)\(U^{\overline{\mc{M}}_{\mc{X}}}_{XB'\to XBE}\)^\dag\right\},
\end{equation}
such that
\begin{equation}
U^{\overline{\mc{M}}_{\mc{X}}}_{XB'\to XBE}\coloneqq \sum_{x\in\mc{X}}\ket{x}\!\bra{x}_{X}\otimes U^{\mc{M}^x}_{B'\to BE}.
\end{equation}
\end{theorem}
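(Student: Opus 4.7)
The plan is to recast any adaptive private reading protocol over the isometric wiretap memory cell $\overline{\mc{M}}_{\mc{X}}$ as a purified LOCC-assisted bidirectional secret-key-agreement protocol over the bidirectional channel $\mc{N}^{\overline{\mc{M}}_{\mc{X}}}_{XB'\to XB}$ defined in \eqref{eq:memory-cell-bidir}, and then invoke the strong converse bound $\widetilde{P}^{2\to 2}_{\LOCC}(\mc{N})\leq E^{2\to 2}_{\max}(\mc{N})$ established as a corollary of Theorem~\ref{thm:emax-ent-dist-strong-converse}.

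First, I would invoke the observation from \cite[Appendix~B]{WTB16} (already used implicitly in Section~\ref{sec:n-a-priv-read-coherent}) that any $(n,2^{nP},\varepsilon,\delta)$ private reading protocol can be realized by an $(n,2^{nP},\varepsilon'(2-\varepsilon'))$ purified secret-key-agreement reading protocol with $\varepsilon'=\varepsilon+2\delta$, so that a strong converse bound for the latter translates into one for the former. Second, I would exhibit the explicit simulation: Alice's role is to prepare the coherent superposition $\frac{1}{\sqrt{K}}\sum_k \ket{k}_{K_A}\ket{k}_{\hat K}\ket{x^n(k)}_{X^n}\ket{x^n(k)}_{\hat X^n}$, treat the $X^n$ register as her input into the $n$ uses of $\mc{N}^{\overline{\mc{M}}_{\mc{X}}}$, and keep $K_A,\hat K,\hat X^n$ as her local shield; Bob's role is to supply the $B'$ input, receive the $B$ output, and perform his adaptive isometries $U^{\mc{A}^{(i)}}$ and final decoding isometry $U^{\mc{D}}$, all of which are local operations on his side. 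The controlled isometry in \eqref{eq:iso-v-m} is precisely the action of the bidirectional isometric extension of $\mc{N}^{\overline{\mc{M}}_{\mc{X}}}$, and the environment $E^n$ coincides with Eve's system in both pictures, so the purified state produced is identical to the one produced by an LOCC-assisted bidirectional SKA protocol over $\mc{N}^{\overline{\mc{M}}_{\mc{X}}}$ with trivial initial separable state $\ket{0}_{L_{A_1}}\otimes\ket{\psi}_{L_{B_1}B_1'}$.

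Third, once the simulation is in place, the approximate private state condition $F(\psi_{K_A\hat K X^n L_B' L_B'' K_B \hat X^n},\gamma_{S_AK_AK_BS_B})\geq 1-\varepsilon'(2-\varepsilon')$ from \eqref{eq:coh-key-aprox-priv} is exactly the success criterion of an $(n,2^{nP},\varepsilon'(2-\varepsilon'))$ LOCC-assisted bidirectional secret-key-agreement protocol with Alice's shield system $S_A=\hat K X^n \hat X^n$ and Bob's shield $S_B=L_B' L_B''$. Applying Theorem~\ref{thm:emax-ent-dist-strong-converse} to this protocol yields
\begin{equation}
P \leq E^{2\to 2}_{\max}(\mc{N}^{\overline{\mc{M}}_{\mc{X}}}_{XB'\to XB}) + \frac{1}{n}\log_2\!\(\frac{1}{1-\varepsilon'(2-\varepsilon')}\),
\end{equation}
and taking $n\to\infty$ with $\varepsilon,\delta$ bounded away from one establishes \eqref{eq:priv-read-strong-converse}.

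The main subtlety I anticipate is verifying rigorously that the adaptive reading strategy, including Bob's interleaved channels $\mc{A}^{(i)}_{L_{B_i}B_i\to L_{B_{i+1}}B_{i+1}'}$ and the encoder's coherent codeword preparation, genuinely fits inside the LOCC-assisted bidirectional framework of Section~\ref{sec:secret-dist-protocol}: Alice's initial preparation of the superposition over codewords must be expressed as a local operation producing a separable state across the $L_{A_1}A_1':B_1'L_{B_1}$ cut (which it is, since her state is a product with Bob's), and the adaptive maps on Bob's side must be recognized as a strict specialization of general bipartite LOCC (they are, being one-way local operations). Once these identifications are made, the result follows immediately from the corollary to Theorem~\ref{thm:emax-ent-dist-strong-converse}.
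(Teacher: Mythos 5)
Your proposal is correct and follows essentially the same route as the paper's proof: reduce the adaptive private reading protocol to a purified secret-key-agreement reading protocol via the $\varepsilon'=\varepsilon+2\delta$ observation from \cite[Appendix~B]{WTB16}, recognize it as a special case of an LOCC-assisted bidirectional secret-key-agreement protocol over $\mc{N}^{\overline{\mc{M}}_{\mc{X}}}_{XB'\to XB}$ (with no classical communication and Bob's adaptive maps as local operations), and then apply Theorem~\ref{thm:emax-ent-dist-strong-converse}. The extra detail you supply on the explicit identification of shield systems and the separable initial state simply fleshes out what the paper asserts more briefly.
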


\begin{proof}
First we recall, as stated previously,
that a $(n,2^{nP},\varepsilon,\delta)$ (adaptive) private reading protocol of a memory cell $\overline{\mc{M}}_{\mc{X}}$, for reading out a secret key, can be realized by an $(n,2^{nP},\varepsilon'(2-\varepsilon'))$  purified secret-key-agreement reading protocol, where $\varepsilon'\coloneqq \varepsilon+2 \delta$.
Given that a purified secret-key-agreement reading protocol can be understood as particular case of a bidirectional secret-key-agreement protocol (as discussed in Section~\ref{sec:priv-dist-protocol}), we conclude that the strong converse private reading capacity is bounded from above by 
\begin{equation}
\widetilde{P}^{\textnormal{read}}_{\textnormal{n-a}}(\overline{\mc{M}}_{\mc{X}}) \leq E^{2\to 2}_{\max}(\mc{N}^{\overline{\mc{M}}_{\mc{X}}}_{XB'\to XB}),
\end{equation}
where the bidirectional channel is
\begin{equation}
\mc{N}^{\overline{\mc{M}}_{\mc{X}}}_{XB'\to XB}(\cdot)=\Tr_{E}\left\{U^{\overline{\mc{M}}_{\mc{X}}}_{XB'\to XBE}(\cdot)\(U^{\overline{\mc{M}}_{\mc{X}}}_{XB'\to XBE}\)^\dag\right\},
\end{equation}
such that
\begin{equation}
U^{\overline{\mc{M}}_{\mc{X}}}_{XB'\to XBE}\coloneqq \sum_{x\in\mc{X}}\ket{x}\!\bra{x}_{X}\otimes U^{\mc{M}^x}_{B'\to BE}. 
\end{equation}
The reading protocol is a particular instance of an LOCC-assisted bidirectional secret-key-agreement protocol in which classical communication between Alice and Bob does not occur. The local operations of Bob in the bidirectional secret-key-agreement protocol are equivalent to adaptive operations by Bob in reading. Therefore, applying Theorem~\ref{thm:emax-ent-dist-strong-converse}, we conclude that \eqref{eq:priv-read-strong-converse} holds, where the strong converse in this context means that $\varepsilon+2 \delta \to 1$ in the limit as $n\to \infty$ if the reading rate exceeds $E^{2\to 2}_{\max}(\mc{N}^{\overline{\mc{M}}_{\mc{X}}}_{XB'\to XB})$.\footnote{Such a bound might be called a ``pretty strong converse,'' in the sense of \cite{MW13}. However, we could have alternatively defined a private reading protocol to have a single parameter characterizing reliability and security, as in \cite{WTB16}, and with such a definition, we would get a true strong converse.}
\end{proof}

\subsubsection{Qudit erasure wiretap memory cell}

The main goal of this section is to evaluate the private reading capacity of the qudit erasure wiretap memory cell \cite{DW17}. 

\begin{definition}[Erasure wiretap memory cell]\label{def:qudit-erasure}
The qudit erasure wiretap memory cell $\overline{\mc{Q}}^q_{\mc{X}}=\left\{\mc{Q}^{q,x}_{B'\to BE}\right\}_{x\in\mc{X}}$, $\vert\mc{X}\vert=d^2$, consists of the following qudit channels:
\begin{equation}
\mc{Q}^{q,x}(\cdot)=\mc{Q}^q(\sigma^x(\cdot)\(\sigma^x\)^\dag) ,
\end{equation}
where $\mc{Q}^q$ is an isometric channel extending the qudit erasure channel \cite{GBP97}:
\begin{align}
\mc{Q}^q(\rho_{B'})& =U^{q} \rho_{B'} (U^q)^\dag,\\
U^q \vert \psi\rangle_{B'}
&= \sqrt{1-q}\vert \psi \rangle_B \vert e \>_E + \sqrt{q}|e\>_B \vert \psi\>_E,
\end{align}
such that $q\in [0,1]$, $\dim(\mc{H}_{B'})=d$, $|e\>\<e|$ is some state orthogonal to the support of input state $\rho$, and
$\forall x\in\mc{X}: \sigma^x\in\mathbf{H}$ are the Heisenberg--Weyl operators as reviewed in \eqref{eq:HW-op} of Appendix~\ref{app:qudit}. Observe that $\mc{Q}^q_{\mc{X}}$ is jointly covariant with respect to the Heisenberg--Weyl group $\mathbf{H}$ because the qudit erasure channel $\mc{Q}^q$ is covariant with respect to $\mathbf{H}$.
\end{definition}

Now we establish the private reading capacity of the qudit erasure wiretap memory cell. 
 
\begin{proposition}
The private reading capacity and strong converse private reading capacity of the qudit erasure wiretap memory cell $\overline{\mc{Q}}^q_{\mc{X}}$ are given by 
\begin{equation}
P^{\operatorname{read}}(\overline{\mc{Q}}^q_{\mc{X}})=
\widetilde{P}^{\operatorname{read}}(\overline{\mc{Q}}^q_{\mc{X}})=
2(1-q)\log_2 d.
\end{equation}
\end{proposition}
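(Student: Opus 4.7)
The proof splits into an achievability (lower bound) and a strong-converse (upper bound).

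\textbf{Achievability.} The plan is to exhibit a single-letter achievable rate via Theorem~\ref{thm:n-a-priv-read}. Take the prior $p_X$ uniform over the Heisenberg--Weyl index set $\mc{X}$ (so $|\mc{X}|=d^2$), and take the probe $\sigma_{L_B B'} = |\Phi\rangle\langle\Phi|_{L_B B'}$ maximally entangled of Schmidt rank $d$. Because the erasure flag $|e\rangle$ is orthogonal to $\mc{H}_{B'}$, Bob sees coherently whether the erasure event occurred. Conditional on no erasure (probability $1-q$), his state on $L_B B$ is the Bell-like state $(\sigma^x\otimes I)|\Phi\rangle\langle\Phi|(\sigma^x\otimes I)^\dagger$, and these are pairwise orthogonal over $x\in\mc{X}$ by the Heisenberg--Weyl orthogonality $\Tr[(\sigma^x)^\dagger \sigma^y] = d\,\delta_{x,y}$; Bob therefore decodes $x$ perfectly, while the erasure case yields no information. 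Hence $I(X;L_B B)_\tau = (1-q)\log_2 d^2 = 2(1-q)\log_2 d$. Since Heisenberg--Weyl forms a unitary $1$-design, a direct calculation gives $\rho_E^x = (1-q)|e\rangle\langle e|_E + q\,\pi_{E}$, independently of $x$, so $I(X;E)_\tau = 0$. Substituting into Theorem~\ref{thm:n-a-priv-read} yields $P^{\operatorname{read}}(\overline{\mc{Q}}^q_{\mc{X}}) \geq 2(1-q)\log_2 d$.

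\textbf{Strong-converse upper bound.} First apply Theorem~\ref{thm:priv-read-strong-converse} to obtain $\widetilde{P}^{\operatorname{read}}(\overline{\mc{Q}}^q_{\mc{X}}) \leq E^{2\to 2}_{\max}(\mc{N}^{\overline{\mc{Q}}^q_{\mc{X}}})$. By Definition~\ref{def:qudit-erasure}, $\overline{\mc{Q}}^q_{\mc{X}}$ is jointly covariant with respect to the Heisenberg--Weyl group $\mathbf{H}$, which is a unitary $1$-design on both the control register $X'$ and the probe register $B'$. One verifies that the associated bidirectional channel $\mc{N}^{\overline{\mc{Q}}^q_{\mc{X}}}_{XB'\to XB}$ is then bicovariant (Definition~\ref{def:bicov}) -- covariance on $B'$ is inherited from the single-channel covariance of $\mc{Q}^q$, while covariance on $X'$ is implemented by shifts $|x\rangle \mapsto |x\oplus z\rangle$ of the control register, which reduce to an $\mathbf{H}$-shift on $B'$ through the product rule of Heisenberg--Weyl operators. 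Proposition~\ref{prop:bicov} then shows $\mc{N}^{\overline{\mc{Q}}^q_{\mc{X}}}$ is bidirectional teleportation-simulable with resource state $\theta_{L_A X B L_B} = \mc{N}^{\overline{\mc{Q}}^q_{\mc{X}}}(\Phi_{L_A X'}\otimes \Phi_{B' L_B})$, and Corollary~\ref{cor:str-conv-TP-simul}, applied via the observation that private reading is a restricted instance of an LOCC-assisted bidirectional secret-key-agreement protocol, yields
\[
\widetilde{P}^{\operatorname{read}}(\overline{\mc{Q}}^q_{\mc{X}}) \;\leq\; E(L_A X; B L_B)_\theta.
\]

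\textbf{Main obstacle and its resolution.} The remaining and most delicate step is to show $E(L_A X; B L_B)_\theta \leq 2(1-q)\log_2 d$. The Choi state decomposes along the orthogonal non-erasure/erasure subspaces of $B$ as $\theta = (1-q)|\Psi\rangle\langle\Psi| + q\,\sigma_{L_A X L_B}\otimes |e\rangle\langle e|_B$, where $|\Psi\rangle$ is maximally entangled of Schmidt rank $d^2$ across $L_A X : B L_B$ (contributing exactly $2\log_2 d$). The erasure-block reduced state $\sigma_{L_A X L_B}$ is not obviously separable -- indeed, for $d=2$ it has negative partial transpose -- so a naive block-additive bound only yields $2(1-q)\log_2 d + q\,E(L_A X ; L_B)_\sigma$, which is too loose. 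The clean resolution is to avoid evaluating $E(\sigma)$ directly and instead use that private reading is always harder than (non-private) reading: the reading capacity of the jointly covariant qudit-erasure memory cell $\overline{\mc{Q}}^q_{\mc{X}}$ is known from~\cite{DW17} to equal the entanglement-assisted classical capacity $C_E(\mc{Q}^q) = 2(1-q)\log_2 d$ (also in the strong-converse sense), and Heisenberg--Weyl dense coding on a maximally entangled probe saturates it. Since $\widetilde{P}^{\operatorname{read}}(\overline{\mc{Q}}^q_{\mc{X}}) \leq \widetilde{C}^{\operatorname{read}}(\overline{\mc{Q}}^q_{\mc{X}}) = 2(1-q)\log_2 d$, the upper bound matches the achievability, proving equality of all three quantities.
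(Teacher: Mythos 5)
Your proposal is correct in substance, and while the achievability half is essentially identical to the paper's (uniform Heisenberg--Weyl prior, maximally entangled probe, $I(X;L_BB)_\tau=2(1-q)\log_2 d$ and $I(X;E)_\tau=0$ in Theorem~\ref{thm:n-a-priv-read}), your converse takes a genuinely different route. The paper stays inside Corollary~\ref{cor:str-conv-TP-simul}: it evaluates the relative entropy of entanglement of the Choi-type state of the controlled channel, splits it along the erasure flag, and asserts that only the non-erased branch contributes, obtaining $2(1-q)\log_2 d$ by convexity. You instead decline to evaluate that quantity and close the bound with $\widetilde{P}^{\operatorname{read}}\leq\widetilde{C}^{\operatorname{read}}(\overline{\mc{Q}}^q_{\mc{X}})=2(1-q)\log_2 d$, citing \cite{DW17}. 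Your diagnosis of the obstacle is in fact accurate and important: the erased branch $\zeta_{XL_B}=\frac{1}{d^3}\sum_{x,x'}|x\rangle\!\langle x'|_X\otimes\left((\sigma^{x'})^\dagger\sigma^{x}\right)^{\T}$ is NPT (for $d=2$ its partial transpose has eigenvalue $-1/4$, witnessed by $\sum_x c_x|x\rangle\otimes\sigma^x|\phi\rangle$ with $c=(-1,1,1,1)$), and because the erasure flag sits locally in $B$, pinching the flag shows $E(X;BL_B)\geq(1-q)\,2\log_2 d+q\,E(X;L_B)_\zeta>2(1-q)\log_2 d$ for $q\in(0,1)$; so the direct evaluation the paper sketches cannot by itself yield the stated bound, and your detour through the (non-private) reading capacity is not merely an alternative but the step that actually closes the argument. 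What your route buys is exactly this robustness; what it costs is reliance on an external result, and that citation is load-bearing: you should verify that \cite{DW17} establishes the reading capacity of the qudit erasure cell as a \emph{strong converse} rate, since only the capacity statement is needed for $P^{\operatorname{read}}$ but the strong-converse version is needed for the claim about $\widetilde{P}^{\operatorname{read}}$. Two minor points: the inequality $\widetilde{P}^{\operatorname{read}}\leq\widetilde{C}^{\operatorname{read}}$ deserves one sentence (an $(n,K,\varepsilon,\delta)$ private reading protocol is in particular an $(n,K,\varepsilon)$ reading protocol, so any strong converse rate for reading is one for private reading), and your bicovariance check is incomplete as stated, since shifts of the control register alone are not a unitary one-design; one must also include the phase unitaries on $X'$, under which the channel is invariant because it dephases the control---though since your final chain bypasses Corollary~\ref{cor:str-conv-TP-simul}, this is not load-bearing.
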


\begin{proof}
To prove the proposition, consider that
$\mc{N}^{\overline{\mc{Q}}^q_{\mc{X}}}$ as defined in \eqref{eq:memory-cell-bidir} is bicovariant and $\mc{Q}^q_{B'\to B}$ is covariant. Thus, to get an upper bound on the strong converse private reading capacity, it is sufficient to consider the action of a coherent use of the memory cell on a maximally entangled state (see Corollary~\ref{cor:str-conv-TP-simul}). 
We furthermore apply the development in \cite[Appendix~A]{CY16} to restrict to the following state:
\begin{align}
&\phi_{XL_BBE}\nonumber\\ 
&\coloneqq\frac{1}{\sqrt{|\mc{X}|}}\sum_{x\in\mc{X}}\ket{x}_{X}\otimes U^{\mc{Q}^{q,x}}_{B'\to BE}\ket{\Phi}_{L_B{B'}}\nonumber\\
&=\sqrt{\frac{1-q}{d\mc{|\mc{X}|}}}\sum_{i=0}^d\sum_{x}\ket{x}_{X}\otimes\sigma^x\ket{i}_B\ket{i}_{L_B}\ket{e}_{E} \nonumber\\ 
&\qquad +\sqrt{\frac{q}{d\mc{|\mc{X}|}}}\sum_{i=0}^d\sum_{x}\ket{x}_{X}\otimes\ket{e}_B\ket{i}_{L_B}\otimes \sigma^x\ket{i}_{E}.
\end{align}
Observe that
$\sum_{i=0}^{d-1}\sum_{x}\ket{x}_{X}\otimes\ket{e}_B\ket{i}_{L_B}\otimes \sigma^x\ket{i}_{E}$ and $\sum_{i=0}^{d-1}\sum_{x}\ket{x}_{X}\otimes\sigma^x\ket{i}_B\ket{i}_{L_B}\ket{e}_{E}$ are orthogonal. Also, since, $\ket{e}$ is orthogonal to the input Hilbert space, the only term contributing to the relative entropy of entanglement is $\sqrt{1-q}\frac{1}{d}\sum_{i=0}^d\sum_{x}\ket{x}_{X}\otimes\sigma^x\ket{i}_B\ket{i}_{L_B}$. Let
\begin{equation}
\ket{\psi}_{XL_BB}=\frac{1}{\sqrt{|\mc{X}|}}\sum_{x=0}^{d^2-1}\ket{x}_{X}\otimes\sigma^x\ket{\Phi}_{BL_B}.
\end{equation}
$\{\sigma^x\ket{\Phi}_{BL_B}\}_{x\in\mc{X}}$ forms an orthonormal basis in $\mc{H}_{B}\otimes\mc{H}_{L_B}$ (see Appendix~\ref{app:qudit}), so 
\begin{equation}
\ket{\psi}_{XL_BB}=\ket{\Phi}_{X:BL_B}=\frac{1}{d}\sum_{x=0}^{d^2-1}\ket{x}_{X}\otimes\ket{x}_{BL_B},
\end{equation} 
and $E(X;LB)_\Phi =2\log_2 d$. Applying Corollary~\ref{cor:str-conv-TP-simul} and convexity of relative entropy of entanglement, we conclude that
\begin{equation}\label{eq:e-cell-up}
\widetilde{P}^{\text{read}}(\overline{\mc{Q}}^q_{\mc{X}})\leq 2(1-q)\log_2 d.
\end{equation}
From Theorem~\ref{thm:n-a-priv-read}, the following bound holds
\begin{align}
P^{\text{read}}(\overline{\mc{Q}}^q_{\mc{X}}) &\geq P^{\text{read}}_{\text{n-a}}(\overline{\mc{Q}}^q_{\mc{X}})\label{eq:e-cell-down}\\
&\geq I(X;L_BB)_{\rho}-I(X;E)_{\rho},
\end{align}
where 
\begin{equation}
\rho_{XL_BBE}=\frac{1}{d^2}\sum_{x=0}^{d^2-1}\ket{x}\!\bra{x}_X\otimes \mc{U}^{\mc{Q}^{q,x}}_{B'\to BE}(\Phi_{X:L_BB'}).
\end{equation}
After a calculation, we find that $I(X;E)_{\rho}=0$ and $I(X;L_BB)_{\rho}=2(1-q)\log_2 d$. Therefore, from \eqref{eq:e-cell-up} and the above, we  conclude the statement of the theorem.
\end{proof}

\bigskip
From the above and \cite[Corollary 4]{DW17}, we conclude that there is no difference between the private reading capacity of the qudit erasure memory cell  and its reading capacity.

\section{Entanglement generation from a coherent memory cell or controlled isometry}\label{sec:coh-read}

In this section, we consider an entanglement distillation task between two parties Alice and Bob holding systems $X$ and $B$, respectively. The set up is similar to purified secret key generation when using a memory cell (see Section~\ref{sec:n-a-priv-read-coherent}). The goal of the protocol is as follows: Alice and Bob, who are spatially separated, try to generate a maximally entangled state between them by making coherent use of an isometric wiretap memory cell $\overline{\mc{M}}_{\mc{X}}=\{\mc{U}^{\mc{M}^x}_{B'\to BE}\}_{x\in\mc{X}}$ known to both parties. That is, Alice and Bob have access to the following controlled isometry:
\begin{equation}
U^{\overline{\mc{M}}_{\mc{X}}}_{XB'\to XBE}\coloneqq \sum_{x\in\mc{X}}\ket{x}\!\bra{x}_{X}\otimes U^{\mc{M}^x}_{B'\to BE},\label{eq:contrl-iso-coherent-mem-cell}
\end{equation}
such that $X$ and $E$ are inaccessible to Bob. Using techniques from \cite{DW05}, we can state an achievable rate of entanglement generation by coherently  using the memory cell.  

\begin{theorem}
The following
rate is achievable for entanglement generation when using the controlled isometry in \eqref{eq:contrl-iso-coherent-mem-cell}:
\begin{equation}
I(X\rangle L_BB)_{\omega},
\end{equation}
where $I(X\rangle L_BB)_{\omega}$ is the coherent information of state $\omega_{XL_BB}$ \eqref{eq:coh-info} such that 
\begin{equation}
\ket{\omega}_{XL_BBE}=\sum_{x}\sqrt{p_{X}(x)}|x\rangle_{X}\otimes U^{\mc{M}^x}_{B'\rightarrow
BE}|\psi\rangle_{L_BB'}.
\end{equation}
\end{theorem}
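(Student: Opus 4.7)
The plan is to reduce the task to entanglement distillation from an iid bipartite resource and then invoke the Devetak--Winter hashing inequality. First, I would fix a product input strategy: on each of the $n$ uses of the controlled isometry, Alice prepares $\ket{\phi}_X = \sum_{x}\sqrt{p_X(x)}\ket{x}_X$ in her $X$ register and Bob prepares $\ket{\psi}_{L_B B'}$ in his registers. Because the controlled isometry in~\eqref{eq:contrl-iso-coherent-mem-cell} acts independently on each copy, the resulting global pure state is $\ket{\omega}_{X L_B B E}^{\otimes n}$, with Alice holding $X^n$, Bob holding $L_B^n B^n$, and the environment holding $E^n$.

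Next, I would pass to the bipartite cut between Alice and Bob. Tracing over $E^n$, Alice and Bob share the tensor product state $\omega_{XL_B B}^{\otimes n}$, whose single-copy purification $\ket{\omega}_{XL_B B E}$ is the one specified in the theorem statement. The remaining task is therefore a standard one-way entanglement distillation from iid copies of a bipartite state whose $E$-purification is known. No further uses of the memory cell are needed.

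Finally, I would invoke the hashing inequality of Devetak--Winter \cite{DW05} (equivalently, Devetak's direct coding theorem for the quantum capacity \cite{D05}), which states that the one-way distillable entanglement (with classical communication from Alice to Bob) of a bipartite state $\rho_{AB}$ is bounded below by the coherent information $I(A\rangle B)_\rho$. Applied to $\rho_{AB}=\omega_{X L_B B}$ with $A \leftrightarrow X$ and $B\leftrightarrow L_B B$, this gives an achievable rate of $I(X\rangle L_B B)_\omega$ with vanishing error as $n\to\infty$, which is precisely the claim. A parallel route would be to coherify the purified secret-key-agreement reading protocol from Section~\ref{sec:n-a-priv-read-coherent} using the isometric encoding in~\eqref{eq:iso-v-m} and deduce entanglement generation from the resulting bipartite private state via \cite{HHHO05,HHHO09}; however, this requires a careful coherification step converting the security criterion of Definition~\ref{def:QR} into approximate maximal entanglement between the coherent key registers, which is the main technical subtlety in that approach and is bypassed entirely by the direct hashing-inequality route outlined above.
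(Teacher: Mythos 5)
Your proof is correct, and it takes a genuinely different (shorter) route than the paper's. Both approaches ultimately rest on Devetak--Winter, but they differ in level of explicitness. The paper's proof starts from a private reading codebook $\{x^n(m,k)\}$, encodes the codewords coherently into $X^n$, performs a coherent version of the reading measurement, applies a Fourier-transform-and-measure step on Alice's side, and then invokes Uhlmann's theorem to decouple Eve's system $E^n$; in other words, it implements the coherification step of the Devetak--Winter argument by hand in this particular setting, emphasizing the structural connection between private reading codes and entanglement generation, and using the identity $I(X;L_BB)_\rho - I(X;E)_\rho = I(X\rangle L_BB)_\omega$ (which holds because $\rho$ arises from $\omega$ by dephasing the $X$ register) to identify the rate. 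Your proof instead notes that with fixed product inputs $\ket{\phi}_X^{\otimes n}\otimes\ket{\psi}_{L_BB'}^{\otimes n}$ the $n$ coherent calls of the controlled isometry simply manufacture the iid pure state $\ket{\omega}_{XL_BBE}^{\otimes n}$, after which the problem collapses to one-way entanglement distillation from iid copies of $\omega_{X:L_BB}$, and the hashing inequality of \cite{DW05} immediately yields $I(X\rangle L_BB)_\omega$. The inputs to the isometry are not the same in the two proofs (the paper restricts Alice's $X^n$ register to a superposition over codewords, while you use the full product superposition), but the achieved rate and the classical-communication resource (one-way, Alice to Bob) are identical. Your shortcut buys brevity and transparency of the iid reduction; the paper's construction buys an explicit, code-level demonstration of how a private reading protocol upgrades to an entanglement generation protocol, which is the conceptual theme of that section.
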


\begin{proof}
Let $\{x^{n}(m,k)\}_{m,k}$ denote a codebook for private reading, as discussed in Section~\ref{sec:na-priv-read}, and let
$\psi_{L_BB'}$ denote a pure state that can be fed in to each coherent use of the memory cell. The
codebook is such that for each $m$ and $k$, the codeword $x^{n}(m,k)$ is
unique. The rate of private reading is given by
\begin{equation}
I(X;L_BB)_{\rho}-I(X;E)_{\rho},
\end{equation}
where%
\begin{equation}
\rho_{XB'BE}=\sum_{x}p_{X}(x)|x\rangle\langle x|_{X}\otimes\mathcal{U}
_{B'\rightarrow BE}^{\mc{M}^x}(\psi_{L_BB'}).
\end{equation}
Note that the following equality holds%
\begin{equation}
I(X;L_BB)_{\rho}-I(X;E)_{\rho}=I(X\rangle L_BB)_{\omega},
\end{equation}
where%
\begin{equation}
\ket{\omega}_{XL_BBE}=\sum_{x}\sqrt{p_{X}(x)}|x\rangle_{X}\otimes U_{B'\rightarrow
BE}^{\mc{M}^x}|\psi\rangle_{L_BB'}. 
\end{equation} 
The code is such that there is a measurement $\Lambda
_{L_B^{n}B^{n}}^{m,k}$ for all $m,k$, for which%
\begin{equation}
\operatorname{Tr}\{\Lambda_{L_B^{n}B^{n}}^{m,k}\mathcal{M}_{{B'}^{n}\rightarrow
B^{n}}^{x^{n}(m,k)}(\psi_{L_BB'}^{\otimes n})\}\geq1-\varepsilon,
\end{equation}
and%
\begin{equation}
\frac{1}{2}\left\Vert \frac{1}{K}\sum_{k}\widehat{\mathcal{M}}_{{B'}^{n}%
\rightarrow E^{n}}^{x^{n}(m,k)}(\psi_{B'}^{\otimes n})-\sigma_{E^{n}%
}\right\Vert _{1}\leq\delta.\label{eq:security-condition}%
\end{equation}

From this private reading code, we construct a coherent reading code as
follows. Alice begins by preparing the state%
\begin{equation}
\frac{1}{\sqrt{MK}}\sum_{m,k}|m\rangle_{M_A}|k\rangle_{K_A}.
\end{equation}
Alice performs a unitary that implements the following isometry:
\begin{equation}
|m\rangle_{M_A}|k\rangle_{K_A}\rightarrow|m\rangle_{M_A}
|k\rangle_{K_A}|x^{n}(m,k)\rangle_{X^{n}},\label{eq:encoding-unitary}%
\end{equation}
so that the state above becomes%
\begin{equation}
\frac{1}{\sqrt{MK}}\sum_{m,k}|m\rangle_{M_A}|k\rangle_{K_A}|x^{n}(m,k)\rangle
_{X^{n}}.
\end{equation}
Bob prepares the state $|\psi\rangle_{L_BB'}^{\otimes n}$, so that the overall
state is%
\begin{equation}
\frac{1}{\sqrt{MK}}\sum_{m,k}|m\rangle_{M_A}|k\rangle_{K_A}|x^{n}(m,k)\rangle
_{X^{n}}|\psi\rangle_{L_BB'}^{\otimes n}.
\end{equation}
Now Alice and Bob are allowed to access $n$ instances of the controlled
isometry%
\begin{equation}
\sum_{x}|x\rangle\langle x|_{X}\otimes U_{B'\rightarrow BE}^{\mc{M}^x}%
,\label{eq:controlled-isometry-coherent-memory-cell}%
\end{equation}
and the state becomes%
\begin{equation}
\frac{1}{\sqrt{MK}}\sum_{m,k}|m\rangle_{M_A}|k\rangle_{K_A}|x^{n}(m,k)\rangle
_{X^{n}}U_{{B'}^{n}\rightarrow B^{n}E^{n}}^{\mc{M}^{x^{n}(m,k)}}|\psi\rangle_{L_BB'}^{\otimes
n}.
\end{equation}
Bob now performs the isometry%
\begin{equation}
\sum_{m,k}\sqrt{\Lambda_{L_B^{n}B^{n}}^{m,k}}\otimes|m\rangle_{M_{1}}%
|k\rangle_{K_{1}},
\end{equation}
and the resulting state is close to%
\begin{multline}
\frac{1}{\sqrt{MK}}\sum_{m,k}|m\rangle_{M_A}|k\rangle_{K_A}|x^{n}(m,k)\rangle
_{X^{n}} \\ \otimes U_{{B'}^{n}\rightarrow B^{n}E^{n}}^{x^{n}(m,k)}|\psi\rangle_{L_BB'}^{\otimes
n}|m\rangle_{M_{1}}|k\rangle_{K_{1}}.
\end{multline}
At this point, Alice locally uncomputes the unitary from
\eqref{eq:encoding-unitary} and discards the $X^{n}$ register, leaving the
following state:%
\begin{multline}
\frac{1}{\sqrt{MK}}\sum_{m,k}|m\rangle_{M_A}|k\rangle_{K_A}U_{{B'}^{n}\rightarrow
B^{n}E^{n}}^{\mc{M_A}^{x^{n}(m,k)}}|\psi\rangle_{L_BB'}^{\otimes n}\\ \otimes|m\rangle_{M_{1}%
}|k\rangle_{K_{1}}.
\end{multline}
Following the scheme of \cite{DW05} for entanglement distillation, she then performs a Fourier transform on
the register $K_A$ and measures it, obtaining an outcome $k^{\prime}%
\in\{0,\ldots,K-1\}$, leaving the following state:%
\begin{multline}
\frac{1}{\sqrt{MK}}\sum_{m,k}e^{2\pi ik^{\prime}k/K}|m\rangle_{M_A}%
\otimes U_{{B'}^{n}\rightarrow B^{n}E^{n}}^{\mc{M_A}^{x^{n}(m,k)}}|\psi\rangle_{L_BB'}^{\otimes
n}\\ \otimes|m\rangle_{M_{1}}|k\rangle_{K_{1}}.
\end{multline}
She communicates the outcome to Bob, who can then perform a local unitary on system $K_1$ to
bring the state to%
\begin{equation}
\frac{1}{\sqrt{MK}}\sum_{m,k}|m\rangle_{M_A}U_{{B'}^{n}\rightarrow B^{n}E^{n}%
}^{\mc{M}^{x^{n}(m,k)}}|\psi\rangle_{L_BB'}^{\otimes n}|m\rangle_{M_{1}}|k\rangle_{K_{1}}.
\end{equation}
Now consider that, conditioned on a value $m$ in register $M$, the local state
of Eve's register $E^n$ is given by%
\begin{equation}
\frac{1}{K_A}\sum_{k}\widehat{\mathcal{M}}_{{B'}^{n}\rightarrow E^{n}}^{x^{n}%
(m,k)}(\psi_{B'}^{\otimes n}).
\end{equation}
Thus, by invoking the security condition in \eqref{eq:security-condition} and
Uhlmann's theorem \cite{U76}, there exists a isometry $V_{L_B^{n}B^{n}K_{1}\rightarrow
\widetilde{B}}^{m}$ such that%
\begin{multline}
V_{L_B^{n}B^{n}K_{1}\rightarrow\widetilde{B}}^{m}\left[  \frac{1}{\sqrt{K_A}}%
\sum_{k}U_{{B'}^{n}\rightarrow B^{n}E^{n}}^{\mc{M}^{x^{n}(m,k)}}|\psi\rangle_{L_BB'}^{\otimes
n}|k\rangle_{K_{1}}\right] \\ \approx|\varphi^{\sigma}\rangle_{E^{n}%
\widetilde{B}}.
\end{multline}
Thus, Bob applies the controlled isometry%
\begin{equation}
\sum_{m}|m\rangle\langle m|_{M_{1}}\otimes V_{L_B^{n}B^{n}K_{1}\rightarrow
\widetilde{B}}^{m},
\end{equation}
and then the overall state is close to%
\begin{equation}
\frac{1}{\sqrt{M}}\sum_{m}|m\rangle_{M_A}|\varphi^{\sigma}\rangle_{E^{n}%
\widetilde{B}}|m\rangle_{M_{1}}.
\end{equation}
Bob now discards the register $\widetilde{B}$ and Alice and Bob are left with
a maximally entangled state that is locally equivalent to approximately $ n[I(X;L_BB)_{\rho
}-I(X;E)_{\rho}] = nI(X\rangle L_BB)_{\omega}$ ebits.
\end{proof}


\section{Discussion}\label{sec:dis}

In this work, we mainly focused on two different information processing tasks: entanglement distillation and secret key distillation using bipartite quantum interactions or bidirectional channels. We determined several bounds on the entanglement and secret-key-agreement capacities of bipartite quantum interactions. In deriving these bounds, we described communication protocols in the bidirectional setting, related to those discussed in \cite{BHLS03} and which generalize related point-to-point communication protocols. We introduced an entanglement measure called the bidirectional max-Rains information of a bidirectional channel and showed that it is a strong converse upper bound on the PPT-assisted quantum capacity of the given bidirectional channel. We also introduced a related entanglement measure called the bidirectional max-relative entropy of entanglement and showed that it is a strong converse bound on the LOCC-assisted secret-key-agreement capacity of a given bidirectional channel. When the bidirectional channels are either teleportation- or PPT-simulable, the upper bounds on the bidirectional quantum and bidirectional secret-key-agreement capacities depend only on the entanglement of an underlying resource state. If a bidirectional channel is bicovariant, then the underlying resource state can be taken to be the Choi state of the bidirectional channel.

Next, we introduced a private communication task called private reading. This task allows for secret key agreement between an encoder and a reader in the presence of a passive eavesdropper. Observing that access to an isometric wiretap memory cell by an encoder and the reader is a particular kind of bipartite quantum interaction, we were able to leverage our bounds on the LOCC-assisted bidirectional secret-key-agreement capacity to determine bounds on its private reading capacity. We also determined a regularized expression for the non-adaptive private reading capacity of an arbitrary wiretap memory cell. For particular classes of memory cells obeying certain symmetries, such that there is an adaptive-to-non-adaptive reduction in a reading protocol, as in \cite{DW17}, the private reading capacity and the non-adaptive private reading capacity are equal. We derived a single-letter, weak converse upper bound on the non-adaptive private reading capacity of an isometric wiretap memory cell in terms of the squashed entanglement. We also proved a strong converse upper bound on the private reading capacity of an isometric wiretap memory cell in terms of  the bidirectional max-relative entropy of entanglement. We applied our results to show that the private reading capacity and the reading capacity of the qudit erasure memory cell are equal. Finally, we determined an achievable rate at which entanglement can be generated between two parties who have coherent access to a memory cell. 
 

We have left open the question of determining a relation between the bidirectional max-Rains information and the bidirectional max-relative entropy of entanglement for an arbitrary bidirectional channel. However, we  strongly suspect that the bidirectional max-Rains information can never exceed the bidirectional max-relative entropy of entanglement. It would also be interesting to derive an upper bound on the bidirectional secret-key-agreement capacity in terms of the squashed entanglement. Another future direction would be to determine classes of memory cells for which the regularized expressions of the non-adaptive private reading capacities reduce to single-letter expressions. For this, one could consider memory cells consisting of degradable channels \cite{DS05,S08}. More generally, determining the private reading capacity of an arbitrary wiretap memory cell is an important open question.  



\bigskip
\textbf{Acknowledgements.}
We thank Koji Azuma, Aram Harrow, Cosmo Lupo, Bill Munro, Mio Murao, and George Siopsis for helpful discussions. SD acknowledges support from
the LSU Graduate School Economic Development Assistantship and the LSU Coates Conference Travel Award. MMW acknowledges support from the US Office of Naval Research and the National Science Foundation. Part of this work was completed during the workshop \textquotedblleft Beyond i.i.d.~in Information Theory,\textquotedblright\  hosted by the Institute for Mathematical Sciences, NUS Singapore,
24-28 July 2017.

\begin{appendix}

\section{Covariant channel}\label{app:cov-lemma}
\begin{proof}[Proof of Lemma~\ref{thm:cov-hol}]
Given is a group $G$  and a quantum channel $\mathcal{M}_{A\rightarrow B}$ that is covariant in the
following sense:
\begin{equation}
\mathcal{M}_{A\rightarrow B}(U_{A}^{g}\rho_{A}U_{A}^{g\dag})=V_{B}%
^{g}\mathcal{M}_{A\rightarrow B}(\rho_{A})V_{B}^{g\dag},\label{eq:cov-sym}
\end{equation}
for a set of unitaries $\{U_{A}^{g}\}_{g\in G}$ and $\{ V_{B}^{g} \}_{g \in G}$.

Let a Kraus representation of $\mathcal{M}_{A\rightarrow B}$ be given as%
\begin{equation}
\mathcal{M}_{A\rightarrow B}(\rho_{A})=\sum_{j}L^{j}\rho_{A}L^{j\dag}.
\end{equation}
We can rewrite \eqref{eq:cov-sym} as%
\begin{equation}
V_{B}^{g\dag}\mathcal{M}_{A\rightarrow B}(U_{A}^{g}\rho_{A}U_{A}^{g\dag}%
)V_{B}^{g}=\mathcal{M}_{A\rightarrow B}(\rho_{A}),
\end{equation}
which means that for all $g$, the following equality holds%
\begin{equation}
\sum_{j}L^{j}\rho_{A}L^{j\dag}=\sum_{j}V_{B}^{g\dag}L^{j}U_{A}^{g}\rho
_{A}\left(  V_{B}^{g\dag}L^{j}U_{A}^{g}\right)  ^{\dag}.
\end{equation}
Thus, the channel has two different Kraus representations $\{L^{j}\}_{j}$ and
$\{V_{B}^{g\dag}L^{j}U_{A}^{g}\}_{j}$, and these are necessarily related by a
unitary with matrix elements $w_{jk}^{g}$ \cite{Wbook17,Wat15}:
\begin{equation}
V_{B}^{g\dag}L^{j}U_{A}^{g}=\sum_{k}w_{jk}^{g}L^{k}.
\end{equation}
A canonical isometric extension $U_{A\rightarrow BE}^{\mathcal{M}}$ of
$\mathcal{M}_{A\rightarrow B}$ is given as%
\begin{equation}
U_{A\rightarrow BE}^{\mathcal{M}}=\sum_{j}L^{j}\otimes|j\rangle_{E},
\end{equation}
where $\{|j\rangle_{E}\}_j$ is an orthonormal basis.
Defining $W_{E}^{g}$ as the following unitary%
\begin{equation}
W_{E}^{g}|k\rangle_{E}=\sum_{j}w_{jk}^{g}|j\rangle_{E},
\end{equation}
where the states $|k\rangle_{E}$ are chosen from $\{|j\rangle_{E}\}_j$,
consider that%
\begin{align}
U_{A\rightarrow BE}^{\mathcal{M}}U_{A}^{g}  & =\sum_{j}L^{j}U_{A}^{g}%
\otimes|j\rangle_{E}\\
& =\sum_{j}V_{B}^{g}V_{B}^{g\dag}L^{j}U_{A}^{g}\otimes|j\rangle_{E}\\
& =\sum_{j}V_{B}^{g}\left[  \sum_{k}w_{jk}^{g}L^{k}\right]  \otimes
|j\rangle_{E}\\
& =V_{B}^{g}\sum_{k}L^{k}\otimes\sum_{j}w_{jk}^{g}|j\rangle_{E}\\
& =V_{B}^{g}\sum_{k}L^{k}\otimes W_{E}^{g}|k\rangle_{E}\\
& =\left(  V_{B}^{g}\otimes W_{E}^{g}\right)  U_{A\rightarrow BE}%
^{\mathcal{M}}.
\end{align}
This concludes the proof.
\end{proof}

\section{Bicovariant channels and teleportation simulation}\label{app:bicov}

\begin{proof}[Proof of Proposition~\ref{prop:bicov}]
Let $\mathcal{N}_{A^{\prime}B^{\prime}\rightarrow AB}$ be a bidirectional
quantum channel, and let $G$ and $H$ be groups with unitary representations
$g\rightarrow\mathcal{U}_{A^{\prime}}(g)$ and $h\rightarrow V_{B^{\prime}}(h)$
and $(g,h)\rightarrow W_{A}(g,h)$ and $(g,h)\rightarrow T_{B}(g,h)$,
such that%
\begin{align}
\frac{1}{\left\vert G\right\vert }\sum_{g}\mathcal{U}_{A^{\prime}%
}(g)(X_{A^{\prime}}) &  =\operatorname{Tr}\{X_{A^{\prime}}\}\pi_{A^{\prime}%
},\label{eq:one-design-cond}\\
\frac{1}{\left\vert H\right\vert }\sum_{h}\mathcal{V}_{B^{\prime}%
}(h)(Y_{B^{\prime}}) &  =\operatorname{Tr}\{Y_{B^{\prime}}\}\pi_{B^{\prime}%
},
\end{align}
and
\begin{multline}
\mathcal{N}_{A^{\prime}B^{\prime}\rightarrow AB}((\mathcal{U}_{A^{\prime}%
}(g)\otimes\mathcal{V}_{B^{\prime}}(h))(\rho_{A^{\prime}B^{\prime}})) \\
=(\mathcal{W}_{A}(g,h)\otimes\mathcal{T}_{B}(g,h))(\mathcal{N}_{A^{\prime
}B^{\prime}\rightarrow AB}(\rho_{A^{\prime}B^{\prime}})),
\end{multline}
where $X_{A^{\prime}}\in\mathcal{B}(\mathcal{H}_{A^{\prime}})$, $Y_{B^{\prime
}}\in\mathcal{B}(\mathcal{H}_{B^{\prime}})$, and $\pi$ denotes the maximally
mixed state. Consider that%
\begin{equation}
\frac{1}{\left\vert G\right\vert }\sum_{g}\mathcal{U}_{A^{\prime\prime}%
}(g)(\Phi_{A^{\prime\prime}A^{\prime}})=\pi_{A^{\prime\prime}}\otimes
\pi_{A^{\prime}},\label{eq:max-ent-cov-action}%
\end{equation}
where $\Phi$ denotes a maximally entangled state and $A^{\prime\prime}$ is a
system isomorphic to $A^{\prime}$. Similarly,
\begin{equation}
\frac{1}{\left\vert H\right\vert }\sum_{h}\mathcal{V}_{B^{\prime\prime}%
}(h)(\Phi_{B^{\prime\prime}B^{\prime}})=\pi_{B^{\prime\prime}}\otimes
\pi_{B^{\prime}}.
\end{equation}
Note that in order for $\{U_{A^{\prime}}^{g}\}$ to satisfy
\eqref{eq:one-design-cond}, it is necessary that $\left\vert A^{\prime
}\right\vert ^{2}\leq\left\vert G\right\vert $ \cite{AMTW00}. Similarly, it is
necessary that $\left\vert B^{\prime}\right\vert ^{2}\leq\left\vert
H\right\vert $. Consider the POVM $\{E_{A^{\prime\prime}L_{A}}^{g}\}_{g}$,
with each element $E_{A^{\prime\prime}L_{A}}^{g}$ defined as%
\begin{equation}
E_{A^{\prime\prime}L_{A}}^{g}:=\frac{\left\vert A^{\prime}\right\vert
^{2}}{\left\vert G\right\vert }U_{A^{\prime\prime}}^{g}\Phi_{A^{\prime\prime
}L_{A}}\left(  U_{A^{\prime\prime}}^{g}\right)  ^{\dag}.
\end{equation}
It follows from the fact that $\left\vert A^{\prime}\right\vert ^{2}%
\leq\left\vert G\right\vert $ and \eqref{eq:max-ent-cov-action}\ that
$\{E_{A^{\prime\prime}L_{A}}^{g}\}_{g}$ is a valid POVM. Similarly, we define
the POVM\ $\{F_{B^{\prime\prime}L_{B}}^{h}\}_{h}$ as%
\begin{equation}
F_{B^{\prime\prime}L_{B}}^{h}:=\frac{\left\vert B^{\prime}\right\vert
^{2}}{\left\vert H\right\vert }V_{B^{\prime\prime}}^{h}\Phi_{B^{\prime\prime
}L_{B}}\left(  V_{B^{\prime\prime}}^{h}\right)  ^{\dag}%
\end{equation}

The simulation of the channel $\mathcal{N}_{A^{\prime}B^{\prime}\rightarrow
AB}$ via teleportation begins with a state $\rho_{A^{\prime\prime}%
B^{\prime\prime}}$ and a shared resource $\theta_{L_{A}ABL_{B}}=\mathcal{N}%
_{A^{\prime}B^{\prime}\rightarrow AB}(\Phi_{L_{A}A^{\prime}}\otimes
\Phi_{B^{\prime}L_{B}})$. The desired outcome is for the receivers to receive
the state $\mathcal{N}_{A^{\prime}B^{\prime}\rightarrow AB}(\rho_{A^{\prime
}B^{\prime}})$ and for the protocol to work independently of the input state
$\rho_{A^{\prime}B^{\prime}}$. The first step is for the senders to locally
perform the measurement $\{E_{A^{\prime\prime}L_{A}}^{g}\otimes F_{B^{\prime
\prime}L_{B}}^{h}\}_{g,h}$ and then send the outcomes $g$\ and $h$ to the
receivers. Based on the outcomes $g$ and $h$, the receivers then perform
$W_{A}^{g,h}$ and $T_{B}^{g,h}$. The following analysis demonstrates that this
protocol works, by simplifying the form of the post-measurement state:
\begin{widetext}
\begin{align}
&  \left\vert G\right\vert \left\vert H\right\vert \operatorname{Tr}%
_{A^{\prime\prime}L_{A}B^{\prime\prime}L_{B}}\{(E_{A^{\prime\prime}L_{A}}%
^{g}\otimes F_{B^{\prime\prime}L_{B}}^{h})(\rho_{A^{\prime\prime}%
B^{\prime\prime}}\otimes\theta_{L_{A}ABL_{B}})\}\nonumber\\
&  =\left\vert A^{\prime}\right\vert ^{2}\left\vert B^{\prime}\right\vert
^{2}\operatorname{Tr}_{A^{\prime\prime}L_{A}B^{\prime\prime}L_{B}%
}\{[U_{A^{\prime\prime}}^{g}\Phi_{A^{\prime\prime}L_{A}}\left(  U_{A^{\prime
\prime}}^{g}\right)  ^{\dag}\otimes V_{B^{\prime\prime}}^{h}\Phi
_{B^{\prime\prime}L_{B}}\left(  V_{B^{\prime\prime}}^{h}\right)  ^{\dag}%
](\rho_{A^{\prime\prime}B^{\prime\prime}}\otimes\theta_{L_{A}ABL_{B}})\}\\
&  =\left\vert A^{\prime}\right\vert ^{2}\left\vert B^{\prime}\right\vert
^{2}\langle\Phi|_{A^{\prime\prime}L_{A}}\otimes\langle\Phi|_{B^{\prime\prime
}L_{B}}\left(  U_{A^{\prime\prime}}^{g}\otimes V_{B^{\prime\prime}}%
^{h}\right)  ^{\dag}(\rho_{A^{\prime\prime}B^{\prime\prime}}\otimes
\theta_{L_{A}ABL_{B}})(U_{A^{\prime\prime}}^{g}\otimes V_{B^{\prime\prime}%
}^{h})|\Phi\rangle_{A^{\prime\prime}L_{A}}\otimes|\Phi\rangle_{B^{\prime
\prime}L_{B}}\\
&  =\left\vert A^{\prime}\right\vert ^{2}\left\vert B^{\prime}\right\vert
^{2}\langle\Phi|_{A^{\prime\prime}L_{A}}\otimes\langle\Phi|_{B^{\prime\prime
}L_{B}}\left[  \left(  U_{A^{\prime\prime}}^{g}\otimes V_{B^{\prime\prime}%
}^{h}\right)  ^{\dag}\rho_{A^{\prime\prime}B^{\prime\prime}}(U_{A^{\prime
\prime}}^{g}\otimes V_{B^{\prime\prime}}^{h})\right]  \nonumber\\
&  \qquad\qquad\otimes\mathcal{N}_{A^{\prime}B^{\prime}\rightarrow AB}%
(\Phi_{L_{A}A^{\prime}}\otimes\Phi_{B^{\prime}L_{B}}))|\Phi\rangle
_{A^{\prime\prime}L_{A}}\otimes|\Phi\rangle_{B^{\prime\prime}L_{B}}\\
&  =\left\vert A^{\prime}\right\vert ^{2}\left\vert B^{\prime}\right\vert
^{2}\langle\Phi|_{A^{\prime\prime}L_{A}}\otimes\langle\Phi|_{B^{\prime\prime
}L_{B}}\left[  \left(  U_{L_{A}}^{g}\otimes V_{L_{B}}^{h}\right)  ^{\dag}%
\rho_{L_{A}L_{B}}(U_{L_{A}}^{g}\otimes V_{L_{B}}^{h})\right]  ^{\ast
}\nonumber\\
&  \qquad\qquad\mathcal{N}_{A^{\prime}B^{\prime}\rightarrow AB}(\Phi
_{L_{A}A^{\prime}}\otimes\Phi_{B^{\prime}L_{B}}))|\Phi\rangle_{A^{\prime
\prime}L_{A}}\otimes|\Phi\rangle_{B^{\prime\prime}L_{B}}%
.\label{eq:cov-tp-simul-block-1}%
\end{align}
\end{widetext}
The first three equalities follow by substitution and some rewriting. The
fourth equality follows from the fact that%
\begin{equation}
\langle\Phi|_{A^{\prime}A}M_{A^{\prime}}=\langle\Phi|_{A^{\prime}A}M_{A}%
^{\ast}\label{eq:ricochet-prop}%
\end{equation}
for any operator $M$ and where $\ast$ denotes the complex conjugate, taken
with respect to the basis in which $|\Phi\rangle_{A^{\prime}A}$ is defined.
Continuing, we have that%
\begin{widetext}
\begin{align}
\text{Eq.~}\eqref{eq:cov-tp-simul-block-1} &  =\left\vert A^{\prime}\right\vert
\left\vert B^{\prime}\right\vert \operatorname{Tr}_{L_{A}L_{B}}\left\{
\left[  \left(  U_{L_{A}}^{g}\otimes V_{L_{B}}^{h}\right)  ^{\dag}\rho
_{L_{A}L_{B}}(U_{L_{A}}^{g}\otimes V_{L_{B}}^{h})\right]  ^{\ast}%
\mathcal{N}_{A^{\prime}B^{\prime}\rightarrow AB}(\Phi_{L_{A}A^{\prime}}%
\otimes\Phi_{B^{\prime}L_{B}}))\right\}  \\
&  =\left\vert A^{\prime}\right\vert \left\vert B^{\prime}\right\vert
\operatorname{Tr}_{L_{A}L_{B}}\left\{  \mathcal{N}_{A^{\prime}B^{\prime
}\rightarrow AB}\left(  \left[  \left(  U_{A^{\prime}}^{g}\otimes
V_{B^{\prime}}^{h}\right)  ^{\dag}\rho_{A^{\prime}B^{\prime}}(U_{A^{\prime}%
}^{g}\otimes V_{B^{\prime}}^{h})\right]  ^{\dag}\left(  \Phi_{L_{A}A^{\prime}%
}\otimes\Phi_{B^{\prime}L_{B}}\right)  \right)  \right\}  \\
&  =\mathcal{N}_{A^{\prime}B^{\prime}\rightarrow AB}\left(  \left[  \left(
U_{A^{\prime}}^{g}\otimes V_{B^{\prime}}^{h}\right)  ^{\dag}\rho_{A^{\prime
}B^{\prime}}(U_{A^{\prime}}^{g}\otimes V_{B^{\prime}}^{h})\right]  ^{\dag
}\right)  \\
&  =\mathcal{N}_{A^{\prime}B^{\prime}\rightarrow AB}\left(  \left(
U_{A^{\prime}}^{g}\otimes V_{B^{\prime}}^{h}\right)  ^{\dag}\rho_{A^{\prime
}B^{\prime}}(U_{A^{\prime}}^{g}\otimes V_{B^{\prime}}^{h})\right)  \\
&  =\left(  W_{A}^{g,h}\otimes T_{B}^{g,h}\right)  ^{\dag}\mathcal{N}%
_{A^{\prime}B^{\prime}\rightarrow AB}\left(  \rho_{A^{\prime}B^{\prime}%
}\right)  (W_{A}^{g,h}\otimes T_{B}^{g,h})
\end{align}
\end{widetext}
The first equality follows because $\left\vert A\right\vert \langle
\Phi|_{A^{\prime}A}\left(  I_{A^{\prime}}\otimes M_{AB}\right)  |\Phi
\rangle_{A^{\prime}A}=\operatorname{Tr}_{A}\{M_{AB}\}$ for any operator
$M_{AB}$. The second equality follows by applying the conjugate transpose of
\eqref{eq:ricochet-prop}. The final equality follows from the covariance
property of the channel.

Thus, if the receivers finally perform the unitaries $W_{A}^{g,h}\otimes
T_{B}^{g,h}$ upon receiving $g$ and $h$ via a classical channel from the
senders, then the output of the protocol is $\mathcal{N}_{A^{\prime}B^{\prime
}\rightarrow AB}\left(  \rho_{A^{\prime}B^{\prime}}\right)  $, so that this
protocol simulates the action of the channel $\mathcal{N}$ on the state $\rho$.
\end{proof}
 
\section{Qudit system and Heisenberg--Weyl group}
\label{app:qudit}

Here we introduce some basic notations and definitions related to qudit systems. A system represented with a $d$-dimensional Hilbert space is called a qu$d$it system. Let $J_{B'}=\{|j\>_{B'}\}_{j\in \{0,\ldots,d-1\}}$ be a computational orthonormal basis of $\mc{H}_{B'}$ such that $\dim(\mc{H}_{B'})=d$. There exists a unitary operator called \textit{cyclic shift operator} $X(k)$ that acts on the orthonormal states as follows:
\begin{equation}
\forall |j\>_{B'}\in J_{B'}:\ \ X(k)|j\>=|k\oplus j\>,
\end{equation}
where $\oplus$ is a cyclic addition operator, i.e., $k\oplus j:= (k+j)\ \textnormal{mod}\ d$. There also exists another unitary operator called the \textit{phase operator} $Z(l)$ that acts on the qudit computational basis states as
\begin{equation}
\forall |j\>_{B'}\in J_{B'}:\ \ Z(l)|j\>=\exp\(\frac{\iota 2\pi lj}{d}\)|j\>. 
\end{equation}
The $d^2$ operators $\{X(k)Z(l)\}_{k,l\in\{0,\ldots,d-1\}}$ are known as the Heisenberg--Weyl operators. Let $\sigma(k,l):=X(k)Z(l)$. 
The maximally entangled state $\Phi_{R:B'}$ of qudit systems $RB'$ is given as
$
|\Phi\>_{RB'}:=\frac{1}{\sqrt{d}}\sum_{j=0}^{d-1}|j\>_R|j\>_{B'},
$
and we define 
$
|\Phi^{k,l}\>_{RB'}:=(I_R\otimes\sigma^{k,l}_{B'})|\Phi\>_{R:B'}.
$
The $d^2$ states $\{|\Phi^{k,l}\>_{RB'}\}_{k,l\in\{0,\ldots,d-1\}}$ form a complete, orthonormal basis:
\begin{align}
\<\Phi^{k_1,l_1}|\Phi^{k_2,l_2}\>&=\delta_{k_1,k_2}\delta_{l_1,l_2},\\
\sum_{k,l=0}^{d-1}|\Phi^{k,l}\>\<\Phi^{k,l}|_{RB'}&=I_{RB'}.
\end{align}

Let $\mc{W}$ be a discrete set such that $|\mc{W}|=d^2$. There exists one-to-one mapping $\{(k,l)\}_{k,l\in\{0,d-1\}}\leftrightarrow \{w\}_{w\in\mc{W}}$. For example, we can use the following map: $w=k+d\cdot l$ for $\mc{W}=\{0,\ldots,d^2-1\}$. This allows us to define $\sigma^w:=\sigma(k,l)$ and $\Phi^w_{RB'}:=\Phi^{k,l}
_{RB'}$.  Let the set of $d^2$ Heisenberg--Weyl operators be denoted as
\begin{equation}\label{eq:HW-op}
\mathbf{H}:=\{ \sigma^w\}_{w\in\mc{W}}=\{X(k)Z(l)\}_{k,l\in\{0,\ldots,d-1\}},
\end{equation}
and we refer to $\mathbf{H}$  as the Heisenberg--Weyl group. 

\end{appendix}

\bibliography{smallbib}

\end{document}